\documentclass[11pt]{article}

\usepackage[margin=1in]{geometry}
\usepackage[T1]{fontenc}
\usepackage[utf8]{inputenc}
\usepackage{lmodern}
\usepackage{microtype}
\usepackage{amsmath,amssymb,amsthm,mathtools}
\usepackage{enumitem}
\usepackage{hyperref}
\usepackage{framed}
\usepackage{tikz}
\usetikzlibrary{arrows.meta,calc,positioning,fit,decorations.pathreplacing}

\hypersetup{
  colorlinks=true,
  linkcolor=blue,
  citecolor=blue,
  urlcolor=blue,
  pdftitle={Uniformity Testing for the Monomer-Dimer Model}
}

\newtheorem{theorem}{Theorem}[section]
\newtheorem{proposition}[theorem]{Proposition}
\newtheorem{lemma}[theorem]{Lemma}
\newtheorem{remark}[theorem]{Remark}
\newtheorem{corollary}[theorem]{Corollary}

\newcommand{\Mcal}{\mathcal{M}}
\newcommand{\E}{\mathbb{E}}
\newcommand{\Pbb}{\mathbb{P}}
\newcommand{\Var}{\mathrm{Var}}
\newcommand{\Cov}{\mathrm{Cov}}
\newcommand{\Ent}{\mathrm{Ent}}
\newcommand{\TV}{d_{\mathrm{TV}}}
\newcommand{\Ber}{\mathrm{Bernoulli}}
\newcommand{\one}{\mathbf{1}}
\newcommand{\R}{\mathbb{R}}
\newcommand{\KL}{D_{\mathrm{KL}}}
\newcommand{\Lcal}{\mathcal{L}}
\newcommand{\ip}[2]{\langle #1,#2\rangle}
\newcommand{\norm}[1]{\lVert #1\rVert}
\newcommand{\F}{\mathrm{F}}

\title{Uniformity Testing for the Monomer-Dimer Model}
\author{
Joy Qiping Yang\thanks{This project was carried out during his participation in the 2026 NII International Internship Program.}\\
University of Sydney\\
\texttt{qyan6238@uni.sydney.edu.au}
\and  
Yuichi Yoshida\thanks{Supported by JSPS KAKENHI Grant Number JP24K02903.}\\
National Institute of Informatics\\
\texttt{yyoshida@nii.ac.jp}
}
\date{}

\begin{document}
\maketitle

\begin{abstract}
We study uniformity testing for the monomer-dimer model on a known graph
$G=(V,E)$.  Given samples from an unknown distribution
$D_w(M)\propto \prod_{e\in M} w_e$ over matchings $M$ of $G$, we ask whether
$D_w$ is the uniform distribution over all matchings of $G$ or is
$\varepsilon$-far from it in total variation distance.

Our main result is that, for every maximum-degree bound $\Delta$, uniformity can
be tested with $\widetilde O_\Delta(\sqrt m/\varepsilon^2)$ samples, where
$m=|E|$.  We complement this with an
information-theoretic lower bound for general graphs: for every pair of
parameters $n\le m\le n^2$, there are graphs with
$\Theta(n)$ vertices and $\Theta(m)$ edges on which every tester requires
$\Omega(\sqrt m/\varepsilon^2)$ samples.
Thus, for bounded-degree graph families with $m=\Theta(n)$, testing uniformity over the
space of matchings has
$\widetilde \Theta(\sqrt n/\varepsilon^2)$ sample complexity, in contrast with the
$\Theta(n/\varepsilon^2)$ samples needed to estimate $D_w$ in total variation
distance.
\end{abstract}

\thispagestyle{empty}
\newpage

\thispagestyle{empty}
\setcounter{tocdepth}{2}
\tableofcontents
\newpage

\setcounter{page}{1}

\section{Introduction}

Distribution testing asks how many samples are needed to decide whether an unknown distribution
has a prescribed property.  Many basic questions are now well understood over unstructured finite
domains; see, for example, the surveys~\cite{Rubinfeld12, Canonne20, Canonne22}
and the early work of Batu, Fortnow, Rubinfeld, Smith, and White~\cite{BatuFRSW00}.  In modern
applications, however, the objects being sampled are often high-dimensional.  Without structural
assumptions, even basic testing and learning tasks over domains such as $\{0,1\}^n$ require sample
complexity exponential in the dimension (see, e.g., \cite[Appendix H.2]{AcharyaDK15} and
\cite[Theorem B.1]{BhattacharyyaCY22}).

This curse of dimensionality has motivated a broad line of work on \emph{structured}
high-dimensional distribution testing.  For families such as Bayesian networks, Ising models, and
other graphical models, low-dimensional local or conditional structure can be used to detect
global distance from the target distribution, and more broadly has led to sample- and
time-efficient algorithms for testing and learning in important
regimes~\cite{CDKS20, DDK18, BhattacharyyaGGSV25,BhattacharyyaGPTV23}.
A natural next question is whether these techniques extend beyond product-like domains:
\emph{can one test natural high-dimensional distributions whose support itself is defined by hard
combinatorial constraints?}

We study this question for the \emph{monomer-dimer model}, equivalently the \emph{Gibbs family over all
matchings} of a known graph.  Let $G=(V,E)$ be an undirected graph with $m=|E|$, and let
$w=(w_e)_{e\in E}$ be positive edge weights.  The corresponding Gibbs distribution over matchings
$D_w$ is supported on the set $\Mcal(G)$ of all matchings of $G$ and is defined by
\[
D_w(M)\propto \prod_{e\in M} w_e,
\qquad M\in \Mcal(G).
\]
This is the standard weighted form of the monomer-dimer model from statistical
physics~\cite{HL72}; equivalently, it is the hard-core model on the line graph $L(G)$, whose
vertices are the edges of $G$.  The uniform distribution over all matchings corresponds to
the all-one weight vector $w_e\equiv 1$.  Thus the problem is \emph{uniformity testing} within a
classical $m$-parameter Gibbs family.

The monomer-dimer model also has a direct modeling interpretation.  Vertices represent sites or
agents, edges represent possible pairings or interactions, and a matching records a set of pairings
that can coexist because no vertex can participate in two pairs.  Edge weights then represent local
preferences or interaction intensities.  This interpretation is common in applications to large
communication and social networks, where one is interested in typical matchings and local
participation probabilities~\cite{LelargeZhou14}.

In this interpretation, the uniform distribution over all matchings is the reference distribution
with no edge-specific preference beyond the hard exclusion constraint itself.  Testing uniformity
asks whether observed matchings can be explained by the hard constraint alone, or whether they
reveal nonuniform edge weights.

A sample from $D_w$ is an edge-incidence vector in $\{0,1\}^E$, but not an arbitrary one: it
must lie in the constrained family $\Mcal(G)$.  The support can be exponentially large in $m$: for
example, if $G$ is the disjoint union of $m$ edges, then every subset of edges is a matching and
$|\Mcal(G)|=2^m$.  The coordinates are also strongly dependent
because choosing one edge deterministically rules out all adjacent edges.

The problem of testing uniformity asks whether an unknown member of this family is the uniform
distribution $U$ on $\Mcal(G)$:
\begin{framed}
    \noindent Given sample access to $D_w$ (with $w$ unknown) over a known graph $G=(V,E)$: how many
    samples are needed to distinguish between the case $D_w=U$ and the case
    $\TV(D_w,U)\ge \varepsilon$, with probability at least $2/3$?
\end{framed}
Although this reference distribution is conceptually simple, the tester is not given oracle access to
$U$ or an explicit probability table for it.  The distribution is specified only through the graph
$G$ and the feasible set $\Mcal(G)$, which may have exponential size; sampling from $U$ is a
separate combinatorial problem.

The statistical difficulty is not merely that the sample space can be exponentially large:
ordinary edge marginals are not the right local quantities.  An edge may be absent because its own
weight is small, or simply because a neighboring edge was selected.  Thus the question is whether
the matching constraint still leaves enough conditional structure to test the all-one-weight case
with fewer samples than learning the full edge-weight model.

\subsection{Our results}

Our main result is a polynomial-time uniformity tester for bounded-degree graphs with sample
complexity $\widetilde O_\Delta(\sqrt m/\varepsilon^2)$.  We also prove a matching
information-theoretic lower bound of $\Omega(\sqrt m/\varepsilon^2)$, in fact for graph families of
every density (from bounded to unbounded degrees).

The algorithmic side is the following bounded-degree tester.

\begin{theorem}[Efficient bounded-degree tester]
\label{thm:main}
There is a universal constant $C<\infty$ such that the following holds.  For every integer
$\Delta\ge 1$, there is an explicit constant
\[
F_\Delta\le (C\Delta)^{C\Delta^5}
\]
such that, for every graph $G=(V,E)$ with $\Delta(G)\le \Delta$ and $m=|E|\ge 1$, every
$0<\varepsilon\le 1$, and every $0<\delta<1$, there is a tester which distinguishes
\[
H_0:\ D_w=U
\qquad\text{from}\qquad
H_1:\ \TV(D_w,U)\ge \varepsilon
\]
with error probability at most $\delta$, using at most
\[
F_\Delta\left(
\log\frac{4m}{\delta}
+
\frac{\sqrt m\,\log(2/\delta)}{\varepsilon^2}
\right)
\]
samples from the unknown distribution.  For every fixed $\Delta$, the tester can be implemented in
polynomial time in $m$, $1/\varepsilon$, and $\log(2/\delta)$.
\end{theorem}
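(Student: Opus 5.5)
The plan is to reduce uniformity testing to detecting deviations of local conditional statistics, following the Ising-model testing paradigm of Daskalakis--Dikkala--Kamath. The natural local quantity is the \emph{conditional edge odds}. For an edge $e=uv$, let $A_e$ be the event that no edge adjacent to $e$ (other than $e$ itself) lies in $M$. Conditioned on $A_e$ and on $M\setminus\{e\}$, the edge $e$ is present with probability exactly $w_e/(1+w_e)$. Under $H_0$ this probability is $1/2$, independently of the rest of the configuration.

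\textbf{Step 1 (structural lower bound).} I would first show that $\TV(D_w,U)\ge\varepsilon$ forces a large aggregate deviation of the log-weights. I would bound $\TV$ by the Hellinger or KL distance, and then bound $\KL(D_w\|U)+\KL(U\|D_w)=\sum_e \log w_e\,(\Pbb_{D_w}[e\in M]-\Pbb_U[e\in M])$ from above. The target is an inequality of the form
\[
\varepsilon^2 \lesssim_\Delta \sum_e \min(1,|\log w_e|)^2 \quad\text{or}\quad \varepsilon\lesssim_\Delta \sum_e \min(1,|\log w_e|)\,|p_e(w)-\tfrac12|\,\Pbb[A_e].
\]
Bounded degree guarantees $\Pbb_{D_w}[A_e]\ge c_\Delta>0$ once the weights are truncated. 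Edges with very large or very small weight are handled separately, since such a weight makes some edge marginal drastically different from its value under $U$. Those marginals are estimable with $O_\Delta(\log(m/\delta))$ samples per edge simultaneously, which accounts for the additive $\log(4m/\delta)$ term in the sample bound.

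\textbf{Step 2 (statistic).} Split the sample. For each edge, form an unbiased estimator of the centered quantity $Z_e=\one[A_e]\,(\one[e\in M]-\tfrac12)$; under $H_0$ its mean is $0$. To avoid cancellation when the signs of $\log w_e$ vary, use the U-statistic $T=\sum_e \sum_{i\neq j} Z_e^{(i)}Z_e^{(j)}$, whose mean is $N(N-1)\sum_e \mu_e^2$ with $\mu_e=\E Z_e$. This gives the characteristic $\sqrt m/\varepsilon^2$ behaviour, provided $\sum_e\mu_e^2\gtrsim_\Delta \varepsilon^2$ under $H_1$. Obtaining this last bound is exactly the job of Step 1, and it should be framed in terms of the $\mu_e$ directly.

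\textbf{Step 3 (variance).} Under $H_0$ the variables $Z_e$ are conditionally centered given everything outside $e$. The cross terms $\E[Z_eZ_f]$ therefore vanish whenever $e\neq f$: condition on all edges except $e$, and if $A_e$ holds, $e$ is a fair coin. Hence $\Var_{H_0}(T)\lesssim N^2 m$. Under $H_1$, I must bound $\sum_{e,f}\mu_e\mu_f\Cov(Z_e,Z_f)$ by $O_\Delta(\sum_e\mu_e^2)$. This requires decay of correlations, or at least a bounded influence matrix, for the monomer-dimer model on bounded-degree graphs. Such control holds uniformly in $w$ on bounded-degree graphs by Heilmann--Lieb and absence of phase transition, with constants $\exp(\mathrm{poly}(\Delta))$. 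That dependence is consistent with $F_\Delta\le (C\Delta)^{C\Delta^5}$.

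Chebyshev's inequality then gives a constant-probability test, and a median over $O(\log(2/\delta))$ repetitions gives error $\delta$. Computation consists only of local counting, so it is polynomial. \textbf{Main obstacle:} Step 1, the inequality relating $\varepsilon^2$ to $\sum_e\mu_e^2$. I would attempt it through a localization or tensorization argument: approximate tensorization of KL for the hard-core model on $L(G)$, whose maximum degree is at most $2\Delta-2$. This reduces global KL to a sum of single-site conditional KL terms, each comparable to $\mu_e^2$ up to $\Delta$-dependent factors.
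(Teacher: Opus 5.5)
Your overall architecture matches the paper's. You screen with $O_\Delta(\log(m/\delta))$ samples to force bounded weights. You then use approximate tensorization of $\KL(D_w\|U)$ with respect to $U$ on the claw-free line graph to get $\sum_e\mu_e^2\gtrsim_\Delta\varepsilon^2$; your ``main obstacle'' is resolved exactly as you propose, using $h(t)\le(\log 2)t^2$ and $a_e:=D_w(A_e)\ge c_\Delta$ to pass from $a_eh(t_e)$ to $a_e^2t_e^2$. The quadratic statistic, Chebyshev and boosting follow. Your U-statistic and the missing normalization by $\sqrt{\widetilde\alpha_e}$ are harmless variants.

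\textbf{Step 3.} The gap is here. Under $H_0$, your claim $\E_U[Z_eZ_f]=0$ fails for adjacent $e,f$. In that case $A_f$ depends on $X_e$, so $Z_f$ is not a function of $X_{-e}$ and your conditioning argument does not apply. In fact $A_eA_f=1$ forces $X_e=X_f=0$, so $\E_U[Z_eZ_f]=\tfrac14U(A_e=A_f=1)>0$. This is repairable: the covariance is supported on equal-or-adjacent pairs, so its squared Frobenius norm is still $O(\Delta m)$.

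Under $H_1$, covariance control is \emph{not} uniform in $w$. Take a path with an odd number of vertices and very large equal weights. Then $D_w$ is nearly uniform over maximum matchings. On these, $Z_{e_i}$ for odd $i$ equals $\tfrac12$ if the single unmatched vertex lies to the right of $e_i$ and $0$ otherwise. Hence $\norm{\Cov_{D_w}(Z)}_{\mathrm{op}}$ grows linearly in the length. So screening must \emph{certify} bounded weights. Moreover, appealing to Heilmann--Lieb does not by itself give an $O_\Delta(1)$ operator-norm bound for the \emph{local} statistics $Z_e=A_e(X_e-\tfrac12)$, which are not single-edge indicators.

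The paper gets this bound in two steps. First, a Poincar\'e inequality for $D_w$ with $w_e\in[1/2,2]$, from the same spectral-independence and entropy-factorization input as in Step 1 but applied to $D_w$ rather than $U$. Second, local sensitivity: flipping $X_e$ changes only the at most $2\Delta-1$ coordinates $Z_f$ with $f=e$ or $f\sim e$, each by at most $1$. Together these give $\Var_{D_w}(\ip{c}{Z})\le C_\Delta'\norm{c}_2^2$. Writing $\Sigma_w:=\Cov_{D_w}(Z)$, this bounds $\mu^\top\Sigma_w\mu$. It also bounds the other term of $\Var(T)$, namely $\norm{\Sigma_w}_{\F}^2=O_\Delta(m)$, which you did not address under $H_1$.

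\textbf{Screening.} This step needs to be made precise and stated in the right direction. What is needed is ``all checks pass $\Rightarrow$ $w_e\in[1/2,2]$ and $a_e\ge c_\Delta$ for every $e$,'' not ``an extreme weight perturbs some marginal.'' That requires reference values under $U$, which are global quantities. The paper computes $\widetilde\alpha_e\approx U(A_e)=|\Mcal(G^{(e)})|/|\Mcal(G)|$ with a bounded-degree matching-counting FPTAS, where $G^{(e)}$ deletes the edges conflicting with $e$. It proves $U(A_e)\ge 1/\Delta^2$ by a counting argument. It rejects unless $|\widehat a_e-\widetilde\alpha_e|\le\tau/2$ and $|\widehat\xi_e|\le\tau/2$ for all $e$, with $\tau=1/(4\Delta^2)$. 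Acceptance gives $a_e\ge 13/(16\Delta^2)$, and then $|a_et_e|\le 5\tau/8$ forces $|t_e|<1/3$, i.e.\ $w_e\in[1/2,2]$. So the polynomial running time rests on this approximate-counting preprocessing, not only on computing statistics from the samples.
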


For bounded-degree graph families, where $m=\Theta(n)$, Theorem~\ref{thm:main} gives a
$\widetilde O(\sqrt n/\varepsilon^2)$-sample tester.  We do not optimize the explicit dependence on
$\Delta$.

We next state a complementary lower bound, in a form that covers all graph densities between
sparse and dense.

\begin{theorem}[Lower bound at every density]
\label{thm:main-lower}
There are absolute constants $c_0>0$ and $\varepsilon_0>0$ such that the following holds.  For every
pair of integers $n,m$ with $n\le m\le n^2$, there is a graph $G_{n,m}$ satisfying
\[
|V(G_{n,m})|=\Theta(n),
\qquad
|E(G_{n,m})|=\Theta(m),
\]
such that every tester, given i.i.d.\ samples from an unknown Gibbs distribution over matchings of
$G_{n,m}$, distinguishes
\[
H_0:\ D_w=U_{G_{n,m}}
\qquad\text{from}\qquad
H_1:\ \TV(D_w,U_{G_{n,m}})\ge \varepsilon
\]
with success probability at least $2/3$ must use at least
\[
c_0\frac{\sqrt m}{\varepsilon^2}
\]
samples for every $0<\varepsilon\le\varepsilon_0$; and $m = m(n)$, where the expression depends on the degree.
\end{theorem}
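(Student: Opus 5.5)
We construct $G_{n,m}$ as a disjoint union of $k=\Theta(n)$ identical gadgets, each with $\Theta(m/n)$ edges, and apply Le Cam's two-point method with a random mixture of perturbations, as in the Paninski-style uniformity lower bound. For the gadget we take a complete bipartite graph $K_{d,d}$ with $d=\lceil m/n\rceil\le n$, together with an isolated edge so that $\Mcal$ is nontrivial when $d=1$. This gives $\Theta(n)$ vertices and $kd^2=\Theta(m)$ edges. Because the graph is a disjoint union, $D_w$ is a product over gadgets and the uniform distribution $U$ is the product of the per-gadget uniform distributions.

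The alternative is built in three steps.

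\textbf{Perturbations.} In each gadget we perturb the weights of its $d^2$ edges by random signs: $w_e=1+\eta\sigma_e$ with $\sigma_e\in\{\pm1\}$ i.i.d. Equivalently, we pick a fixed mean-zero pattern so that first-order changes to the partition function cancel.

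\textbf{Distance from uniform.} We must show that $\TV(D_w,U)\ge\varepsilon$ with high probability over $\sigma$. The per-gadget log-likelihood ratio is approximately $\eta\sum_{e\in M}\sigma_e$. A uniform matching of $K_{d,d}$ has size $\Theta(d)$, so this statistic has variance of order $\eta^2 d$ per gadget and order $\eta^2 kd=\eta^2 m/d\cdot$ in total. Hence $\TV=\Omega(\min(1,\eta\sqrt{kd}))$. To get distance $\varepsilon$ we set $\eta\approx\varepsilon/\sqrt{kd}$.

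\textbf{Indistinguishability.} We bound the chi-square divergence between the $q$-sample mixture $\E_\sigma[D_{w(\sigma)}^{\otimes q}]$ and $U^{\otimes q}$. Expanding with an independent copy $\sigma'$ gives
\[
1+\chi^2=\E_{\sigma,\sigma'}\Bigl[\bigl(\textstyle\sum_M D_\sigma(M)D_{\sigma'}(M)/U(M)\bigr)^q\Bigr].
\]
The inner sum equals $1+\eta^2\sum_{e,f}\sigma_e\sigma'_f\,\Cov_U(\one_e,\one_f)+O(\eta^3)$ per gadget. Write $C=\Cov_U$ restricted to the gadget's edges. The exponent is then a bilinear form in the signs, and by Hanson–Wright / hypercontractivity we get
\[
\log(1+\chi^2)\lesssim q^2\eta^4\sum_{\text{gadgets}}\norm{C}_\F^2 .
\]
We then need $\norm{C}_\F^2=O(d^2\cdot d^{-2})$ per gadget, since edge marginals are $\Theta(1/d)$ and the covariances are small. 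This yields $\chi^2=O(1)$ unless $q\gtrsim 1/(\eta^2\sqrt{k}\cdot\mathrm{poly})$. Substituting $\eta$ should give $q=\Omega(\sqrt{kd^2}/\varepsilon^2)=\Omega(\sqrt m/\varepsilon^2)$.

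I expect the main obstacle to be balancing the distance and divergence calculations. As sketched, the distance step uses the matching-size variance $\eta^2 kd$, but the divergence should scale with the number of edges $kd^2$. To make them match, I would instead rescale the perturbation to $w_e=1+\eta d\,\sigma_e$, normalized by the edge-marginal $1/d$. Then the TV variance becomes $\eta^2d^2\cdot kd\cdot(1/d)=\eta^2 kd^2=\eta^2 m$, and the per-gadget covariance Frobenius norm is of order $d^2\cdot d^{-2}\cdot d^2$. This requires explicit control of edge-marginals and pairwise covariances of uniform matchings in $K_{d,d}$; I would obtain these via permanent/rook-polynomial identities or via correlation decay, since the root-free region of the monomer-dimer model (Heilmann–Lieb) gives uniform bounds. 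I would also need $\eta d$ to stay small, which requires $\varepsilon\le\varepsilon_0$ and $q$ small relative to $m$.

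Finally, Le Cam's lemma converts $\chi^2=O(1)$ into failure probability above $1/3$ for any tester with $q\le c_0\sqrt m/\varepsilon^2$ samples. The reverse direction, that perturbations far in distance exist with constant probability, follows from a second-moment bound on the statistic $\sum_{e\in M}\sigma_e$.
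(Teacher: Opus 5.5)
There is a genuine gap. Your architecture matches the paper's: disjoint copies of a complete bipartite graph, random sign perturbations of size $\asymp\varepsilon/\sqrt{|V|}$, and a $\chi^2$ bound for the sign mixture through the pairwise overlap $\E_U[L_\sigma L_{\sigma'}]$. However, two concrete steps are wrong and the central analytic step is missing.

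First, $k=\Theta(n)$ copies of $K_{d,d}$ with $d\approx m/n$ have $\Theta(m)$ vertices and $\Theta(m^2/n)$ edges. You need $k=\Theta(n/d)$ copies, which is what the paper does with $t\approx n/(8q)$ copies of $K_{2q,2q}$, $q\approx m/n$.

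Second, the ``obstacle'' in your last paragraph is illusory, and the proposed fix is harmful. With $\eta^2\approx\varepsilon^2/(kd)$ and $\norm{C}_\F^2=O(1)$ per gadget, your own bound $q^2\eta^4k$ stays $O(1)$ for $q\lesssim\sqrt k\,d/\varepsilon^2=\sqrt{kd^2}/\varepsilon^2$. Once the count is corrected, this is already $\sqrt m/\varepsilon^2$. Replacing $\eta$ by $\eta d$ is a reparametrization and cannot change this trade-off. The distance variance becomes $\eta^2d^2\cdot kd$, not $\eta^2kd^2$. So calibrating $\eta$ by your formula makes the perturbation $\sqrt d$ times too large, and the $\chi^2$ bound then certifies only $q\lesssim\sqrt k/\varepsilon^2=\sqrt m/(d\varepsilon^2)$.

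The substantive gap is control of everything beyond the quadratic term, uniformly in $d$ up to $n$. Both the distance step (through the normalizer $\log Z_\sigma$ and a fourth-moment bound) and the overlap step need $\log\E_U e^{sT}=\frac{s^2}{2}\Var_U(T)$ plus an error that stays $o(1)$ after multiplying by the sample size $\approx\sqrt m/\varepsilon^2$ and summing over gadgets. Your ``$+O(\eta^3)$ per gadget'' hides the $d$-dependence, and naive bounds fail in the dense regime. Since $T$ has standard deviation $\asymp\sqrt d$ per gadget, a third-order term of size $\eta^3d^{3/2}$ per gadget totals $\asymp\varepsilon d$. A fourth-order term of size $\eta^4d^2$ totals $\asymp\varepsilon^2 d/\sqrt k$. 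Both are unbounded when $k=O(1)$ and $d=\Theta(n)$. The uniform measure on matchings of $K_{d,d}$ is not a product measure, Hanson--Wright only handles the outer average over signs, and an unquantified appeal to Heilmann--Lieb does not give bounds of this form.

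The idea you are missing is the paper's checkerboard design $\sigma_{uv}B$ in place of i.i.d.\ edge signs. Swapping the two vertices of a row-pair is an automorphism of $K_{2q,2q}$ that flips that row-pair's contribution $Y_u$. Hence odd-order terms vanish, $(Y_u)$ is a martingale-difference sequence with $|Y_u|\le 4$, and $\E_U e^{sT}=\E_U\exp\bigl(\sum_u\log\cosh(sY_u)\bigr)$ exactly. Concentration of $Q=\sum_uY_u^2$ (variance $O(q)$, via the occupation-probability smoothness on $K_{a,b}$) then gives error $O(qs^4)$, a fourth-order term of order $d$ rather than $d^2$. The same structure gives the exact isotropy $\Cov_U(T_\sigma,T_\tau)=4\beta_q\ip{\sigma}{\tau}$, makes every sign pattern $\Omega(\varepsilon)$-far, and reduces the sign average to a $\cosh$ bound.

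With i.i.d.\ signs you would instead need three things: explicit bounds on third and fourth joint cumulants of edge indicators under tilted uniform matchings of $K_{d,d}$, the cancellations coming from the random signs, and sub-Gaussian tails of $T_\sigma$ for arbitrary $\sigma$. The proposal supplies none of these.
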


Together, Theorems~\ref{thm:main} and~\ref{thm:main-lower} show that $\sqrt m$ is the correct
dependence for bounded-degree uniformity testing in this family.  This is not the classical $\sqrt N$ rate for an
explicit domain of size $N$: the domain is $\Mcal(G)$, whose size can be exponential in $m$.
Rather, testing is governed by the number of edge parameters.  Compared with the
$\Theta(m/\varepsilon^2)$ samples needed to learn the full Gibbs family, this gives a quadratic
testing-versus-learning separation on bounded-degree graphs.

\subsection{Technical overview}

The central issue is that a sample is not an unconstrained vector in $\{0,1\}^E$: once an edge is
present, all incident edges are forced to be absent.  Thus edge marginals alone are a poor object to
compare.  The right local question is instead: if no edge adjacent to $e$ is selected, is $e$ itself
chosen with probability $1/2$, as it would be under $U$?  This turns the matching constraint into local
conditional tests.

\paragraph{Upper bound: from global distance to edgewise conditional statistics.}
The Gibbs family over matchings has a useful rigidity property: $D_w=U$ if and only if $w_e=1$ for
every edge $e$.  This is why it is informative to look at a single edge after the states of all
other edges have been fixed.  For a matching $M$, write
$X_e(M)=\one\{e\in M\}$ and
$A_e(M)=\one\{\text{no edge adjacent to }e\text{ is used in }M\}$.  When $M$ is sampled from
$D_w$ or $U$, we suppress the argument and regard $X_e$ and $A_e$ as random variables.
We also write $\xi_e=A_e(2X_e-1)$.
Conditional on all other coordinates, there are only two cases.  If $A_e=0$, then $X_e=0$ is
forced; if $A_e=1$, then
\[
  D_w(X_e=1\mid X_{-e})=\frac{w_e}{1+w_e},
  \qquad
  U(X_e=1\mid X_{-e})=\frac12,
\]
where $X_{-e} \in \{0, 1\}^{E \setminus \{e\}}$ denotes any fixed configuration other than $e$ (that is feasible for $e$ to be matched, since $A_e = 1$ here).
Writing $t_e=(w_e-1)/(w_e+1)$ and $a_e=D_w(A_e)$, $\xi_e$ has expectation $a_e t_e$ under
$D_w$ and has expectation zero under $U$.
Thus the problem becomes to detect whether the expectations of these statistics are collectively
nonzero.

The availability probabilities $a_e$ are themselves distribution dependent: changing weights near
$e$ can change how often $e$ is available, even if the conditional bias $t_e$ is small.  The
\emph{screening step} estimates $a_e$ and $a_et_e$ from samples.  To calibrate this step, we
compute under $U$
\[
  \alpha_e:=U(A_e)=\frac{|\Mcal(G^{(e)})|}{|\Mcal(G)|}
\]
up to constant relative accuracy using an FPTAS for counting matchings in bounded-degree graphs,
where $G^{(e)}$ is obtained by deleting the edges conflicting with $e$~\cite{BGKNT07}.\footnote{Note that one can also employ the Jerrum-Sinclair algorithm \cite{jerrum1996markov} to estimate the number of matchings, which can bypass the degree restriction here. But the main technical hurdle, which is Proposition \ref{prop:imported-tools}, on the bounded-degree restriction for testing the monomer-dimer model remains.}  Thus the
tester only needs the probabilities $\alpha_e$ under $U$, obtained from matching counts, not an
explicit table or sampler for $U$.  Then a small initial block of samples checks whether the
empirical availability $\widehat a_e$ is close to this value under $U$ and whether the empirical
average
$\widehat{\xi}_e$ of $\xi_e$ is close to zero for every edge.
If either check fails, we have
already detected evidence against uniformity and reject.  If the checks pass, then the analysis proves
more than empirical closeness: it forces $a_e\asymp_\Delta \alpha_e$, $a_e t_e$ small, and in
particular $w_e\in[1/2,2]$ for every edge.

After screening, for each edge we use the normalized statistic
\[
  \psi_e(M):=\frac{\xi_e(M)}{\sqrt{\widetilde\alpha_e}},
  \qquad
  \Psi(M)=(\psi_e(M))_{e\in E},
\]
where $\widetilde\alpha_e$ is the computed approximation to $\alpha_e$.  The normalization makes
the coordinates comparable across edges.  The numerator $\xi_e$ is conditionally unbiased
under $U$, so $\E_U[\Psi]=0$ regardless of the normalizing constants.

The next step is to show that $\E_{D_w}[\Psi]$ is large whenever $D_w$ is far from $U$.  This
follows the graphical-model testing strategy of reducing global distance to the mean of statistics
defined from low-dimensional conditional distributions~\cite{CDKS20,DDK18}.
Approximate tensorization of entropy on the line graph lets us upper bound $\KL(D_w\|U)$ by a
sum of one-edge conditional KL terms.  For edge $e$, this conditional term is zero when
$A_e=0$, because then both $D_w$ and $U$ force $X_e=0$; when $A_e=1$, it is the KL between
$\Ber((1+t_e)/2)$ and $\Ber(1/2)$.  Using spectral-independence and entropy-factorization
results for hard-core measures on claw-free graphs~\cite{CG23,CLV23}, this gives
\[
  \KL(D_w\|U)
  \le C_\Delta^{\mathrm{ent}}
       \sum_{e\in E} a_e\,
       \KL\!\left(\Ber\!\left(\frac{1+t_e}{2}\right)\middle\|\Ber\!\left(\frac12\right)\right).
\]
After the screening step accepts, the Bernoulli KL is quadratic in $t_e$, and the comparison
$a_e\asymp_\Delta\widetilde\alpha_e$ turns this local KL bound into a lower bound on
$\|\E_{D_w}[\Psi]\|_2^2$.  Therefore total variation distance at least $\varepsilon$ forces
\[
  \nu:=\E_{D_w}[\Psi],
  \qquad
  \|\nu\|_2^2\gtrsim_\Delta \varepsilon^2.
\]
Thus we have reduced the problem of testing uniformity to detecting an $m$-dimensional mean vector
whose squared norm is at least a constant multiple of $\varepsilon^2$.

The remaining question is why this mean can be tested with only $\sqrt m/\varepsilon^2$ samples
rather than $m/\varepsilon^2$.  We use a \emph{quadratic statistic} built from $\Psi$, together with a
covariance bound tailored to matchings.  Split the samples into two independent blocks and compute
\[
  S=\left\langle \overline\Psi^{(1)},\overline\Psi^{(2)}\right\rangle .
\]
The independence of the two blocks makes $S$ an unbiased estimator of $\|\nu\|_2^2$; using
$\|\overline\Psi\|_2^2$ from a single block would require subtracting a noise term and would create
another bias-control problem.  To bound the variance, we use the Poincar\'e inequality for the
hard-core measure on the line graph, derived from the same spectral-independence and
entropy-factorization results~\cite{CG23,CLV23}.
The preliminary checks force $w_e\in[1/2,2]$, so the Poincar\'e constant is bounded in terms of the
degree rather than the unknown edge weights $w_e$.  The Poincar\'e bound is then combined with a
purely local sensitivity estimate: changing one edge variable can affect only the coordinates
associated with that edge and its incident neighbors, and each such coordinate has magnitude
$O_\Delta(1)$.  Hence the covariance matrix of $\Psi$ has operator norm $O_\Delta(1)$ and
squared Frobenius norm $O_\Delta(m)$.  This gives
\[
  \Var(S)\le O_\Delta\!\left(\frac{m}{s^2}+\frac{\|\nu\|_2^2}{s}\right),
\]
so $s\asymp_\Delta \sqrt m/\varepsilon^2$ suffices.  Repetition gives the stated confidence
parameter.  The resulting tester is therefore not learning the whole edge-weight vector: it
estimates $\|\E_{D_w}[\Psi]\|_2^2$.

\paragraph{Lower bound: hiding many perturbations inside a matching constraint.}
The lower bound shows that the $\sqrt m$ dependence is not an artifact of the tester.  For
bounded-degree graphs, one can already see the right order from disjoint edges, where the model
contains product Bernoulli distributions.  The main point of the lower bound is that the same lower
bound persists in dense graphs, where the matching constraint creates strong dependencies and a
sample contains only $O(n)$ occupied edges even though the graph may have $\Theta(n^2)$ edges.

The construction starts from one dense component $H_q=K_{2q,2q}$.  The left and right vertices are
partitioned into pairs, and each block between a left pair and a right pair is assigned the
\emph{checkerboard matrix}
\[
  B=\begin{pmatrix}1&-1\\ -1&1\end{pmatrix}.
\]
For a sign matrix $\sigma\in\{\pm1\}^{q\times q}$, define
$\mathsf{H}_\sigma\in\R^{E(H_q)}$ by putting the four entries of $\sigma_{uv}B$ on the four edges
in the block corresponding to $(u,v)$.  Thus $\mathsf{H}_\sigma$ is indexed by the edges of
$K_{2q,2q}$, while $\sigma$ is indexed by the $q^2$ blocks.  Perturb the all-one weights by
$w_e^{(\sigma)}=\exp(\gamma \mathsf{H}_\sigma(e))$.  This is a dense analogue of Paninski-type balanced
perturbations~\cite{Paninski08}: there are $q^2$ hidden signs, but every row and every column of $\mathsf{H}_\sigma$ sums to
zero.

The zero row and column sums remove the first-order effects that a tester could otherwise detect
from simple vertex or size statistics.  They also prevent row sums, column sums, and matching-size
fluctuations from contributing large terms in the second-moment calculation.  With these
terms absent, the \emph{checkerboard scores}
\[
  T_\sigma(M)=\sum_{e\in M} \mathsf{H}_\sigma(e)
\]
are centered and symmetric under the uniform distribution over all matchings.  These scores also
yield an isotropic covariance formula:
\[
  \Cov_U(T_\sigma,T_\tau)=4\beta_q\sum_{u,v}\sigma_{uv}\tau_{uv},
  \qquad
  \beta_q\asymp \frac1q.
\]
Setting $\tau=\sigma$ gives $\Var_U(T_\sigma)=4\beta_q q^2=\Theta(q)$.  Thus a component contains
$q^2$ random signs, but the variance of each score $T_\sigma$ matches the typical size of a
matching in $K_{2q,2q}$.

To build no-instances at total variation distance $\Omega(\varepsilon)$, take $t$
disjoint copies of $K_{2q,2q}$ and choose independent checkerboard signs
$\boldsymbol{\sigma}=(\sigma^{(1)},\dots,\sigma^{(t)})$ on the copies.  The sum of the component scores
\[
  T_{\boldsymbol{\sigma}}=\sum_{r=1}^t T_{\sigma^{(r)}}^{(r)}
\]
then has variance $\Theta(tq)$.  The log-likelihood ratio for the corresponding no-instance has
linear term $\gamma T_{\boldsymbol{\sigma}}$, whose standard deviation is therefore
$\Theta(\gamma\sqrt{tq})$.  We set
\[
  \gamma\asymp \frac{\varepsilon}{\sqrt{tq}}.
\]
With this choice of $\gamma$, a typical log-likelihood ratio fluctuates by order $\varepsilon$,
enough to make each fixed no-instance far from $U$.  Proving this rigorously requires a local
quadratic expansion of $\log \E_U e^{sT_\sigma}$ on $K_{2q,2q}$.  Since the distribution over
matchings is not a product distribution, this expansion is not automatic.  The proof exposes the
matching one row-pair at a time.  The checkerboard symmetry makes the increments a bounded
\emph{martingale-difference sequence}; then swapping the two vertices in each row pair and averaging
over these swaps turns the exponential moment into a product of $\cosh$ terms.

For indistinguishability, choose the signs uniformly at random and bound the $\chi^2$ divergence
of the $k$-sample mixture from the $k$-sample uniform distribution.  Write
$L_{\boldsymbol{\sigma}}=dD_{\boldsymbol{\sigma}}/dU$ for the likelihood ratio of one distribution
indexed by a sign collection, and let $\overline D_k$ be the $k$-sample mixture over random signs with likelihood ratio
$\overline L_k=d\overline D_k/dU^{\otimes k}$.  Then
$\E_{U^{\otimes k}}[\overline L_k]=1$, and hence
\[
  \chi^2(\overline D_k,U^{\otimes k})
  =\E_{U^{\otimes k}}[\overline L_k^2]-1.
\]
Expanding the square of the mixture likelihood ratio gives
\[
  \E_{U^{\otimes k}}[\overline L_k^2]
  =
  \E_{\boldsymbol{\sigma},\boldsymbol{\tau}}
  \left(\E_U[L_{\boldsymbol{\sigma}}L_{\boldsymbol{\tau}}]\right)^k.
\]
Thus the $\chi^2$ distance is controlled by the average value of this expression for two
independently chosen sign patterns.
For such $\boldsymbol{\sigma},\boldsymbol{\tau}$,
\[
  \log \E_U[L_{\boldsymbol{\sigma}} L_{\boldsymbol{\tau}}]
  =4\beta_q\gamma^2
    \sum_{r=1}^t\sum_{u,v}\sigma^{(r)}_{uv}\tau^{(r)}_{uv}
    +O(tq\gamma^4).
\]
The sum is a sum of $tq^2$ independent uniform $\{\pm1\}$ variables, so it has mean $0$ and
variance $tq^2$.  In the $k$th power above, the logarithm in the display is multiplied by $k$.
A standard bound for
sums of independent signs then gives
\[
  \E_{U^{\otimes k}}[\overline L_k^2]
  \le
  \exp\!\left(O(k^2\beta_q^2\gamma^4\,tq^2+k\,tq\gamma^4)\right).
\]
Here the first term comes from the random sum and the second from the error term in the display.
Substituting $\beta_q\asymp 1/q$ and
$\gamma^2\asymp \varepsilon^2/(tq)$ shows that the mixture second moment remains bounded as long as
\[
  k\lesssim \frac{q\sqrt t}{\varepsilon^2}
  =\Theta\!\left(\frac{\sqrt{|E(G_{t,q})|}}{\varepsilon^2}\right),
  \qquad |E(G_{t,q})|=4tq^2.
\]
No tester using fewer samples can reliably distinguish $U$ from this mixture, and hence cannot
solve the composite testing problem.

Finally, varying $q$ and $t$ interpolates between sparse and dense graphs.  For every pair of target
vertex and edge counts $n\le m\le n^2$, the construction yields graphs with $\Theta(n)$ vertices and
$\Theta(m)$ edges while preserving the same $\Omega(\sqrt m/\varepsilon^2)$ lower bound.  This
shows that the lower bound is not confined to sparse graphs: the same $\sqrt m$ lower bound remains
as density increases, and in the bounded-degree regime it matches the upper-bound dependence on
$m=|E|$ up to degree-dependent and logarithmic factors.

\subsection{Related work}

\paragraph{Testing high-dimensional structured distributions.}
Distribution testing over arbitrary high-dimensional domains suffers from the usual curse of
dimensionality, so a central line of work asks when structural assumptions make identity and
closeness testing possible with sample complexity polynomial in the ambient dimension rather than
in the ambient domain size, and in some cases even sublinear in the dimension.  A systematic
starting point in this direction is the work of Canonne, Diakonikolas,
Kane, and Stewart on testing Bayesian networks~\cite{CDKS20}, where the directed graphical
structure decomposes the testing problem into local conditional pieces, yielding polynomial sample
bounds for constant in-degree Bayesian networks. Concurrent work \cite{DaskalakisP17} and subsequent works \cite{canonne2026entropy,CY24} study this class of problems with similar localization ideas, but different sets of techniques. Daskalakis, Dikkala, and Kamath initiated the
analogous study for Markov random fields through independence and goodness-of-fit testing for
Ising models~\cite{DDK18}, showing that graphical structure can avoid the exponential sample
complexity of unstructured multivariate distribution testing.  Our work belongs to the same broad
program, but with a different source of structure: the domain is the constrained family
$\Mcal(G)\subseteq\{0,1\}^E$ rather than a product space, and the relevant local tests are
conditioned on which edges can be selected under the matching constraint.

\paragraph{Spin systems and graphical models.}
Identity testing for spin systems and undirected graphical models has also been studied directly.
For Ising models, Daskalakis, Dikkala, and Kamath give positive results for independence and
goodness-of-fit testing~\cite{DDK18}, while Bez\'akov\'a, Blanca, Chen, \v{S}tefankovi\v{c}, and
Vigoda prove hardness results for
antiferromagnetic Ising models and proper colorings~\cite{BBCSV20}; related hardness results for
restricted Boltzmann machines and Potts models appear in~\cite{BCSV21}.  These works place
testing in the same landscape as structure learning and approximate counting for spin systems.
The monomer-dimer model considered here is another natural graphical model, but its variables live
on the edges of $G$, with hard mutual-exclusion constraints at vertices.

\paragraph{Monomer-dimer counting and sampling.}
The monomer-dimer model is a classical model from statistical physics~\cite{HL72} and has a long
algorithmic history because matchings are a central combinatorial structure.  Markov-chain methods
for matchings and permanents go back to Jerrum and Sinclair~\cite{JS89}; for counting all
matchings in bounded-degree graphs, Bayati, Gamarnik, Katz, Nair, and Tetali gave a deterministic
FPTAS based on correlation decay~\cite{BGKNT07}.  Lelarge and Zhou studied sublinear-time
approximation of monomer-dimer partition functions and matching statistics on bounded-degree
graphs, motivated in part by massive network settings~\cite{LelargeZhou14}.  More recently, Chen,
Liu, and Vigoda developed
entropy-factorization and spectral-independence tools that imply, among other applications,
$O(m\log n)$ mixing for Glauber dynamics for the monomer-dimer model on bounded-degree graphs
with $m$ edges~\cite{CLV23}.  We use this literature both as motivation and as a source of proof
tools: bounded-degree counting gives the estimates of the probabilities $U(A_e)$ used in the
preprocessing step, while entropy and Poincar\'e inequalities supply, respectively, the lower bound
on $\|\E_{D_w}[\Psi]\|_2$ and the covariance bound used by the quadratic statistic.  The question
here is different from sampling or counting, however: given samples from the Gibbs distribution
induced by an unknown edge-weight vector $w$, we ask whether this distribution is exactly the
uniform distribution over all matchings or is far from it in total variation.

\subsection{Organization}

Section~\ref{sec:preliminaries} sets up the Gibbs family over matchings, proves the rigidity of the
all-one point, records the conditional distribution of one edge given all other edges, and states the
entropy and
Poincar\'e tools used later.  Section~\ref{sec:upper-bound} proves the upper bound.  It describes the
preprocessing step that estimates the probabilities $U(A_e)$, the screening step that reduces to
bounded edge weights, the vector $\Psi$, the covariance estimate, and the final quadratic
tester.
Section~\ref{sec:lower-bound} proves the lower bound via checkerboard perturbations on disjoint
copies of complete bipartite graphs.  Appendix~\ref{sec:learning} gives an information-theoretic
proper learning bound for the entire Gibbs family over matchings, to contrast the sample complexity of
testing with that of learning.

\section{Preliminaries}
\label{sec:preliminaries}

\subsection{Setup}

Let $G=(V,E)$ be a finite graph, and let $m:=|E|$.  Denote by
\[
\Mcal(G):=\{M\subseteq E:\ M \text{ is a matching}\}
\]
the family of all matchings of $G$, including the empty matching.
For edge weights $w=(w_e)_{e\in E}\in \R_{>0}^E$, define the Gibbs distribution
\[
D_w(M):=\frac{1}{Z(w)}\prod_{e\in M} w_e,
\qquad
Z(w):=\sum_{M\in \Mcal(G)} \prod_{e\in M} w_e.
\]
Let $U$ denote the uniform distribution on $\Mcal(G)$.

We study the promise problem
\[
H_0:\ D_w=U
\qquad\text{versus}\qquad
H_1:\ \TV(D_w,U)\ge \varepsilon.
\]

\subsection{Uniformity is rigid in the all-matchings model}

\begin{proposition}[Uniformity characterization]
\label{prop:uniformity-characterization}
For $w\in \R_{>0}^E$,
\[
D_w=U
\quad\Longleftrightarrow\quad
w_e=1 \ \text{for all } e\in E.
\]
\end{proposition}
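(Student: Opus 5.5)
The reverse direction is immediate. If $w_e=1$ for every $e$, then $\prod_{e\in M}w_e=1$ for every matching $M$. Hence $Z(w)=|\Mcal(G)|$ and $D_w(M)=1/|\Mcal(G)|=U(M)$.

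For the forward direction, assume $D_w=U$ and compare probabilities of two specific matchings for each edge $e\in E$. The empty matching $\emptyset$ and the singleton $\{e\}$ both lie in $\Mcal(G)$, since a single edge is always a matching. Under $D_w$,
\[
\frac{D_w(\{e\})}{D_w(\emptyset)}=\frac{w_e/Z(w)}{1/Z(w)}=w_e,
\]
because the empty product equals $1$. Under $U$ the same ratio is $1$, since every matching has equal mass. As $D_w=U$ pointwise on $\Mcal(G)$, and $D_w(\emptyset)>0$ because all weights are positive, the two ratios agree, so $w_e=1$. This holds for every $e$, which proves the claim.

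There is no genuine obstacle here. The only point that needs care is that the ratio is well defined, which is guaranteed by $Z(w)>0$ and $D_w(\emptyset)=1/Z(w)>0$. It is also worth noting that this argument uses only the singletons and the empty matching. These always lie in the support, so no feature of the graph structure is needed. This rigidity is exactly what the later local statistics $\xi_e$ exploit: the event $A_e=1$ includes configurations that differ only in the status of $e$.
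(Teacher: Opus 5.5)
Your proof is correct and follows the paper's argument: the paper also compares the empty matching (weight $1$) with each singleton $\{e\}$ (weight $w_e$) to force $w_e=1$, and treats the converse as immediate. Your ratio computation, with the check that $D_w(\varnothing)=1/Z(w)>0$, just writes that same step out explicitly.
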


\begin{proof}
The empty matching $\varnothing$ has weight $1$, whereas the singleton matching $\{e\}$ has
weight $w_e$.  If $D_w$ is uniform over all matchings, then $\varnothing$ and $\{e\}$ must have the
same probability for every edge $e$, hence $w_e=1$ for every $e$.  The converse is immediate.
\end{proof}

Thus testing uniformity in this family is the same as testing whether $w\equiv 1$.

\subsection{Conditioning on all other edges}

Fix an edge $e\in E$.  For a matching $M$, define
\[
X_e(M):=\one\{e\in M\},
\]
and also
\[
A_e(M):=\one\{\text{no edge adjacent to $e$ belongs to } M\setminus\{e\}\}.
\]
So $A_e(M)=1$ means that the edge $e$ is \emph{available} given the rest of the configuration.
We also write
\[
\xi_e(M):=A_e(M)(2X_e(M)-1).
\]
Thus $\xi_e(M)=1$ if $e\in M$, $\xi_e(M)=-1$ if $e$ is available but not in $M$, and
$\xi_e(M)=0$ if $e$ is not available.
When the underlying matching is random, we suppress the argument and write $X_e$, $A_e$, and
$\xi_e$ for the corresponding random variables.

\begin{lemma}[Conditional distribution of one edge]
\label{lem:one-edge-conditional-distribution}
For every edge $e\in E$ and every configuration of the other edge indicators,
\[
D_w(X_e=1\mid X_{-e}) = A_e\,\frac{w_e}{1+w_e},
\qquad
U(X_e=1\mid X_{-e}) = A_e\,\frac12.
\]
Equivalently, with
\[
p_e:=\frac{w_e}{1+w_e},
\qquad
t_e:=2p_e-1=\frac{w_e-1}{w_e+1},
\]
the conditional distribution on the event $A_e=1$ is
\[
\Ber(p_e)
\quad\text{under } D_w,
\qquad
\Ber(1/2)
\quad\text{under } U.
\]
\end{lemma}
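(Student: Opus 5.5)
The plan is a direct computation from the definition of $D_w$: fix the configuration of all edges other than $e$, and compare the only two full configurations compatible with it. Fix $e\in E$ and a value $x_{-e}\in\{0,1\}^{E\setminus\{e\}}$ of positive probability under $D_w$. Since every matching has positive weight, this is the same as requiring that $M_0:=\{f\neq e: x_f=1\}$ be a matching. The conditional law of $X_e$ is then determined by the two candidate configurations $M_0$ and $M_0\cup\{e\}$. Note that $A_e$ is a function of $x_{-e}$ alone, because it only inspects $M\setminus\{e\}$. So on this conditioning event $A_e$ is a constant, and we split into the two cases $A_e=0$ and $A_e=1$.

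In the case $A_e=0$, some edge of $M_0$ shares an endpoint with $e$. Then $M_0\cup\{e\}$ is not a matching and receives probability zero under both $D_w$ and $U$. Hence $X_e=0$ almost surely given $x_{-e}$, and both conditional probabilities equal $0=A_e\cdot(\cdot)$.

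In the case $A_e=1$, both $M_0$ and $M_0\cup\{e\}$ are matchings. Their $D_w$-weights are $\prod_{f\in M_0}w_f$ and $w_e\prod_{f\in M_0}w_f$. The partition function $Z(w)$ and the common product cancel in the ratio, giving
\[
D_w(X_e=1\mid X_{-e}=x_{-e})=\frac{w_e}{1+w_e}=p_e .
\]
The uniform case is the special case $w\equiv 1$, which gives $1/2$. Combining the two cases yields the formula $A_e\,w_e/(1+w_e)$. The equivalent Bernoulli formulation follows immediately, and $t_e=2p_e-1=(w_e-1)/(w_e+1)$ is just algebra.

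There is no real obstacle here. The only point needing care is to note explicitly that $A_e$ is measurable with respect to $X_{-e}$ (it depends only on $M\setminus\{e\}$), so that writing $A_e$ inside the conditional probability is legitimate. One should also restrict to conditioning values $x_{-e}$ of positive probability, which are the same under $D_w$ and $U$ because both have full support on $\Mcal(G)$.
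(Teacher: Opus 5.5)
Your proof is correct and follows essentially the same route as the paper's. You condition on $X_{-e}$, note that $A_e=0$ forces $X_e=0$, and when $A_e=1$ compare the two feasible completions, whose Gibbs weights are in ratio $w_e:1$ (and $1:1$ under $U$). Your remarks that $A_e$ is a function of $X_{-e}$ and that the positive-probability conditioning values are the same under $D_w$ and $U$ spell out details the paper leaves implicit.
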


\begin{proof}
Once all other edges are fixed, there are only two possibilities for $X_e$.
Either some adjacent edge is already present, in which case $X_e=0$ is forced, or else $e$ may be
present or absent.  In the latter case the Gibbs weights of the two feasible options differ by the
factor $w_e:1$, while under $U$ the two feasible options are equally likely.
\end{proof}

\subsection{Entropy and Poincare tools on the line graph}

From now on we assume that the maximum degree of $G$ is bounded by a fixed constant:
\[
\Delta(G)\le \Delta.
\]
For a probability measure $\pi$ and density $f\ge 0$, write
\[
\Ent_\pi(f):=\E_\pi[f\log f]-\E_\pi[f]\log\E_\pi[f].
\]
For an edge $e$, let $\Ent_e(f)$ and $\Var_e(g)$ denote the conditional entropy and conditional
variance in the coordinate $X_e$, with all other coordinates held fixed.

\begin{remark}[Line-graph representation]
\label{rem:line-graph}
Let $\Lcal(G)$ be the line graph of $G$.  A subset $M\subseteq E$ is a matching of $G$ if and only
if it is an independent set of $\Lcal(G)$.  Hence $D_w$ is exactly the hard-core measure on
$\Lcal(G)$ with vertex weights $\lambda_e=w_e$.  Moreover $\Lcal(G)$ is claw-free and has
maximum degree at most $2\Delta-2$.
\end{remark}

The next proposition states the imported inequalities with an explicit dependence on the maximum
degree.  The constants are intentionally not optimized; they are obtained by inserting the explicit
spectral-independence bound for claw-free hard-core models into the explicit entropy-factorization
constant of Chen, Liu, and Vigoda.

For the rest of this subsection, set
\begin{align}
\label{eq:explicit-imported-constants}
\widehat\Delta&:=\max\{3,2\Delta-2\},\\
\eta_\Delta&:=4\widehat\Delta+2,\\
b_\Delta^{\mathrm{hc}}&:=\frac{1}{3(2\widehat\Delta^2+1)},\\
C_\Delta^{\mathrm{ent}}
&:=
\frac{18\log(1/b_\Delta^{\mathrm{hc}})}{(b_\Delta^{\mathrm{hc}})^4}
\left(\frac{24\widehat\Delta}{(b_\Delta^{\mathrm{hc}})^2}\right)^{
4\eta_\Delta/(b_\Delta^{\mathrm{hc}})^2+1}.
\label{eq:cent-definition}
\end{align}
In particular, for a universal constant $C<\infty$,
\begin{equation}
\label{eq:cent-asymptotic}
C_\Delta^{\mathrm{ent}}\le (C\Delta)^{C\Delta^5}
\qquad(\Delta\ge2).
\end{equation}

\begin{proposition}[Explicit imported entropy and Poincare inequalities]
\label{prop:imported-tools}
Fix $\Delta\in\mathbb N$.  With $C_\Delta^{\mathrm{ent}}$ defined in
\eqref{eq:cent-definition}, the following holds for every graph $G=(V,E)$ with
$\Delta(G)\le \Delta$.

\begin{enumerate}
\item The uniform distribution $U$ on $\Mcal(G)$ satisfies the approximate tensorization of entropy:
for every density $f=d\nu/dU$,
  \[
  \Ent_U(f)
  \le
  C_\Delta^{\mathrm{ent}}
  \sum_{e\in E} \E_U\!\left[\Ent_e(f)\right].
  \]
Equivalently, for every Gibbs distribution $D_w$,
\[
\KL(D_w\|U)
\le
C_\Delta^{\mathrm{ent}}
\sum_{e\in E}
\E_{D_w}\!\left[
\KL\!\bigl(D_w(X_e\mid X_{-e})\,\|\,U(X_e\mid X_{-e})\bigr)
\right].
\]

\item If, in addition,
\[
\frac12\le w_e\le 2
\qquad\text{for all } e\in E,
\]
then $D_w$ satisfies the Poincare inequality
\[
\Var_{D_w}(g)
\le
C_\Delta^{\mathrm{ent}}
\sum_{e\in E}
\E_{D_w}\!\left[\Var_e(g)\right]
\]
for every $g:\Mcal(G)\to\R$.
\end{enumerate}
\end{proposition}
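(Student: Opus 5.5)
The plan is to view everything through Remark~\ref{rem:line-graph}: $D_w$ is the hard-core measure on the claw-free line graph $\Lcal(G)$, whose maximum degree is at most $2\Delta-2\le\widehat\Delta$. The proof then has three steps. First, use the known spectral-independence bound for hard-core models on claw-free graphs. Second, feed it into the Chen--Liu--Vigoda entropy-factorization theorem to get approximate tensorization with an explicit constant. Third, deduce the Poincar\'e inequality by linearization.

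\textbf{Step 1: spectral independence.} By the Chen--Gu result~\cite{CG23}, hard-core measures on claw-free graphs of maximum degree $\widehat\Delta$ are $\eta$-spectrally independent for every fugacity vector in a bounded range, with $\eta$ linear in $\widehat\Delta$. I would use the form $\eta_\Delta=4\widehat\Delta+2$, valid uniformly over all pinnings. This covers $\lambda\equiv1$, and it also covers $\lambda_e\in[1/2,2]$ after absorbing the factor $2$ into the constant. The fact that makes this usable is that claw-free hard-core models have no uniqueness threshold, so the bound holds for all fugacities in a compact range.

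\textbf{Step 2: approximate tensorization.} In addition to spectral independence, CLV~\cite{CLV23} need a marginal boundedness parameter $b$. For every pinning and every vertex of $\Lcal(G)$, each feasible value must have conditional probability at least $b$. Conditioning on $X_{-e}$ with $A_e=1$ gives probability $\lambda_e/(1+\lambda_e)\ge 1/3$ or $1/(1+\lambda_e)\ge1/3$. Marginal probabilities under partial pinnings also need a lower bound. I would obtain one by a ratio argument: adding $e$ to a matching removes at most $2\widehat\Delta$ neighbours, which should give the lower bound $b^{\mathrm{hc}}_\Delta=1/(3(2\widehat\Delta^2+1))$. Inserting $b$ and $\eta$ into CLV's explicit approximate-tensorization constant,
\[
\frac{18\log(1/b)}{b^4}\left(\frac{24\widehat\Delta}{b^2}\right)^{4\eta/b^2+1},
\]
yields exactly $C^{\mathrm{ent}}_\Delta$. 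Then $\log C^{\mathrm{ent}}_\Delta=O(\eta/b^2\cdot\log\Delta)=O(\Delta^5\log\Delta)$, which gives \eqref{eq:cent-asymptotic}.

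The KL reformulation in part~1 follows by taking $f=dD_w/dU$. The conditional density $f$ given $X_{-e}$ is the ratio $D_w(\cdot\mid X_{-e})/U(\cdot\mid X_{-e})$. Hence
\[
\E_U[\Ent_e(f)]=\E_{D_w}\bigl[\KL\bigl(D_w(X_e\mid X_{-e})\,\|\,U(X_e\mid X_{-e})\bigr)\bigr].
\]

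\textbf{Step 3: Poincar\'e inequality.} Part~2 applies the same tensorization to $D_w$ itself, which is valid since $w_e\in[1/2,2]$. I would substitute $f=1+\theta g$ and let $\theta\to0$. Both sides expand as $\tfrac{\theta^2}{2}$ times the corresponding variances, which gives the stated Poincar\'e inequality with the same constant.

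The main obstacle is Step 2. It requires matching the hypotheses and normalization of CLV's theorem precisely, including the fact that $b$ must hold for all pinnings. Showing that the resulting constant is exactly \eqref{eq:cent-definition} means tracking their constants carefully. The first thing I would do is re-derive the marginal lower bound and check that CLV's constant takes exactly the form above.
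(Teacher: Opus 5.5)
Your proof follows the paper's: $\Lcal(G)$ is claw-free with maximum degree at most $\widehat\Delta$; Chen--Gu gives $2(1+\widehat\Delta\lambda_{\max})\le\eta_\Delta$ spectral independence under every pinning; a marginal bound gives $b^{\mathrm{hc}}_\Delta$; and CLV's Theorem~2.9 then yields exactly $C^{\mathrm{ent}}_\Delta$. As in the paper, you apply this with $\lambda\equiv1$ for part~1 (with $f=dD_w/dU$ for the KL form) and to $D_w$ itself for part~2.

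The one genuine difference is the Poincar\'e step. You linearize the entropy inequality at $f=1+\theta g$ and let $\theta\to0$. The paper instead cites CLV's Fact~3.5: approximate tensorization gives spectral gap at least $1/(C^{\mathrm{ent}}_\Delta m)$ for heat-bath Glauber dynamics, whose Dirichlet form is $\frac1m\sum_e\E_{D_w}[\Var_e(g)]$. Your version is self-contained and avoids the $1/m$ normalization; both give the same constant.

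Two points need tightening.
\begin{enumerate}
\item The marginal bound does not come from ``removing at most $2\widehat\Delta$ neighbours.'' Read literally, that gives an exponential count of preimages. The relevant fact is that at most two neighbours of $e$ can be occupied, in $\Lcal(G)$ or in any pinned induced subgraph, because a vertex neighbourhood in a claw-free graph contains no independent set of size three. So the weight of feasible configurations on $N(e)$ is at most $1+\widehat\Delta\lambda_{\max}+\binom{\widehat\Delta}{2}\lambda_{\max}^2\le 2\widehat\Delta^2+1$ times the all-zero weight. That is exactly where $b^{\mathrm{hc}}_\Delta$ comes from.
\item CLV's theorem has hypotheses you don't check: total connectedness of the support under pinnings (immediate for hard-core, since you can delete occupied vertices one at a time), and a minimum number of vertices. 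The paper handles the size requirement by padding with isolated vertices of weight in $[1/2,2]$. This preserves claw-freeness and the degree bound, and leaves the constant for the original coordinates unchanged.
\end{enumerate}
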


\begin{proof}
By Remark~\ref{rem:line-graph}, every Gibbs distribution over matchings of $G$ is a hard-core measure on
$\Lcal(G)$, and $\Lcal(G)$ is a claw-free graph of maximum degree at most $\widehat\Delta$.

We record the compatibility checks needed for the two imported results.  Consider any hard-core
measure $\mu$ on a claw-free graph $H$ of maximum degree at most $\widehat\Delta$, with vertex
weights $\lambda_v\in[1/2,2]$; write $\lambda_{\min}$ and $\lambda_{\max}$ for their minimum
and maximum.
\begin{enumerate}[label=(\alph*)]
\item \emph{Pinnings.}
Conditioning a hard-core measure on a feasible pinning gives another hard-core measure on the
induced graph obtained by deleting pinned vertices and the neighbors of vertices pinned to be
occupied.  This induced graph remains claw-free and has maximum degree at most
$\widehat\Delta$, and the remaining weights are still in $[1/2,2]$.

\item \emph{Spectral independence.}
Chen and Gu's hard-core theorem for claw-free graphs~\cite[Theorem~4]{CG23} gives
$2(1+\Delta_H\lambda_{\max})$-spectral independence for a claw-free hard-core instance of maximum
degree $\Delta_H$.  Applying this after every feasible pinning and using
$\Delta_H\le\widehat\Delta$ and $\lambda_{\max}\le2$ gives the uniform bound
\[
2\bigl(1+\widehat\Delta\lambda_{\max}\bigr)
\le 2(1+2\widehat\Delta)=\eta_\Delta.
\]

\item \emph{Marginal boundedness.}
For an unpinned vertex $v$, spin $0$ has marginal probability at least
$1/(1+\lambda_{\max})\ge 1/3$.  If spin $1$ is feasible, then conditioning on all vertices outside
$N(v)\cup\{v\}$, the neighbors of $v$ form a claw-free neighborhood and hence contain no
independent set of size three.  Therefore the conditional weight of feasible configurations on
$N(v)$ is at most
\[
1+\widehat\Delta\lambda_{\max}
  +\binom{\widehat\Delta}{2}\lambda_{\max}^2
\le 1+2\widehat\Delta+2\widehat\Delta(\widehat\Delta-1)
=2\widehat\Delta^2+1
\]
times the weight of the all-zero configuration on $N(v)$.  Thus, whenever spin $1$ is feasible,
\[
\mu(\sigma_v=1)
\ge
\frac{\lambda_{\min}}{1+\lambda_{\min}}\cdot \frac{1}{2\widehat\Delta^2+1}
\ge
\frac{1}{3(2\widehat\Delta^2+1)}
=b_\Delta^{\mathrm{hc}}.
\]
The same argument applies after every feasible pinning by item~(a), so $\mu$ is
$b_\Delta^{\mathrm{hc}}$-marginally bounded.

\item \emph{Total connectedness.}
Under any pinning, feasible configurations are independent sets of the residual induced graph.
Deleting occupied vertices one at a time connects every feasible configuration to the empty
configuration, so the support is connected under single-site moves.

\item \emph{Small size.}
Chen, Liu, and Vigoda's approximate-tensorization theorem~\cite[Theorem~2.9]{CLV23} also assumes
the number of vertices in the spin-system graph is at least
\[
\frac{24\widehat\Delta}{(b_\Delta^{\mathrm{hc}})^2}
\left(\frac{4\eta_\Delta}{(b_\Delta^{\mathrm{hc}})^2}+1\right).
\]
If $H$ has fewer vertices, add isolated vertices with weights in $[1/2,2]$ until this condition
holds.  The augmented graph is still claw-free, has maximum degree at most $\widehat\Delta$, and
the augmented measure is the product of $\mu$ and independent one-site hard-core measures.
Applying the theorem to functions that depend only on the original coordinates gives the same
constant for $\mu$.
\end{enumerate}

Therefore \cite[Theorem~2.9]{CLV23}, with maximum degree $\widehat\Delta$, marginal parameter
$b=b_\Delta^{\mathrm{hc}}$, and spectral-independence parameter $\eta_\Delta$, yields approximate
tensorization of entropy with constant
\[
\frac{18\log(1/b_\Delta^{\mathrm{hc}})}{(b_\Delta^{\mathrm{hc}})^4}
\left(\frac{24\widehat\Delta}{(b_\Delta^{\mathrm{hc}})^2}\right)^{
4\eta_\Delta/(b_\Delta^{\mathrm{hc}})^2+1}
=C_\Delta^{\mathrm{ent}}.
\]

Part~(1) applies this conclusion to the special all-one weight vector $\lambda_e\equiv 1$, which is
exactly the uniform distribution $U$ on matchings.  The displayed KL form is the same entropy
tensorization with $f=dD_w/dU$; this part does not require the weights $w_e$ to be bounded.

Part~(2) applies the same conclusion to the hard-core measure corresponding to $D_w$, and then
uses Fact~3.5 of~\cite{CLV23}: approximate tensorization with constant
$C_\Delta^{\mathrm{ent}}$ yields spectral gap at least $1/(C_\Delta^{\mathrm{ent}}m)$ for the
heat-bath Glauber dynamics on the $m$ vertices of $\Lcal(G)$.  Since the Dirichlet form of this
dynamics is
\[
\frac1m\sum_{e\in E}\E_{D_w}[\Var_e(g)],
\]
the stated Poincare inequality follows.  Finally, for $\Delta\ge2$ one has
$\widehat\Delta\le2\Delta$, $(b_\Delta^{\mathrm{hc}})^{-1}\le C\Delta^2$, and
$\eta_\Delta\le C\Delta$, which gives~\eqref{eq:cent-asymptotic}.
\end{proof}

\subsection{A lower bound on edge availability under \texorpdfstring{$U$}{U}}

\begin{lemma}[Availability is bounded away from zero under $U$]
\label{lem:availability-lower-bound}
For every edge $e\in E$,
\[
\alpha_e:=U(A_e)\ge \frac{1}{\Delta^2}.
\]
\end{lemma}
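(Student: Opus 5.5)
We prove the bound by an injection (more precisely, a bounded-multiplicity map) from matchings where $e$ is unavailable to matchings where $e$ is available.

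Fix $e=\{x,y\}$ and let $N_e$ be the set of edges adjacent to $e$ (sharing an endpoint with $e$, other than $e$ itself). Note $|N_e|\le 2(\Delta-1)$. Let $\mathcal{B}=\{M\in\Mcal(G): A_e(M)=0\}$ and $\mathcal{A}=\Mcal(G)\setminus\mathcal{B}$. A matching $M\in\mathcal{B}$ contains at least one and at most two edges of $N_e$: at most one at $x$ and at most one at $y$. Moreover $e\notin M$ in this case.

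Define $\phi(M):=M\setminus N_e$. This is again a matching, contains no edge adjacent to $e$, and hence lies in $\mathcal{A}$. We then count preimages. Given $M'=\phi(M)$, the matching $M$ is recovered by choosing at most one edge of $N_e$ at $x$ and at most one at $y$. There are at most $\Delta-1$ choices at each endpoint, plus the option of choosing none. Hence each $M'\in\mathcal{A}$ has at most
\[
\Delta^2-1
\]
preimages in $\mathcal{B}$, since the fully empty choice gives $M'$ itself, which is not in $\mathcal{B}$. Therefore $|\mathcal{B}|\le(\Delta^2-1)|\mathcal{A}|$.

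It follows that $|\Mcal(G)|\le\Delta^2|\mathcal{A}|$, and so $\alpha_e=|\mathcal{A}|/|\Mcal(G)|\ge 1/\Delta^2$.

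The only point needing care is the counting of the fibers of $\phi$. Edges in $N_e$ at $x$ and at $y$ are distinct except when a parallel edge coincides, and in a simple graph no edge of $N_e$ meets both $x$ and $y$ except $e$ itself, which is excluded from $N_e$. So the product bound $(1+(\Delta-1))^2=\Delta^2$ on the number of ways to re-add edges is valid. When $\Delta=1$ the set $N_e$ is empty, $\mathcal{B}=\varnothing$, and the bound $\alpha_e=1$ holds trivially.
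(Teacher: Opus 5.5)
Your proof is correct and matches the paper's argument: the paper also deletes the edges of $M$ at the endpoints of $e$ to map $\mathcal B_e$ into $\mathcal A_e$, and bounds each fiber by $(\deg(u)-1)+(\deg(v)-1)+(\deg(u)-1)(\deg(v)-1)\le\Delta^2-1$, which gives $|\mathcal B_e|\le(\Delta^2-1)|\mathcal A_e|$. Your closing remark about the distinctness of the edges at $x$ and $y$ is harmless but not needed, because $M$ is already determined by $M'$ together with its (possibly absent) edge at each endpoint of $e$.
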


\begin{proof}
Fix an edge $e=uv$.  Let
\[
\mathcal A_e:=\{M\in \Mcal(G): A_e(M)=1\},
\qquad
\mathcal B_e:=\{M\in \Mcal(G): A_e(M)=0\}.
\]
For $M\in \mathcal B_e$, delete from $M$ all edges incident to $u$ or $v$ and call the resulting
matching $R(M)$.  Then $R(M)\in \mathcal A_e$.

Conversely, fix $M'\in \mathcal A_e$.  To recover a preimage $M$ of $M'$ under $R$, we may add back
at most one edge incident to $u$ (other than $e$) and at most one edge incident to $v$ (other than
$e$).  Hence the number of choices is at most
\[
(\deg(u)-1)+(\deg(v)-1)+(\deg(u)-1)(\deg(v)-1)\le \Delta^2-1.
\]
Therefore
\[
|\mathcal B_e|\le (\Delta^2-1)|\mathcal A_e|.
\]
It follows that
\[
\alpha_e
= \frac{|\mathcal A_e|}{|\mathcal A_e|+|\mathcal B_e|}
\ge \frac{1}{\Delta^2}.
\qedhere
\]
\end{proof}

\section{Upper Bound}
\label{sec:upper-bound}

The tester below is written as an explicit algorithm.  Its only preprocessing step is to approximate,
under $U$, the probability that each edge is available.  After this preprocessing, the
quadratic statistic uses the normalized variables
$\psi_e=\xi_e/\sqrt{\widetilde\alpha_e}$ defined below.
The screening step separately handles large shifts in availability probabilities.

\subsection{Preprocessing: estimating \texorpdfstring{$U(A_e)$}{U(Ae)}}

For an edge $e\in E$, let
\[
\Gamma(e):=\{f\in E\setminus\{e\}: f\text{ shares an endpoint with }e\},
\]
and let
\[
G^{(e)}:=(V,E\setminus \Gamma(e)).
\]
Thus $e$ itself is retained, while the edges conflicting with $e$ are removed.  Recall
\[
\alpha_e:=U(A_e).
\]

\begin{lemma}[Computing estimates of \texorpdfstring{$U(A_e)$}{U(Ae)}]
\label{lem:compute-alpha}
Fix an integer $\Delta\ge 1$ and set
\[
\tau:=\frac{1}{4\Delta^2}.
\]
For every graph $G=(V,E)$ with $\Delta(G)\le \Delta$, one can compute numbers
$(\widetilde\alpha_e)_{e\in E}$ satisfying
\[
|\widetilde\alpha_e-\alpha_e|\le \frac{\tau}{8}
\qquad(e\in E)
\]
in deterministic polynomial time for every fixed $\Delta$.  Moreover,
\[
\frac{31}{32}\alpha_e\le \widetilde\alpha_e\le \frac{33}{32}\alpha_e,
\qquad
\widetilde\alpha_e\ge \frac{31}{32\Delta^2}.
\]
\end{lemma}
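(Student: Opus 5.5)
The plan is to write $\alpha_e$ as a ratio of matching counts and approximate both counts with the deterministic FPTAS of~\cite{BGKNT07}. Matchings $M$ with $A_e(M)=1$ are exactly the matchings of $G$ avoiding $\Gamma(e)$, i.e.\ the matchings of $G^{(e)}$ (with $e$ itself allowed or not). Hence
\[
\alpha_e=\frac{|\Mcal(G^{(e)})|}{|\Mcal(G)|}.
\]
Both $G$ and $G^{(e)}$ have maximum degree at most $\Delta$. The FPTAS of~\cite{BGKNT07} returns, for any such graph and any accuracy $\rho>0$, a number $\widetilde Z$ with $e^{-\rho}|\Mcal(H)|\le \widetilde Z\le e^{\rho}|\Mcal(H)|$. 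For fixed $\Delta$ its running time is polynomial in $|V|$ and $1/\rho$.

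Set $\widetilde\alpha_e:=\widetilde Z(G^{(e)})/\widetilde Z(G)$. This gives $e^{-2\rho}\alpha_e\le\widetilde\alpha_e\le e^{2\rho}\alpha_e$, and since $\alpha_e\le 1$,
\[
|\widetilde\alpha_e-\alpha_e|\le (e^{2\rho}-1)\alpha_e\le e^{2\rho}-1 .
\]
Choose $\rho$ as a constant depending only on $\Delta$ with $e^{2\rho}-1\le \tau/8=1/(32\Delta^2)$; for instance $\rho=1/(128\Delta^2)$ works because $e^{x}-1\le 2x$ for $x\le 1$. Then the additive bound holds. We make $m+1$ calls to the FPTAS, one for $G$ and one for each $G^{(e)}$, each at constant accuracy, so the total time is polynomial for fixed $\Delta$.

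For the multiplicative bounds, combine the additive error with Lemma~\ref{lem:availability-lower-bound}. That lemma gives $\alpha_e\ge 1/\Delta^2$, so
\[
|\widetilde\alpha_e-\alpha_e|\le \frac{1}{32\Delta^2}\le \frac{\alpha_e}{32}.
\]
This yields $\tfrac{31}{32}\alpha_e\le\widetilde\alpha_e\le\tfrac{33}{32}\alpha_e$. Applying Lemma~\ref{lem:availability-lower-bound} once more gives $\widetilde\alpha_e\ge \tfrac{31}{32}\alpha_e\ge \tfrac{31}{32\Delta^2}$. (Alternatively, the relative guarantee $e^{\pm2\rho}$ already gives the multiplicative bound directly.)

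The only real obstacle is importing the FPTAS correctly. One must check that it handles all (monomer-dimer) matchings on bounded-degree graphs with multiplicative error and runtime polynomial in $1/\rho$. One must also check that deleting $\Gamma(e)$ does not increase the degree, which is immediate. The degenerate case $\Delta=1$ needs no separate treatment: the graph is a disjoint union of edges, $\alpha_e=1$, and the counts are exact.
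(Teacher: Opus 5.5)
Your proposal is correct and follows the paper's route: write $\alpha_e=|\Mcal(G^{(e)})|/|\Mcal(G)|$, approximate both counts with the bounded-degree FPTAS of~\cite{BGKNT07} at a constant relative accuracy depending only on $\Delta$, and obtain the multiplicative bounds from the additive one using $\tau/8=1/(32\Delta^2)\le\alpha_e/32$ (Lemma~\ref{lem:availability-lower-bound}). Your explicit choice $\rho=1/(128\Delta^2)$ simply makes concrete the paper's ``sufficiently small $\rho_\Delta$''.
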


\begin{proof}
The event $A_e=1$ is exactly the event that the sampled matching uses no edge in $\Gamma(e)$.
Equivalently, it is the event that the matching belongs to $\Mcal(G^{(e)})$.  Therefore
\begin{equation}
\label{eq:alpha-counting-ratio}
\alpha_e=\frac{|\Mcal(G^{(e)})|}{|\Mcal(G)|}.
\end{equation}
The graphs $G$ and $G^{(e)}$ all have maximum degree at most $\Delta$.
Use a deterministic FPTAS for counting all matchings in bounded-degree graphs, for instance
Bayati, Gamarnik, Katz, Nair, and Tetali~\cite{BGKNT07}, to approximate the $m+1$ quantities
$|\Mcal(G)|$ and $|\Mcal(G^{(e)})|$, $e\in E$, to a sufficiently small relative accuracy
$\rho_\Delta>0$.  Taking the ratio of the two estimates in~\eqref{eq:alpha-counting-ratio} gives
$\widetilde\alpha_e$.  Since $\rho_\Delta$ is a constant depending only on $\Delta$, the total
preprocessing time is polynomial in $m$ for fixed $\Delta$.

Choosing $\rho_\Delta$ sufficiently small ensures
$|\widetilde\alpha_e-\alpha_e|\le \tau/8$ for every $e$.  The remaining two inequalities follow from
Lemma~\ref{lem:availability-lower-bound}, since
\[
\frac{\tau}{8}=\frac{1}{32\Delta^2}\le \frac{\alpha_e}{32}.
\]
\end{proof}

\subsection{Screening to bound the weights}

Throughout the rest of the upper-bound proof, let $(\widetilde\alpha_e)_{e\in E}$ be the output of
Lemma~\ref{lem:compute-alpha}.  Given $N_0$ i.i.d. samples
$M_1,\dots,M_{N_0}\sim D_w$, compute
\[
\widehat a_e:=\frac1{N_0}\sum_{i=1}^{N_0} A_e(M_i),
\qquad
\widehat{\xi}_e:=\frac1{N_0}\sum_{i=1}^{N_0} \xi_e(M_i).
\]
Reject if, for some edge $e$,
\[
|\widehat a_e-\widetilde\alpha_e|>\frac{\tau}{2}
\qquad\text{or}\qquad
|\widehat{\xi}_e|>\frac{\tau}{2}.
\]

\begin{lemma}[Consequences of the screening step]
\label{lem:screening-consequences}
Assume $m\ge 1$ and $0<\delta<1$.  If
\[
N_0\ge 2048\Delta^4\log\frac{4m}{\delta},
\]
then with probability at least $1-\delta$ the following event holds:

\begin{enumerate}
\item under $H_0$, the screening step does not reject;
\item if the screening step accepts, then for every edge $e$,
\[
|a_e-\alpha_e|\le \frac{3\tau}{4},
\qquad
|a_e t_e|\le \frac{5\tau}{8},
\qquad
\frac12\le w_e\le 2,
\]
where $a_e:=D_w(A_e)$ and $t_e=(w_e-1)/(w_e+1)$;
\item if the screening step accepts, then
\[
\frac12 a_e\le \widetilde\alpha_e\le 2a_e
\qquad(e\in E).
\]
\end{enumerate}
\end{lemma}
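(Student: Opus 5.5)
The plan is to use Hoeffding's inequality to define a good event $\mathcal E$, and then derive all three items deterministically on $\mathcal E$.

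\emph{The good event.} For each edge $e$, both $A_e(M_i)\in\{0,1\}$ and $\xi_e(M_i)\in\{-1,0,1\}$ are i.i.d.\ and bounded in intervals of length at most $2$. Their means are $a_e$ and $\E_{D_w}[\xi_e]$, respectively. By Lemma~\ref{lem:one-edge-conditional-distribution}, conditioning on $X_{-e}$ and using that $A_e$ is a function of $X_{-e}$ gives
\[
\E_{D_w}[\xi_e]=\E_{D_w}[A_e(2p_e-1)]=a_e t_e .
\]
Let $\mathcal E$ be the event that for every $e$ we have $|\widehat a_e-a_e|\le\tau/8$ and $|\widehat\xi_e-a_et_e|\le\tau/8$. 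Hoeffding gives each deviation probability at most $2\exp(-2N_0(\tau/8)^2/4)=2\exp(-N_0\tau^2/128)$. Since $\tau^2=1/(16\Delta^4)$, this equals $2\exp(-N_0/(2048\Delta^4))\le \delta/(2m)$ under the stated bound on $N_0$. A union bound over the $2m$ deviations then gives $\Pbb(\mathcal E)\ge 1-\delta$. The constant $2048$ matches exactly, which suggests this is the intended choice.

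\emph{Item 1.} Under $H_0$ we have $a_e=\alpha_e$ and $t_e=0$. Combining $\mathcal E$ with Lemma~\ref{lem:compute-alpha} gives
\[
|\widehat a_e-\widetilde\alpha_e|\le \tau/8+\tau/8<\tau/2
\qquad\text{and}\qquad
|\widehat\xi_e|\le\tau/8 .
\]
So no edge triggers rejection.

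\emph{Item 2.} If the screening accepts, then on $\mathcal E$ the triangle inequality gives
\[
|a_e-\alpha_e|\le|a_e-\widehat a_e|+|\widehat a_e-\widetilde\alpha_e|+|\widetilde\alpha_e-\alpha_e|\le \tau/8+\tau/2+\tau/8=3\tau/4,
\]
and $|a_et_e|\le\tau/2+\tau/8=5\tau/8$. For the weight bound I use Lemma~\ref{lem:availability-lower-bound}: since $\alpha_e\ge 1/\Delta^2=4\tau$, we get $a_e\ge 4\tau-3\tau/4=13\tau/4$. Hence
\[
|t_e|\le \frac{5\tau/8}{13\tau/4}=\frac{5}{26}<\frac13 .
\]
Because $t\mapsto (1+t)/(1-t)$ is increasing and $w_e=(1+t_e)/(1-t_e)$, the condition $|t_e|\le 1/3$ gives $w_e\in[1/2,2]$.

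\emph{Item 3.} From $|a_e-\alpha_e|\le 3\tau/4\le \tfrac{3}{16}\alpha_e$ we get $a_e\in[\tfrac{13}{16}\alpha_e,\tfrac{19}{16}\alpha_e]$. Lemma~\ref{lem:compute-alpha} gives $\widetilde\alpha_e\in[\tfrac{31}{32}\alpha_e,\tfrac{33}{32}\alpha_e]$. Therefore
\[
\frac{\widetilde\alpha_e}{a_e}\in\left[\frac{31/32}{19/16},\,\frac{33/32}{13/16}\right]=\left[\frac{31}{38},\,\frac{33}{26}\right]\subset\left[\tfrac12,2\right],
\]
which is item 3.

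\emph{Main obstacle.} The only delicate step is getting a uniform lower bound on $a_e$ under the unknown $D_w$. This is needed to turn smallness of $a_et_e$ into smallness of $t_e$. The route around it is to transfer the a priori bound on $\alpha_e$ through the screened closeness $|a_e-\alpha_e|\le 3\tau/4$, which is why $\tau$ is taken to be a quarter of $1/\Delta^2$. Everything else reduces to checking constants.
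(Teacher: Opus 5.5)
Your proof is correct and follows the paper's argument essentially step for step: Hoeffding plus a union bound over the $2m$ empirical averages at accuracy $\tau/8$, then the same triangle inequalities, the availability bound $\alpha_e\ge 1/\Delta^2=4\tau$ giving $|t_e|\le 5/26<1/3$ and hence $w_e\in[1/2,2]$, and the same ratio comparison for item 3. You also spell out two steps the paper leaves implicit: that $\E_{D_w}[\xi_e]=a_et_e$ follows from the one-edge conditional lemma, and that Hoeffding with range $2$ at deviation $\tau/8$ produces exactly the constant $2048\Delta^4$ in the sample bound.
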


\begin{proof}
Since each summand in $\widehat a_e$ and $\widehat{\xi}_e$ lies in $[-1,1]$, Hoeffding's inequality and
a union bound over the $2m$ empirical quantities imply that, with the displayed value of $N_0$,
with probability at least $1-\delta$,
\begin{equation}
\label{eq:screening-concentration}
|\widehat a_e-a_e|\le \frac{\tau}{8},
\qquad
|\widehat{\xi}_e-a_et_e|\le \frac{\tau}{8}
\qquad(\forall e\in E).
\end{equation}
Work on this event.

Under $H_0$, Proposition~\ref{prop:uniformity-characterization} gives $w_e=1$ for all $e$.
Hence $t_e=0$ and $a_e=\alpha_e$.  Therefore, using Lemma~\ref{lem:compute-alpha},
\[
|\widehat a_e-\widetilde\alpha_e|
\le |\widehat a_e-\alpha_e|+|\alpha_e-\widetilde\alpha_e|
\le \frac{\tau}{8}+\frac{\tau}{8}
<\frac{\tau}{2},
\]
and
\[
|\widehat{\xi}_e|\le \frac{\tau}{8}<\frac{\tau}{2}.
\]
Thus screening does not reject under $H_0$ on the concentration event.

Now suppose that screening accepts.  Combining acceptance, Lemma~\ref{lem:compute-alpha}, and
\eqref{eq:screening-concentration} gives, for every edge $e$,
\[
|a_e-\alpha_e|
\le
|a_e-\widehat a_e|+|\widehat a_e-\widetilde\alpha_e|+|\widetilde\alpha_e-\alpha_e|
\le
\frac{\tau}{8}+\frac{\tau}{2}+\frac{\tau}{8}
=\frac{3\tau}{4},
\]
and
\[
|a_e t_e|
\le
|a_et_e-\widehat{\xi}_e|+|\widehat{\xi}_e|
\le
\frac{\tau}{8}+\frac{\tau}{2}
=\frac{5\tau}{8}.
\]
Since $\tau=1/(4\Delta^2)$ and $\alpha_e\ge \Delta^{-2}=4\tau$ by
Lemma~\ref{lem:availability-lower-bound},
\[
a_e\ge \alpha_e-\frac{3\tau}{4}
\ge \frac{13}{16}\alpha_e
\ge \frac{13}{16\Delta^2}.
\]
Consequently,
\[
|t_e|
\le
\frac{5\tau/8}{13/(16\Delta^2)}
=\frac{10}{52}<\frac13.
\]
Since $w_e=(1+t_e)/(1-t_e)$, this implies $1/2\le w_e\le 2$.

Finally, the estimates
\[
\frac{13}{16}\alpha_e\le a_e\le \frac{19}{16}\alpha_e,
\qquad
\frac{31}{32}\alpha_e\le \widetilde\alpha_e\le \frac{33}{32}\alpha_e
\]
imply $a_e/2\le \widetilde\alpha_e\le 2a_e$ after loosening constants.
\end{proof}

\subsection{The vector \texorpdfstring{$\Psi$}{Psi} and its mean}

For each edge, define the normalized statistic
\[
\psi_e(M):=\frac{\xi_e(M)}{\sqrt{\widetilde\alpha_e}}
\qquad(e\in E),
\]
and concatenate these coordinates into
\begin{equation}
\label{eq:normalized-statistics}
\Psi(M):=(\psi_e(M))_{e\in E}\in\R^m.
\end{equation}
Write
\[
\nu:=\E_{D_w}[\Psi],
\qquad
\nu_e=\frac{a_e t_e}{\sqrt{\widetilde\alpha_e}}.
\]
Also define
\[
h(t):=
\KL\!\left(\Ber\!\left(\frac{1+t}{2}\right)\middle\|\Ber\!\left(\frac12\right)\right)
=
\frac12\Bigl[(1+t)\log(1+t)+(1-t)\log(1-t)\Bigr].
\]

\begin{lemma}[Mean lower bound after screening]
\label{lem:mean-lower-bound}
Assume that the screening step accepts on the event of Lemma~\ref{lem:screening-consequences}.  Then, with
\[
\kappa_\Delta:=\frac{1}{C_\Delta^{\mathrm{ent}}\log 2},
\]
one has
\[
\TV(D_w,U)\ge \varepsilon
\qquad\Longrightarrow\qquad
\|\nu\|_2^2\ge \kappa_\Delta\varepsilon^2.
\]
Under $H_0$, one has $\nu=0$.
\end{lemma}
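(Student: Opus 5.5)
The plan is to chain Pinsker's inequality, the entropy tensorization in Proposition~\ref{prop:imported-tools}(1), a quadratic bound on $h$, and the comparison $a_e\asymp\widetilde\alpha_e$ from Lemma~\ref{lem:screening-consequences}. The claim under $H_0$ is immediate. There $w_e\equiv1$ by Proposition~\ref{prop:uniformity-characterization}, so $t_e=0$ and $\nu_e=a_et_e/\sqrt{\widetilde\alpha_e}=0$. Equivalently, $\xi_e$ is conditionally centered under $U$ by Lemma~\ref{lem:one-edge-conditional-distribution}.

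For the main implication, assume $\TV(D_w,U)\ge\varepsilon$. Pinsker's inequality gives $\KL(D_w\|U)\ge 2\varepsilon^2$. By Lemma~\ref{lem:one-edge-conditional-distribution}, the conditional KL term for edge $e$ vanishes on $\{A_e=0\}$, where both measures force $X_e=0$. On $\{A_e=1\}$ it equals $h(t_e)$. Hence the KL form of Proposition~\ref{prop:imported-tools}(1), which needs no bound on the weights, yields
\[
2\varepsilon^2\le \KL(D_w\|U)\le C_\Delta^{\mathrm{ent}}\sum_{e\in E} a_e\,h(t_e).
\]

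Next I bound $h$ quadratically. Since $\log(1+x)\le x$,
\[
h(t)\le \tfrac12\bigl[(1+t)t+(1-t)(-t)\bigr]=t^2 .
\]
This is the chi-square upper bound on KL, and it holds for all $|t|<1$. On the screening event, Lemma~\ref{lem:screening-consequences} gives $|t_e|<1/3$; I will use this if a sharper constant is needed. Combining,
\[
\sum_e a_e t_e^2\ge 2\varepsilon^2/C_\Delta^{\mathrm{ent}} .
\]

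Finally I convert to $\nu$. Part 3 of Lemma~\ref{lem:screening-consequences} gives $\widetilde\alpha_e\le 2a_e$. Therefore
\[
\nu_e^2=\frac{a_e^2t_e^2}{\widetilde\alpha_e}\ge \frac{a_et_e^2}{2},
\qquad
\|\nu\|_2^2\ge \frac{1}{2}\sum_e a_e t_e^2\ge \frac{\varepsilon^2}{C_\Delta^{\mathrm{ent}}}\ge \frac{\varepsilon^2}{C_\Delta^{\mathrm{ent}}\log 2}=\kappa_\Delta\varepsilon^2 .
\]
The last step uses $\log 2<1$. The $\log 2$ in $\kappa_\Delta$ suggests the authors use a Pinsker form or quadratic bound with different constants. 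Since it only weakens the claim, there is slack.

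The main step to check is the identification of the conditional KL term with $A_e\,h(t_e)$, together with its expectation under $D_w$ being $a_e h(t_e)$. This follows from Lemma~\ref{lem:one-edge-conditional-distribution}, because $A_e$ is a function of $X_{-e}$. The remaining work is constant bookkeeping, which the margin above absorbs.
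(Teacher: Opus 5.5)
Your proof follows the paper's route step for step: Pinsker, the KL form of Proposition~\ref{prop:imported-tools}(1) with each conditional term equal to $a_e h(t_e)$, a quadratic bound on $h$, and $\widetilde\alpha_e\le 2a_e$. The $H_0$ part is fine. The problem is the last comparison of constants, which goes the wrong way. Since $\log 2<1$, we have $\kappa_\Delta=1/(C_\Delta^{\mathrm{ent}}\log 2)>1/C_\Delta^{\mathrm{ent}}$. So the bound $\|\nu\|_2^2\ge\varepsilon^2/C_\Delta^{\mathrm{ent}}$ that your chain actually produces is weaker than the claim by a factor $\log 2\approx 0.69$, not stronger. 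The $\log 2$ in $\kappa_\Delta$ is not slack; it is exactly the constant of the quadratic bound the paper uses. As written, you do not establish the stated inequality.

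The loss comes from the chi-square bound $h(t)\le t^2$, which is too crude. The fix is the sharper estimate used in the paper. Expand $h(t)=\sum_{k\ge1}\frac{t^{2k}}{(2k-1)(2k)}$ and use $\sum_{k\ge1}\frac{1}{(2k-1)(2k)}=\log 2$ to get $h(t)\le(\log 2)\,t^2$ for $|t|\le 1$. Then $a_eh(t_e)\le(\log 2)\,a_et_e^2\le 2\log 2\,\nu_e^2$, hence
\[
2\varepsilon^2\le\KL(D_w\|U)\le 2C_\Delta^{\mathrm{ent}}\log 2\,\|\nu\|_2^2,
\]
which is precisely $\|\nu\|_2^2\ge\kappa_\Delta\varepsilon^2$. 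Alternatively, your own suggestion of invoking $|t_e|\le 1/3$ also works: the same series gives $h(t_e)\le 0.51\,t_e^2\le(\log 2)\,t_e^2$ on that range. With either replacement your argument is complete.
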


\begin{proof}
By Proposition~\ref{prop:imported-tools}(1) and Lemma~\ref{lem:one-edge-conditional-distribution},
\begin{equation}
\label{eq:kl-by-local-conditional}
\KL(D_w\|U)
\le
C_\Delta^{\mathrm{ent}}
\sum_{e\in E} a_e h(t_e).
\end{equation}
For $|t|\le 1$,
\begin{equation}
\label{eq:h-quadratic-bound}
h(t)\le (\log 2)t^2,
\end{equation}
because
\[
h(t)=\sum_{k\ge1}\frac{t^{2k}}{(2k-1)(2k)}
\le
\left(\sum_{k\ge1}\frac{1}{(2k-1)(2k)}\right)t^2
=(\log 2)t^2.
\]
On the accepted screening event, Lemma~\ref{lem:screening-consequences} gives $|t_e|\le 1/3$ and
$\widetilde\alpha_e\le 2a_e$.  Hence
\[
a_e h(t_e)
\le
(\log 2)a_e t_e^2
\le
2\log 2\,\frac{a_e^2t_e^2}{\widetilde\alpha_e}
=
2\log 2\,\nu_e^2.
\]
Combining this with~\eqref{eq:kl-by-local-conditional} yields
\[
\KL(D_w\|U)
\le
2C_\Delta^{\mathrm{ent}}\log 2\,\|\nu\|_2^2.
\]
Pinsker's inequality gives $\KL(D_w\|U)\ge 2\TV(D_w,U)^2$, and therefore
$\|\nu\|_2^2\ge \kappa_\Delta\varepsilon^2$ for
$\kappa_\Delta=(C_\Delta^{\mathrm{ent}}\log 2)^{-1}$.

Under $H_0$, conditional on $X_{-e}$, either $A_e=0$ or $A_e=1$ and
$X_e\sim\Ber(1/2)$ by Lemma~\ref{lem:one-edge-conditional-distribution}.  Therefore
$\E_U[\xi_e]=0$ for every $e$, so $\nu=0$.
\end{proof}

\subsection{Covariance bound when weights are bounded}

Assume now that
\[
\frac12\le w_e\le 2
\qquad\text{for all } e\in E.
\]
Write any vector $c\in\R^m$ as $c=(c_f)_{f\in E}$, and define
\[
g_c(M):=\langle c,\Psi(M)\rangle=\sum_{f\in E}c_f\psi_f(M).
\]
For an edge $e$, let
\[
B_e:=\{f\in E:\ f=e \text{ or } f\sim e\}
\]
be the set consisting of $e$ and all edges adjacent to $e$.  Since $\Delta(G)\le \Delta$,
\[
|B_e|\le 2\Delta-1.
\]

\begin{lemma}[Local sensitivity of \texorpdfstring{$\Psi$}{Psi}]
\label{lem:local-sensitivity}
For every $c\in\R^m$,
\[
\sum_{e\in E}\E_{D_w}\!\left[\Var_e(g_c)\right]
\le
B_\Delta\|c\|_2^2,
\qquad
B_\Delta:=4\Delta^2(2\Delta-1)^2.
\]
\end{lemma}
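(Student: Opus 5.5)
The plan is to bound each conditional variance $\Var_e(g_c)$ pointwise by the squared oscillation of $g_c$ in the coordinate $X_e$, and then sum over $e$ using the locality of $\Psi$. Fix $e$ and a configuration $X_{-e}$. Conditionally, $X_e$ takes at most two values, and $\Var_e(g_c)\le \frac14\bigl(\sup_{X_e} g_c-\inf_{X_e} g_c\bigr)^2$, where sup and inf range over the feasible values of $X_e$. If $e$ is not available, $X_e=0$ is forced and the conditional variance is zero. Otherwise $\Var_e(g_c)\le \frac14\bigl(g_c(M\cup\{e\})-g_c(M\setminus\{e\})\bigr)^2$.

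\emph{Locality.} Toggling $X_e$ changes $\xi_f$ only for $f\in B_e$. For $f=e$, $A_e$ does not depend on $X_e$. For $f\sim e$, $A_f$ depends on $X_e$. For $f\notin B_e$, neither $X_f$ nor $A_f$ involves $X_e$. Each $\xi_f$ lies in $\{-1,0,1\}$, so a single change is at most $2$ in absolute value. By Lemma~\ref{lem:compute-alpha}, $\widetilde\alpha_f\ge \frac{31}{32\Delta^2}\ge \frac{1}{2\Delta^2}$, hence $|\psi_f|$ changes by at most $2\sqrt{2}\,\Delta$. The crude bound $2\Delta$, which is what produces $4\Delta^2$ in $B_\Delta$, would need $\widetilde\alpha_f\ge 1/\Delta^2$. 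Since that need not hold, I will keep track of constants and absorb the factor of $2$ through the $\frac14$ from the two-point variance bound. With $D_f:=$ the change in $\psi_f$, Cauchy--Schwarz over the at most $2\Delta-1$ indices in $B_e$ gives
\[
\Var_e(g_c)\le \frac14\Bigl(\sum_{f\in B_e}|c_f|\,|D_f|\Bigr)^2\le \frac14(2\Delta-1)\sum_{f\in B_e}c_f^2\cdot 8\Delta^2 .
\]

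\emph{Summation.} Take expectation under $D_w$ and sum over $e$. Swap the order of summation: each $f$ belongs to $B_e$ for $e=f$ and for at most $2\Delta-2$ neighbours, so at most $2\Delta-1$ values of $e$. This yields
\[
\sum_e\E_{D_w}[\Var_e(g_c)]\le 2\Delta^2(2\Delta-1)^2\|c\|_2^2\le B_\Delta\|c\|_2^2 .
\]

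The only delicate point is the constant bookkeeping, that is, checking that the two-point variance bound combined with the lower bound on $\widetilde\alpha_f$ fits within $4\Delta^2(2\Delta-1)^2$. Everything else is routine locality. The bound is pointwise in $X_{-e}$, so it needs no assumption on $w$ beyond the fact that the conditional law is supported on two points.
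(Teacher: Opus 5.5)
Your proof is correct and follows the paper's argument essentially step for step. Both use the two-point bound $\Var_e(g_c)\le\frac14\bigl(g_c(M^{(1)})-g_c(M^{(0)})\bigr)^2$, the locality of $\psi_f$ to $f\in B_e$, Cauchy--Schwarz over $B_e$, and the count that each $f$ lies in at most $2\Delta-1$ sets $B_e$. The only difference is bookkeeping: the paper uses the looser per-coordinate bound $2/\sqrt{\widetilde\alpha_f}\le 4\Delta$, which gives exactly $4\Delta^2(2\Delta-1)^2$ after the $\frac14$, while your $2\sqrt2\,\Delta$ gives the sharper $2\Delta^2(2\Delta-1)^2$, so the constant was never in danger.
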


\begin{proof}
Fix an edge $e\in E$ and condition on $X_{-e}$.  If the conditional distribution of $X_e$ is degenerate,
then $\Var_e(g_c)=0$.  Otherwise there are exactly two feasible completions, say $M^{(0)}$ and
$M^{(1)}$, differing only in the value of $X_e$, and hence
\[
\Var_e(g_c)
\le
\frac14\bigl(g_c(M^{(1)})-g_c(M^{(0)})\bigr)^2.
\]

The coordinate $\psi_f$ depends only on $X_f$ and on the indicators of edges adjacent to $f$.
Thus changing $X_e$ can affect $\psi_f$ only when $f\in B_e$.  Moreover, since
$\xi_f\in\{-1,0,1\}$ and Lemma~\ref{lem:compute-alpha} gives
$\widetilde\alpha_f\ge 31/(32\Delta^2)$,
\[
|\psi_f(M^{(1)})-\psi_f(M^{(0)})|
\le
\frac{2}{\sqrt{\widetilde\alpha_f}}
\le
4\Delta.
\]
Therefore
\[
|g_c(M^{(1)})-g_c(M^{(0)})|
\le
4\Delta\sum_{f\in B_e}|c_f|
\le
4\Delta\sqrt{|B_e|}\left(\sum_{f\in B_e}c_f^2\right)^{1/2},
\]
using Cauchy–Schwarz, and hence
\[
\Var_e(g_c)
\le
4\Delta^2(2\Delta-1)\sum_{f\in B_e}c_f^2.
\]
Taking expectation and summing over $e$ gives
\[
\sum_{e\in E}\E_{D_w}[\Var_e(g_c)]
\le
4\Delta^2(2\Delta-1)
\sum_{e\in E}\sum_{f\in B_e}c_f^2.
\]
Each fixed $f$ belongs to $B_e$ for at most $2\Delta-1$ choices of $e$, so
\[
\sum_{e\in E}\E_{D_w}[\Var_e(g_c)]
\le
4\Delta^2(2\Delta-1)^2\sum_{f\in E}c_f^2.
\]
This proves the claim.
\end{proof}

\begin{proposition}[Local covariance bound]
\label{prop:local-covariance}
Whenever $1/2\le w_e\le 2$ for all $e\in E$, one has, for every $c\in\R^m$,
\[
\Var_{D_w}\bigl(\langle c,\Psi(M)\rangle\bigr)
\le
V_\Delta\|c\|_2^2,
\qquad
V_\Delta:=C_\Delta^{\mathrm{ent}}B_\Delta.
\]
In particular, if $C_w:=\Cov_{D_w}(\Psi)$, then
\[
\|C_w\|_{\mathrm{op}}\le V_\Delta,
\qquad
\mathrm{tr}(C_w^2)\le mV_\Delta^2.
\]
\end{proposition}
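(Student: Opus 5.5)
The plan is to combine Proposition~\ref{prop:imported-tools}(2) with Lemma~\ref{lem:local-sensitivity}. Fix $c\in\R^m$ and apply the Poincar\'e inequality to $g=g_c=\langle c,\Psi\rangle$; this is allowed because $1/2\le w_e\le 2$ for every edge. It gives
\[
\Var_{D_w}(g_c)\le C_\Delta^{\mathrm{ent}}\sum_{e\in E}\E_{D_w}[\Var_e(g_c)].
\]
Lemma~\ref{lem:local-sensitivity} bounds the right-hand side by $C_\Delta^{\mathrm{ent}}B_\Delta\|c\|_2^2=V_\Delta\|c\|_2^2$, which is the first claim. Note that Lemma~\ref{lem:local-sensitivity} needs only the deterministic lower bound $\widetilde\alpha_f\ge 31/(32\Delta^2)$ from Lemma~\ref{lem:compute-alpha} and the locality of $\psi_f$, so no further hypotheses are required.

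For the matrix statements, write $C_w=\Cov_{D_w}(\Psi)$. For every $c$ we have the identity $c^\top C_w c=\Var_{D_w}(\langle c,\Psi\rangle)$. Since $C_w$ is symmetric positive semidefinite, its operator norm is $\sup_{\|c\|_2=1}c^\top C_w c$, and the first claim bounds this by $V_\Delta$.

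For the trace bound, let $\lambda_1,\dots,\lambda_m\in[0,V_\Delta]$ be the eigenvalues of $C_w$. Then
\[
\mathrm{tr}(C_w^2)=\sum_i\lambda_i^2\le m\,\|C_w\|_{\mathrm{op}}^2\le mV_\Delta^2.
\]

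No step is a real obstacle, since all the analytic work is already contained in the imported Poincar\'e inequality and the sensitivity lemma. The only point needing care is that the Poincar\'e inequality is applied to $D_w$ rather than to $U$, so the weight bound $[1/2,2]$, which on the screening event is supplied by Lemma~\ref{lem:screening-consequences}, is genuinely needed here.
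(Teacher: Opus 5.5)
Your proposal is correct and follows the paper's proof exactly: the Poincar\'e inequality of Proposition~\ref{prop:imported-tools}(2) combined with Lemma~\ref{lem:local-sensitivity} gives the variance bound, and the operator-norm and trace bounds follow from the variational characterization and positive semidefiniteness of $C_w$.
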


\begin{proof}
By Proposition~\ref{prop:imported-tools}(2),
\[
\Var_{D_w}(g_c)
\le
C_\Delta^{\mathrm{ent}}
\sum_{e\in E}\E_{D_w}[\Var_e(g_c)].
\]
Combining this with Lemma~\ref{lem:local-sensitivity} yields the asserted bound with
$V_\Delta:=C_\Delta^{\mathrm{ent}}B_\Delta$.
The operator-norm claim follows from the variational characterization of covariance.  Finally,
$C_w$ is a positive semidefinite $m\times m$ matrix, so
\[
\mathrm{tr}(C_w^2)\le m\|C_w\|_{\mathrm{op}}^2\le mV_\Delta^2.
\]
\end{proof}

\subsection{Quadratic statistic when weights are bounded}

Assume from now on that the screening step has accepted on the event of
Lemma~\ref{lem:screening-consequences}, so that $1/2\le w_e\le 2$ for every edge $e$.
Set
\[
\nu:=\E_{D_w}[\Psi]\in\R^m,
\qquad
C_w:=\Cov_{D_w}(\Psi).
\]
By Proposition~\ref{prop:local-covariance},
\[
\|C_w\|_{\mathrm{op}}\le V_\Delta,
\qquad
\mathrm{tr}(C_w^2)\le mV_\Delta^2.
\]

Take two independent fresh blocks of size $s$:
\[
M_1^{(1)},\dots,M_s^{(1)}\sim D_w,
\qquad
M_1^{(2)},\dots,M_s^{(2)}\sim D_w,
\]
and define
\[
\overline\Psi^{(1)}:=\frac1s\sum_{i=1}^s\Psi(M_i^{(1)}),
\qquad
\overline\Psi^{(2)}:=\frac1s\sum_{i=1}^s\Psi(M_i^{(2)}),
\qquad
S:=\left\langle \overline\Psi^{(1)},\overline\Psi^{(2)}\right\rangle.
\]

\begin{lemma}[Mean and variance of the quadratic statistic]
\label{lem:quadratic-statistic}
One has
\[
\E_{D_w}[S]=\|\nu\|_2^2,
\]
and
\[
\Var_{D_w}(S)
\le
\frac{\mathrm{tr}(C_w^2)}{s^2}
+
\frac{2\|C_w\|_{\mathrm{op}}\|\nu\|_2^2}{s}
\le
\frac{mV_\Delta^2}{s^2}
+
\frac{2V_\Delta\|\nu\|_2^2}{s}.
\]
\end{lemma}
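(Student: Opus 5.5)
The plan is to write each block mean as the population mean plus a centered fluctuation, and then compute the first two moments of $S$ directly. Set $Y:=\overline\Psi^{(1)}-\nu$ and $Z:=\overline\Psi^{(2)}-\nu$. These are independent and centered, and each has covariance $C_w/s$, because each block mean averages $s$ i.i.d.\ copies of $\Psi$. The statistic then expands as
\[
S=\|\nu\|_2^2+\langle \nu,Y\rangle+\langle \nu,Z\rangle+\langle Y,Z\rangle .
\]
The linear terms have mean zero. The cross term satisfies $\E\langle Y,Z\rangle=\langle \E Y,\E Z\rangle=0$ by independence. Together these give $\E_{D_w}[S]=\|\nu\|_2^2$.

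For the variance, I will show the three fluctuation terms are pairwise uncorrelated.
\begin{enumerate}
\item $\E[\langle\nu,Y\rangle\langle\nu,Z\rangle]=0$, since $Y$ and $Z$ are independent and centered.
\item $\E[\langle\nu,Y\rangle\langle Y,Z\rangle]=\E\bigl[\langle\nu,Y\rangle\langle Y,\E Z\rangle\bigr]=0$, by conditioning on $Y$.
\item The symmetric term $\E[\langle\nu,Z\rangle\langle Y,Z\rangle]$ vanishes in the same way, by conditioning on $Z$.
\end{enumerate}
Hence
\[
\Var(S)=\Var\langle\nu,Y\rangle+\Var\langle\nu,Z\rangle+\E\langle Y,Z\rangle^2 .
\]

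Each linear term equals $\nu^\top (C_w/s)\nu\le \|C_w\|_{\mathrm{op}}\|\nu\|_2^2/s$, and together they give the second summand. For the quadratic term I condition on $Y$, which gives $\E[\langle Y,Z\rangle^2\mid Y]=Y^\top (C_w/s) Y$. Taking expectation over $Y$ then gives
\[
\E\langle Y,Z\rangle^2=\mathrm{tr}\bigl((C_w/s)(C_w/s)\bigr)=\mathrm{tr}(C_w^2)/s^2 .
\]
This produces the stated bound; the argument in fact yields equality in the first inequality.

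The second inequality follows by inserting $\|C_w\|_{\mathrm{op}}\le V_\Delta$ and $\mathrm{tr}(C_w^2)\le mV_\Delta^2$ from Proposition~\ref{prop:local-covariance}. That proposition applies because we work on the accepted screening event, where $1/2\le w_e\le 2$ by Lemma~\ref{lem:screening-consequences}.

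There is no serious obstacle here. The only points needing care are the independence of the blocks, which is what kills all cross terms, and the fact that the fresh blocks are independent of the screening samples. The bound on $C_w$ therefore holds deterministically given the screening outcome.
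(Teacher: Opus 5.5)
Your proof is correct and is essentially the paper's argument. Both compute $\E[S^2]$ exactly from the independence of the two blocks: the paper expands $\mathrm{tr}\bigl((C_w/s+\nu\nu^\top)^2\bigr)$ using the uncentered second-moment matrices, while you center first, and both arrive at the identity $\Var(S)=\mathrm{tr}(C_w^2)/s^2+2\nu^\top C_w\nu/s$. One small correction to your closing remark: that identity is exact, but the first displayed inequality is not in general an equality, since it still uses $\nu^\top C_w\nu\le\|C_w\|_{\mathrm{op}}\|\nu\|_2^2$.
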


\begin{proof}
The two blocks are independent, each with mean $\nu$, so
\[
\E[S]=\langle \nu,\nu\rangle=\|\nu\|_2^2.
\]
Also
\[
\E\bigl[\overline\Psi^{(j)}\overline\Psi^{(j)\top}\bigr]
=
\frac{1}{s}C_w+\nu\nu^\top
\qquad(j=1,2).
\]
Using independence of the two blocks,
\[
\E[S^2]
=
\mathrm{tr}\!\left(\left(\frac{1}{s}C_w+\nu\nu^\top\right)^2\right)
=
\frac{\mathrm{tr}(C_w^2)}{s^2}
+
\frac{2\nu^\top C_w\nu}{s}
+
\|\nu\|_2^4.
\]
Subtracting $\|\nu\|_2^4$ and using
$\nu^\top C_w\nu\le \|C_w\|_{\mathrm{op}}\|\nu\|_2^2$ gives the claim.
\end{proof}

\subsection{Proof of the main upper bound}

\begin{proof}[Proof of Theorem~\ref{thm:main}]
Fix an integer $\Delta\ge 1$ and define
\[
B_\Delta:=4\Delta^2(2\Delta-1)^2,
\qquad
F_\Delta:=10^5\left(\Delta^4+B_\Delta\,(C_\Delta^{\mathrm{ent}})^2\log 2\right),
\]
where $C_\Delta^{\mathrm{ent}}$ is the explicit constant in
\eqref{eq:cent-definition}.  Then
\[
F_\Delta\le (C\Delta)^{C\Delta^5}
\]
for a universal constant $C<\infty$ by~\eqref{eq:cent-asymptotic}, after increasing $C$ if needed to
cover $\Delta=1$.  Now fix a graph $G=(V,E)$ with $\Delta(G)\le \Delta$ and $m=|E|\ge 1$, and fix
$0<\varepsilon\le 1$ and $0<\delta<1$.
By Proposition~\ref{prop:uniformity-characterization}, testing uniformity in this family is exactly
testing whether $w\equiv 1$.

\medskip
\noindent
\emph{Step 0: preprocessing.}
Compute the estimates $(\widetilde\alpha_e)_{e\in E}$ from Lemma~\ref{lem:compute-alpha}.
This step uses no samples from the unknown distribution.

\medskip
\noindent
\emph{Step 1: screening.}
Take
\[
N_0=\left\lceil 2048\Delta^4\log\frac{12m}{\delta}\right\rceil
\]
samples and run the screening procedure from the previous subsection, with failure parameter
$\delta/3$.  With probability at least $1-\delta/3$, Lemma~\ref{lem:screening-consequences} ensures
that screening does not reject under $H_0$, while if screening accepts then $w_e\in[1/2,2]$ for
every edge $e$ and the comparability estimates needed below hold.  If screening rejects, output
$H_1$.

From now on, condition on the event that screening accepts and the conclusion of
Lemma~\ref{lem:screening-consequences} holds.

\medskip
\noindent
\emph{Step 2: mean lower bound.}
By Lemma~\ref{lem:mean-lower-bound}, under $H_1$ one has
\[
\|\nu\|_2^2=\|\E_{D_w}[\Psi]\|_2^2\ge \kappa_\Delta\varepsilon^2,
\qquad
\kappa_\Delta=\frac{1}{C_\Delta^{\mathrm{ent}}\log 2}.
\]
Under $H_0$ we have $\nu=0$.

\medskip
\noindent
\emph{Step 3: one quadratic block.}
Take two fresh independent blocks of size
\[
s:=\left\lceil 400\frac{V_\Delta}{\kappa_\Delta}\frac{\sqrt m}{\varepsilon^2}\right\rceil
\]
and compute the statistic $S$ from Lemma~\ref{lem:quadratic-statistic}.  This explicit choice of
$s$ gives the following two bounds.

Under $H_0$, since $\nu=0$,
\[
\Var(S)\le \frac{mV_\Delta^2}{s^2}
\le \frac{\kappa_\Delta^2\varepsilon^4}{160000}.
\]

Under $H_1$, since $\|\nu\|_2^2\ge \kappa_\Delta\varepsilon^2$ and $m\ge 1$,
\[
\frac{mV_\Delta^2}{s^2}\le \frac{\|\nu\|_2^4}{160000},
\qquad
\frac{2V_\Delta\|\nu\|_2^2}{s}
\le
\frac{\|\nu\|_2^4}{200}.
\]
Therefore
\[
\Var(S)
\le
\frac{mV_\Delta^2}{s^2}+\frac{2V_\Delta\|\nu\|_2^2}{s}
\le
\frac{\|\nu\|_2^4}{190}.
\]

Hence, by Chebyshev's inequality, under $H_0$ the first event below fails with probability at most
$9/160000$, while under $H_1$ the second event below fails with probability at most $9/190$:
\[
H_0 \implies S\le \frac{\kappa_\Delta}{3}\varepsilon^2,
\qquad
H_1 \implies S\ge \frac{2\kappa_\Delta}{3}\varepsilon^2.
\]
Thus the block test which votes for $H_1$ iff
$S>\kappa_\Delta\varepsilon^2/2$ succeeds with probability at least $0.9$.

\medskip
\noindent
\emph{Step 4: confidence boosting.}
Repeat the quadratic-block test independently
\[
L=\left\lceil 20\log(2/\delta)\right\rceil
\]
times and take a majority vote.  Standard Chernoff amplification reduces the error probability of
this stage to at most $\delta/3$.

It remains only to check the displayed sample count.  Since
\[
\frac{V_\Delta}{\kappa_\Delta}
=B_\Delta\,(C_\Delta^{\mathrm{ent}})^2\log 2,
\]
and since $m\ge1$, $0<\varepsilon\le1$, and $\log(2/\delta)\ge\log2$, the ceiling terms are absorbed by
the numerical factor $10^5$ in the definition of $F_\Delta$.  Also
\[
\log\frac{12m}{\delta}\le 2\log\frac{4m}{\delta}
\qquad(m\ge1,\ 0<\delta<1).
\]
Combining the screening stage with the $2sL$ samples used in the quadratic stage proves the
displayed sample bound.

For the running time, the preprocessing consists of $m+1$ calls to a bounded-degree matching-counting
FPTAS with constant relative accuracy depending only on $\Delta$.  Given a sample matching, all
coordinates $A_e$ and $\xi_e$ can be computed in $O_\Delta(m)$ time by scanning the
constant-size neighborhood of each edge.  Thus the preprocessing takes $m^{O_\Delta(1)}$ time, and
processing the displayed number of samples takes
\[
O_\Delta\!\left(
mF_\Delta\left(
\log\frac{4m}{\delta}
+
\frac{\sqrt m\,\log(2/\delta)}{\varepsilon^2}
\right)
\right)
\]
time.
\end{proof}

\begin{remark}
The proof shows the qualitative separation
\[
\text{learning sample complexity } \asymp \frac{m}{\varepsilon^2}
\qquad\text{versus}\qquad
\text{testing sample complexity } \asymp_\Delta \frac{\sqrt m}{\varepsilon^2}
\]
for the bounded-degree Gibbs family over \emph{all} matchings.

The genuinely non-elementary inputs are isolated in Proposition~\ref{prop:imported-tools} and in the
approximate counting routine for bounded-degree graphs used in Lemma~\ref{lem:compute-alpha}.  Everything
else in the testing argument is elementary.
\end{remark}

\section{Lower Bound}
\label{sec:lower-bound}

\subsection{Proof setup}

We now introduce the graph family used to prove Theorem~\ref{thm:main-lower}.  For a graph $G$,
write $U_G$ for the uniform distribution on $\Mcal(G)$.

For integers $t,q\ge 1$, let
\[
H_q:=K_{2q,2q},
\qquad
G_{t,q}:=\bigsqcup_{r=1}^t H_q^{(r)},
\]
where the $H_q^{(r)}$ are disjoint copies of $H_q$.
Then
\[
|V(G_{t,q})|=4tq,
\qquad
|E(G_{t,q})|=4tq^2.
\]
The proof builds a finite family of checkerboard perturbations on these components and bounds the
second moment of the corresponding mixture over random signs.

\begin{proposition}[Lower bound on the construction family]
\label{prop:main-Gtq}
There are absolute constants $c_0>0$ and $\varepsilon_0>0$ such that for all integers $t,q\ge 1$
and all $0<\varepsilon\le \varepsilon_0$, any tester that, given i.i.d.\ samples from an unknown
Gibbs distribution over matchings of $G_{t,q}$, distinguishes
\[
H_0:\ D_w=U_{G_{t,q}}
\qquad\text{from}\qquad
H_1:\ \TV(D_w,U_{G_{t,q}})\ge \varepsilon
\]
with success probability at least $2/3$ must use at least
\[
c_0\frac{q\sqrt t}{\varepsilon^2}
=c_0\frac{\sqrt{|E(G_{t,q})|}}{2\varepsilon^2}
\]
samples.
\end{proposition}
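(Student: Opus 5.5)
The plan is Le Cam's two-point method against a mixture, following the technical overview. For a sign collection $\boldsymbol\sigma=(\sigma^{(1)},\dots,\sigma^{(t)})\in(\{\pm1\}^{q\times q})^t$, set $w^{(\boldsymbol\sigma)}_e=\exp(\gamma\mathsf H_{\sigma^{(r)}}(e))$ on the $r$th copy. Choose $\gamma=c_1\varepsilon/\sqrt{tq}$ with $c_1$ an absolute constant. The proof then has two parts. First, every $D_{\boldsymbol\sigma}$ satisfies $\TV(D_{\boldsymbol\sigma},U)\ge\varepsilon$. Second, if $k\le c_0 q\sqrt t/\varepsilon^2$, then $\chi^2(\overline D_k,U^{\otimes k})\le 1/10$. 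This bounds the total variation between the $k$-sample mixture and $U^{\otimes k}$ by a small constant, so no tester can succeed with probability $2/3$.

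Since $D_{\boldsymbol\sigma}$ and $U$ factor over components, $L_{\boldsymbol\sigma}=\prod_r e^{\gamma T^{(r)}_{\sigma^{(r)}}}/Z_r$ with $Z_r=\E_U e^{\gamma T_{\sigma^{(r)}}}$. Everything therefore reduces to one component $K_{2q,2q}$ and two lemmas.

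\emph{Covariance lemma.} Under $U$, $\E_U T_\sigma=0$ and $\Cov_U(T_\sigma,T_\tau)=4\beta_q\sum_{u,v}\sigma_{uv}\tau_{uv}$ with $\beta_q\asymp1/q$. The plan is to use the symmetry group of $K_{2q,2q}$, swapping the two vertices inside a row pair or a column pair. Swapping a row pair flips the sign of $B$ in every block of that row, so the first moment vanishes. The same swaps make the off-diagonal block products average to zero. Only the within-block second moments survive, and these are expressed through edge-marginal and pair probabilities of the uniform matching on $K_{2q,2q}$. Showing they are $\Theta(1/q)$ should follow from elementary counting. The key fact is that each vertex is matched with probability bounded below, and by symmetry a given edge is used with probability $\Theta(1/q)$.

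\emph{Cumulant lemma.} For $|s|\le c/\sqrt q$, uniformly in $\sigma$,
\[
\log\E_U e^{sT_\sigma}=\tfrac{s^2}{2}\Var_U(T_\sigma)+O(q s^4).
\]
This is the step I expect to be the main obstacle, because $U$ is not a product measure. I would first expose the matching one row pair at a time and write $T_\sigma=\sum_{j=1}^q D_j$. Each $D_j$ is bounded by $O(1)$: a row pair has two vertices, so it carries at most two edges. By the row-swap symmetry, $D_j$ has conditional mean zero given the past, so the $D_j$ form a martingale difference sequence. Averaging over the $2^q$ row swaps, each increment becomes a random sign times its magnitude. This gives $\E e^{sT}=\E\prod_j\cosh(sD_j)$, with the conditional structure handled step by step. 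Expanding $\log\cosh x=x^2/2+O(x^4)$ then yields the quadratic term $\frac{s^2}{2}\E\sum_j D_j^2$ plus an $O(qs^4)$ error. A remaining step is to show that $\sum_j D_j^2$ concentrates around its mean to $O(\sqrt q)$ in the exponential sense; I would attempt this through a second martingale argument or the bounded-difference inequality. If this fails, I would fall back to obtaining only the upper and lower bounds needed, and in the right directions.

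Given the two lemmas, I would finish as follows.
\begin{enumerate}
\item \emph{Far from uniform.} Using $\log L_{\boldsymbol\sigma}=\gamma T_{\boldsymbol\sigma}-\sum_r\log Z_r$, the variance $\Var_U(T_{\boldsymbol\sigma})=\Theta(tq)$ gives $\chi^2$ and $\E_U|L-1|$ of order $\gamma\sqrt{tq}\asymp\varepsilon$. To get a genuine lower bound on $\E_U|L-1|$, I would use Paley--Zygmund on the linear term, with the fourth moment controlled by the cumulant lemma. This gives $\TV\ge\varepsilon$ for a suitable $c_1$ once $\varepsilon\le\varepsilon_0$.
\item \emph{Indistinguishability.} Apply the cumulant lemma at $s=\gamma$, $s=2\gamma$, and $s=\gamma(\mathsf H_\sigma+\mathsf H_\tau)$; the last is handled linearly as $T_\sigma+T_\tau$ with the same proof. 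This yields $\log\E_U[L_{\boldsymbol\sigma}L_{\boldsymbol\tau}]=4\beta_q\gamma^2 S+O(tq\gamma^4)$, where $S=\sum_{r,u,v}\sigma^{(r)}_{uv}\tau^{(r)}_{uv}$ is a sum of $tq^2$ independent Rademacher signs.
\item \emph{Second moment.} Then $\E_{U^{\otimes k}}\overline L_k^2=\E\exp\bigl(4k\beta_q\gamma^2S+O(ktq\gamma^4)\bigr)\le\exp\bigl(8k^2\beta_q^2\gamma^4tq^2+O(ktq\gamma^4)\bigr)$, by $\E e^{\lambda S}\le e^{\lambda^2tq^2/2}$.
\item \emph{Conclusion.} With $\beta_q\asymp1/q$ and $\gamma^2\asymp\varepsilon^2/(tq)$, the exponent is $O(k^2\varepsilon^4/(q^2t)+k\varepsilon^4/(tq))$. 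This is small for $k\le c_0q\sqrt t/\varepsilon^2$, since the second term is at most the first plus $o(1)$. The constraint $|s|\le c/\sqrt q$ holds because $\gamma\le\varepsilon_0/\sqrt{q}$.
\end{enumerate}
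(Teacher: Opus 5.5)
Your overall architecture coincides with the paper's. You use the same checkerboard weights, the same mixture $\chi^2$ computation, the same row-pair martingale with the exact $\cosh$ identity, and Paley--Zygmund for farness. Deriving the covariance from row- and column-pair swaps is a valid and slightly simpler substitute for the paper's $k$-matching isotropy lemma. Restricting to $|s|\le c/\sqrt q$ also suffices for the application.

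The gap is the step you defer: the concentration of $Q=\sum_j D_j^2$. It is not optional.
\begin{itemize}
\item In the overlap bound you need an \emph{upper} bound $\log\E_U e^{\gamma(T_\sigma+T_\tau)}\le\frac{\gamma^2}{2}\Var_U(T_\sigma+T_\tau)+O(q\gamma^4)$ per copy. The matching lower bounds on $\log\E_U e^{\gamma T_\sigma}$ follow from Jensen, and farness only needs crude bounds, so this upper bound is where the difficulty sits.
\item By the $\cosh$ identity, the fourth cumulant of $T$ is $3\Var(Q)-2\E\sum_j D_j^4$. Since $\E\sum_j D_j^4=O(q)$, an $O(qs^4)$ error is essentially equivalent to $\Var(Q)=O(q)$.
\item Your fallback only knows $Q\le Cq$, which gives a per-copy error $O(q^2\gamma^4)$. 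After summing over $t$ copies and multiplying by $k\asymp q\sqrt t/\varepsilon^2$, the exponent picks up a term of order $\varepsilon^2 q/\sqrt t$.
\item That term is unbounded when $q\gg\sqrt t/\varepsilon^2$, e.g.\ $t=1$. This is exactly the dense regime the proposition is meant to cover.
\end{itemize}

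Neither tool you suggest closes this as stated.
\begin{itemize}
\item The bounded-difference inequality needs independent coordinates, which a uniform matching of $K_{2q,2q}$ does not have.
\item The Doob martingale $M_s=\E[Q\mid\mathcal F_s]$ is the right object. But a priori its increments are of order $q$, because revealing one row pair changes the conditional law of all $q-s$ future rows. Azuma gives nothing without a further idea.
\end{itemize}

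The paper supplies that idea in three pieces:
\begin{enumerate}
\item By orthogonality of martingale differences, $\E[\sum_{u>s}D_u^2\mid\mathcal F_s]=\E[S_s^2\mid\mathcal F_s]$ with $S_s=\sum_{u>s}D_u$.
\item On the residual $K_{a_s,b_s}$ this equals the explicit expression $(\pi_1+\pi_2)\|G_s\|_{\F}^2-\pi_2\|r(G_s)\|_2^2$. Here $\pi_1,\pi_2$ are the single-edge and disjoint-pair occupation probabilities, and $G_s$ is the restricted matrix, whose column sums vanish.
\item The smoothness estimates $|\pi_1(a,b')-\pi_1(a,b)|\le C/(ab)$ and $|\pi_2(a,b')-\pi_2(a,b)|\le C/(ab^2)$ hold for $|b-b'|\le 2$. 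They are proved by deleting a right vertex and using stochastic monotonicity of the matching size.
\end{enumerate}
Together these show every increment of $M_s$ is $O(1)$, so Azuma gives $\log\E e^{\lambda(Q-\E Q)}\le Cq\lambda^2$, which is the missing concentration.
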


\subsection{The checkerboard family on \texorpdfstring{$K_{2q,2q}$}{K2q,2q}}

\begin{figure}[t!]
\centering
\makebox[\textwidth][c]{%
\begin{tikzpicture}[
    >=Latex,
    every node/.style={font=\small},
    block/.style={draw, minimum width=1.45cm, minimum height=1.1cm, align=center},
    hi/.style={block, fill=gray!15, very thick},
    vtx/.style={circle, draw, inner sep=1pt, minimum size=4.8mm},
    expl/.style={font=\footnotesize, align=center}
]

\node[font=\bfseries] at (0.2,1.9) {(a)};
\node[anchor=west] at (0.6,1.9)
{$\mathsf{H}_\sigma=(\sigma_{uv}B)_{u,v\in[q]}$,
\quad $B=\begin{pmatrix}1&-1\\[-0.5mm]-1&1\end{pmatrix}$};

\node at (1.8,1.2) {$C_1$};
\node at (4.0,1.2) {$C_v$};
\node at (6.2,1.2) {$C_q$};

\node at (0.5,0.45) {$R_1$};
\node at (0.5,-1.30) {$R_u$};
\node at (0.5,-3.05) {$R_q$};

\node[block] (a11) at (1.8,0.45) {$\sigma_{11}B$};
\node[block] (a1v) at (4.0,0.45) {$\sigma_{1v}B$};
\node[block] (a1q) at (6.2,0.45) {$\sigma_{1q}B$};

\node[block] (au1) at (1.8,-1.30) {$\sigma_{u1}B$};
\node[hi]    (auv) at (4.0,-1.30) {$\sigma_{uv}B$};
\node[block] (auq) at (6.2,-1.30) {$\sigma_{uq}B$};

\node[block] (aq1) at (1.8,-3.05) {$\sigma_{q1}B$};
\node[block] (aqv) at (4.0,-3.05) {$\sigma_{qv}B$};
\node[block] (aqq) at (6.2,-3.05) {$\sigma_{qq}B$};

\node at (2.9,0.45) {$\cdots$};
\node at (5.1,0.45) {$\cdots$};

\node at (2.9,-1.30) {$\cdots$};
\node at (5.1,-1.30) {$\cdots$};

\node at (2.9,-3.05) {$\cdots$};
\node at (5.1,-3.05) {$\cdots$};

\node at (1.8,-0.425) {$\vdots$};
\node at (4.0,-0.425) {$\vdots$};
\node at (6.2,-0.425) {$\vdots$};

\node at (1.8,-2.175) {$\vdots$};
\node at (4.0,-2.175) {$\vdots$};
\node at (6.2,-2.175) {$\vdots$};

\draw[->, thick] (7.20,-1.30) -- (7.95,-1.30);

\node[font=\bfseries] at (8.6,1.9) {(b)};
\node[anchor=west] at (9.0,1.9) {zoom into block $(u,v)$};

\node[vtx] (up) at (9.7,0.2) {$u^+$};
\node[vtx] (um) at (9.7,-1.4) {$u^-$};

\node[vtx] (vp) at (12.2,0.2) {$v^+$};
\node[vtx] (vm) at (12.2,-1.4) {$v^-$};

\draw (up) -- node[above] {$\sigma_{uv}$} (vp);
\draw (up) -- node[pos=.45, right] {$-\sigma_{uv}$} (vm);
\draw (um) -- node[pos=.45, left] {$-\sigma_{uv}$} (vp);
\draw (um) -- node[below] {$\sigma_{uv}$} (vm);

\draw[decorate, decoration={brace, amplitude=4pt, mirror}]
    (9.15,0.55) -- (9.15,-1.75)
    node[midway, left=5pt] {$R_u$};

\draw[decorate, decoration={brace, amplitude=4pt}]
    (12.75,0.55) -- (12.75,-1.75)
    node[midway, right=5pt] {$C_v$};

\node[expl] at (10.95,-2.4)
{$\mathsf{H}_\sigma(u^\alpha,v^\beta)=\sigma_{uv}\alpha\beta,\qquad \alpha,\beta\in\{+1,-1\}$};

\end{tikzpicture}
}%
\caption{Checkerboard perturbation on $H_q=K_{2q,2q}$.  
The left vertices are partitioned into row-pairs $R_u=\{u^+,u^-\}$ and the right vertices into column-pairs $C_v=\{v^+,v^-\}$.  
Each $(u,v)$-block carries the signed checkerboard pattern $\sigma_{uv}B$, where
$B=\begin{pmatrix}1&-1\\[-0.5mm]-1&1\end{pmatrix}$.  
In particular, every row sum and every column sum of $\mathsf{H}_\sigma$ is zero.}
\label{fig:checkerboard-family}
\end{figure}

Fix $q \ge 1$ and write $H_q = K_{2q,2q}$. Partition the left vertices into row-pairs
\[
R_u := \{u^+,u^-\}, \qquad u \in [q],
\]
and the right vertices into column-pairs
\[
C_v := \{v^+,v^-\}, \qquad v \in [q].
\]
Let
\[
B := \begin{pmatrix}1 & -1 \\ -1 & 1\end{pmatrix}.
\]
For each sign matrix $\sigma = (\sigma_{uv})_{u,v\in[q]} \in \{\pm 1\}^{q\times q}$, define the $2q\times 2q$ block matrix
\[
\mathsf{H}_\sigma := (\sigma_{uv} B)_{u,v\in[q]}.
\]
Equivalently,
\[
\mathsf{H}_\sigma(u^\alpha,v^\beta) = \sigma_{uv}\,\alpha\beta,
\qquad
\alpha,\beta \in \{+1,-1\}.
\]
This checkerboard construction is illustrated in Figure~\ref{fig:checkerboard-family}.
For any matrix $H \in \R^{2q\times 2q}$ and matching $M$ of $H_q$, set
\[
T_H(M) := \sum_{e\in M} H_e.
\]
We abbreviate
\[
T_\sigma := T_{\mathsf{H}_\sigma}.
\]
Let $U_q$ denote the uniform distribution on all matchings of $H_q$.

We begin with a few elementary facts.

\begin{lemma}\label{lem:block-basic}
For every $\sigma,\tau \in \{\pm 1\}^{q\times q}$:
\begin{enumerate}[label=(\roman*)]
\item every row sum and every column sum of $\mathsf{H}_\sigma$ is $0$;
\item
\[
\ip{\mathsf{H}_\sigma}{\mathsf{H}_\tau} = 4 \sum_{u,v=1}^q \sigma_{uv}\tau_{uv};
\]
\item in particular,
\[
\norm{\mathsf{H}_\sigma}_\F^2 = 4q^2.
\]
\end{enumerate}
Moreover, for every block matrix
\[
\mathsf{H} = \sum_{u,v=1}^q a_{uv} B_{uv}
\]
with $B_{uv}$ equal to $B$ in block $(u,v)$ and $0$ elsewhere, the distribution of $T_{\mathsf{H}}$ under $U_q$ is symmetric about $0$.
\end{lemma}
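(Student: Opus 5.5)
Parts (i)–(iii) are direct computations from the formula $\mathsf{H}_\sigma(u^\alpha,v^\beta)=\sigma_{uv}\alpha\beta$; the symmetry claim uses a measure-preserving relabeling of vertices.

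For (i), fix a left vertex $u^\alpha$. Its row sum splits over the column-pairs:
\[
\sum_{v}\sum_{\beta\in\{\pm1\}}\sigma_{uv}\alpha\beta=\sum_v\sigma_{uv}\alpha(1-1)=0,
\]
and column sums vanish in the same way. For (ii), the Frobenius inner product also decomposes blockwise:
\[
\ip{\mathsf{H}_\sigma}{\mathsf{H}_\tau}=\sum_{u,v}\sigma_{uv}\tau_{uv}\sum_{\alpha,\beta}(\alpha\beta)^2=4\sum_{u,v}\sigma_{uv}\tau_{uv}.
\]
Setting $\tau=\sigma$ gives (iii), since $\sigma_{uv}^2=1$.

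For the symmetry statement, I use a graph automorphism of $H_q$ under which $T_{\mathsf{H}}$ changes sign. Let $\pi$ swap $u^+\leftrightarrow u^-$ in a single row-pair, say $R_1$, and fix every other vertex. This is an automorphism of $K_{2q,2q}$, so $M\mapsto\pi(M)$ is a bijection of $\Mcal(H_q)$ and hence preserves $U_q$.

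However, swapping only $R_1$ negates the entries in row-pair $1$ and leaves the rest unchanged, which is not a global sign flip. The remedy is to swap $u^+\leftrightarrow u^-$ in \emph{every} row-pair $R_u$ simultaneously. Call this map $\pi$ again; it is still an automorphism. Every entry satisfies $B_{uv}(\pi(u^\alpha),v^\beta)=-B_{uv}(u^\alpha,v^\beta)$, so $\mathsf{H}(\pi(e))=-\mathsf{H}(e)$ for every edge $e$. Consequently
\[
T_{\mathsf{H}}(\pi(M))=-T_{\mathsf{H}}(M).
\]
Since $\pi(M)\sim U_q$ whenever $M\sim U_q$, the random variables $T_{\mathsf{H}}$ and $-T_{\mathsf{H}}$ have the same distribution.

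The only real subtlety is this choice of automorphism: one must swap all row-pairs at once, not a single one. Apart from that, each step is a one-line verification.
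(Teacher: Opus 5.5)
Your proof is correct and takes essentially the same route as the paper: parts (i)--(iii) are the same blockwise computations. For the symmetry claim, the paper swaps the two vertices inside every column-pair $C_v$ simultaneously rather than inside every row-pair $R_u$; this is the mirror-image automorphism and negates every checkerboard block in exactly the same way.
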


\begin{proof}
The first two claims are immediate from the block form. For the symmetry statement, swap the two vertices inside each column-pair $C_v$. This is an automorphism of $H_q$, it preserves the uniform distribution on matchings, and it flips the sign of every checkerboard block. Hence $T_{\mathsf{H}}$ is sent to $-T_{\mathsf{H}}$ while the distribution is unchanged.
\end{proof}

The next lemma gives an exact covariance formula on a complete bipartite graph.

\begin{lemma}[Isotropic covariance on $K_{a,b}$]\label{lem:isotropic-Kab}
Let $a,b\ge 2$ and $0\le k\le \min\{a,b\}$. Let $G = K_{a,b}$, let $U_k$ be the uniform distribution on the set of $k$-matchings of $G$, and for $H \in \R^{a\times b}$ define
\[
T_H(M) := \sum_{e\in M} H_e.
\]
Assume that $H\one = 0$ and $\one^\top H = 0$, and likewise for $H'$. Then
\[
\E_{U_k}[T_H] = 0,
\]
and
\[
\E_{U_k}[T_H T_{H'}]
=
\left(
\frac{k}{ab}
+
\frac{k(k-1)}{ab(a-1)(b-1)}
\right)
\ip{H}{H'}.
\]
\end{lemma}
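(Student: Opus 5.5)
We compute both moments by linearity over edges and pairs of edges, using only the edge-marginal and pair-marginal probabilities under $U_k$. The symmetry group $S_a\times S_b$ acts transitively on the edges of $K_{a,b}$, and it also acts transitively on ordered pairs of disjoint edges. Since $U_k$ is invariant under this group, every edge lies in $M$ with the same probability $p_1$. Every ordered pair of disjoint edges $(e,f)$ lies in $M$ with the same probability $p_2$. Edges sharing an endpoint never co-occur. Double counting gives
\[
p_1=\frac{k}{ab}
\qquad\text{and}\qquad
p_2=\frac{k(k-1)}{ab(a-1)(b-1)}.
\]
Here $k$ is the number of edges in $M$, there are $ab$ edges, and $M$ contains $k(k-1)$ ordered pairs out of $ab(a-1)(b-1)$ ordered disjoint pairs. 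The first moment is then $\E[T_H]=p_1\sum_e H_e$, and $\sum_e H_e=\one^\top H\one=0$.

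For the second moment, expand $T_HT_{H'}=\sum_{e,f\in M}H_eH'_f$ and split into three classes. The diagonal terms $e=f$ contribute $p_1\ip{H}{H'}$. Pairs sharing an endpoint contribute nothing. Disjoint pairs $e=(i,j)$, $f=(i',j')$ with $i\ne i'$ and $j\ne j'$ contribute
\[
p_2\sum_{i\ne i',\,j\ne j'}H_{ij}H'_{i'j'}.
\]
The main step is to evaluate this restricted sum by inclusion–exclusion over the constraints $i=i'$ and $j=j'$:
\[
\sum_{i\ne i',\,j\ne j'}
=\sum_{\text{all}}-\sum_{i=i'}-\sum_{j=j'}+\sum_{i=i',\,j=j'}.
\]
The unrestricted sum equals $(\one^\top H\one)(\one^\top H'\one)=0$. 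The sum over $i=i'$ is $\sum_i (H\one)_i(H'\one)_i=0$, and the sum over $j=j'$ is $\sum_j(\one^\top H)_j(\one^\top H')_j=0$. The last term is $\ip{H}{H'}$, so the disjoint-pair sum equals $+\ip{H}{H'}$.

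Adding the three classes gives $(p_1+p_2)\ip{H}{H'}$, which is the stated formula. The main point needing care is the sign bookkeeping in the inclusion–exclusion, and checking that the zero row and column sums kill every term except the doubly-restricted diagonal one. For $k\le 1$ the formula still holds because $p_2=0$.
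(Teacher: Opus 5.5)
Your proposal is correct and follows essentially the same route as the paper: edge- and disjoint-pair transitivity give $p_1$ and $p_2$, and the vanishing row and column sums force the disjoint-pair sum to equal $\ip{H}{H'}$. The only difference is bookkeeping. You use inclusion--exclusion over the events $\{i=i'\}$ and $\{j=j'\}$, whereas the paper partitions the full (zero) sum into four disjoint classes and evaluates the same-row and same-column classes as $-\ip{H}{H'}$ each.
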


\begin{proof}
By edge-transitivity,
\[
\E_{U_k}[T_H]
=
\sum_{i,j} H_{ij} \Pbb_{U_k}[(i,j)\in M]
=
\frac{k}{ab} \sum_{i,j} H_{ij}
=
0.
\]
For the second moment, let
\[
p_1 := \Pbb_{U_k}[(i,j)\in M] = \frac{k}{ab},
\qquad
p_2 := \Pbb_{U_k}[(i,j),(i',j')\in M] = \frac{k(k-1)}{ab(a-1)(b-1)}
\]
for disjoint edges $(i,j)$ and $(i',j')$. Then
\[
\E_{U_k}[T_H T_{H'}]
=
p_1 \sum_{i,j} H_{ij}H'_{ij}
+
p_2 \sum_{\substack{i\ne i' \\ j\ne j'}} H_{ij} H'_{i'j'}.
\]
It remains to show
\[
\sum_{\substack{i\ne i' \\ j\ne j'}} H_{ij} H'_{i'j'} = \ip{H}{H'}.
\]
Let $A := \ip{H}{H'}$. Since all row sums and all column sums vanish,
\begin{align*}
0
&= \sum_{i,i',j,j'} H_{ij}H'_{i'j'} \\
&= A
+ \sum_i \sum_{j\ne j'} H_{ij}H'_{ij'}
+ \sum_j \sum_{i\ne i'} H_{ij}H'_{i'j}
+ \sum_{\substack{i\ne i' \\ j\ne j'}} H_{ij}H'_{i'j'} \\
&= A - A - A + \sum_{\substack{i\ne i' \\ j\ne j'}} H_{ij}H'_{i'j'},
\end{align*}
which proves the identity.
\end{proof}

For the unconditional uniform distribution on all matchings of $H_q$, let $K_q := |M|$ under $U_q$ and set
\[
\beta_q := \E\left[ \frac{K_q}{4q^2} + \frac{K_q(K_q-1)}{4q^2(2q-1)^2} \right].
\]
Then Lemma~\ref{lem:isotropic-Kab} and Lemma~\ref{lem:block-basic} imply
\[
\Cov_{U_q}(T_\sigma,T_\tau) = 4\beta_q \sum_{u,v=1}^q \sigma_{uv}\tau_{uv}.
\]
In particular,
\[
\Var_{U_q}(T_\sigma) = 4\beta_q q^2.
\]

\begin{lemma}\label{lem:beta-bounds}
For every $q \ge 1$,
\[
\frac{1}{8q} \le \beta_q \le \frac{1}{q}.
\]
Consequently,
\[
\frac{q}{2} \le \Var_{U_q}(T_\sigma) \le 4q
\qquad\text{for every } \sigma \in \{\pm1\}^{q\times q}.
\]
\end{lemma}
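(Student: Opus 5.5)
We need bounds on $\beta_q=\E[K_q/(4q^2)+K_q(K_q-1)/(4q^2(2q-1)^2)]$, where $K_q$ is the size of a uniform matching of $K_{2q,2q}$. Since $0\le K_q\le 2q$, the upper bound is immediate. The first term is at most $2q/(4q^2)=1/(2q)$. The second term is at most $2q(2q-1)/(4q^2(2q-1)^2)=1/(2q(2q-1))\le 1/(2q)$. Hence $\beta_q\le 1/q$.

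For the lower bound it suffices to show $\E[K_q]\ge q/2$, since then the first term alone gives $\beta_q\ge \E[K_q]/(4q^2)\ge 1/(8q)$. The second term is nonnegative, because $K_q(K_q-1)\ge 0$ for integer $K_q$. The case $q=1$ is a quick sanity check: $K_{2,2}$ has $1+4+2=7$ matchings, so $\E[K_1]=(4+4)/7=8/7\ge 1/2$.

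For general $q$, write $n=2q$ and let $N_k=\binom{n}{k}^2k!$ be the number of $k$-matchings of $K_{n,n}$. Then
\[
\frac{N_{k+1}}{N_k}=\frac{(n-k)^2}{k+1}.
\]
This ratio is at least $1$ whenever $(n-k)^2\ge k+1$. For $k\le n/2-1$ we have $(n-k)^2\ge n^2/4$ and $k+1\le n/2$, so the ratio is at least $n/2\ge 1$. Thus $N_k$ is nondecreasing on $0\le k\le n/2=q$, which gives
\[
\sum_{k<q/2}N_k\le \sum_{q/2\le k\le q}N_k .
\]
The reason is that each $k<q/2$ can be injected to $k+\lceil q/2\rceil\le q$ with $N_k\le N_{k+\lceil q/2\rceil}$. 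It follows that $\Pbb(K_q\ge q/2)\ge 1/2$, so $\E[K_q]\ge q/4$. This is off by a factor of $2$.

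To recover the factor, sharpen the tail estimate using the growth of the ratio. For $k\le q/2$, the ratio $(n-k)^2/(k+1)\ge (3q/2)^2/(q/2+1)$, which is large: it is at least $3$ once $q\ge 2$. Comparing $N_k$ with $N_{k+j}$ for $j\ge1$ shows that the mass below $q/2$ is geometrically dominated by the mass just above it. So $\Pbb(K_q<q/2)\le 1/2$ with room to spare, and in fact the mass concentrates where the ratio is near $1$, namely around $k\approx n-\sqrt n$. A cleaner route is the direct mode argument. Since the ratio is $\ge1$ for all $k\le q$, the distribution of $K_q$ puts at least as much mass on $[q/2,q]$ as on $[0,q/2)$, and additionally on $(q,2q]$ up to the mode. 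So I would show $\Pbb(K_q\ge q)\ge 1/2$ via the same injection $k\mapsto k+q$ for $k<q$. This works because $N_{k+1}/N_k\ge1$ holds for all $k\le n-\sqrt n-1$, which covers $k+q\le 2q-\sqrt{2q}$ for $q$ moderately large. The small $q$ are handled by direct computation. This gives $\E[K_q]\ge q/2$.

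Finally, $\Var_{U_q}(T_\sigma)=4\beta_q q^2$ gives $q/2\le \Var\le 4q$. The main obstacle is the careful monotonicity/injection step for small $q$ near the boundary where $N_{k+1}/N_k$ drops below $1$.
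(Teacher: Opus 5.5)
\textbf{Where the proof has a gap.} Your upper bound is fine. So is the reduction of the lower bound to $\E[K_q]\ge q/2$, which works because the second term of $\beta_q$ is nonnegative. The gap is in the final step, the claim $\Pbb(K_q\ge q)\ge 1/2$ via the injection $k\mapsto k+q$.

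This injection sends $\{0,\dots,q-1\}$ onto $\{q,\dots,2q-1\}$. You justify $N_k\le N_{k+q}$ by the monotonicity of $N_j$ for $j\le n-\sqrt n-1$. But for every $k\in(q-\sqrt{2q},\,q-1]$ the target $k+q$ lies beyond $2q-\sqrt{2q}$, and there $N_j$ is actually decreasing. For example, $N_{2q}/N_{2q-1}=1/(2q)$ and $N_{2q-1}/N_{2q-2}=4/(2q-1)<1$ for $q\ge 3$. So your cited monotonicity does not give those comparisons.

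This is not a small-$q$ boundary effect, as your last sentence suggests. It occurs for every $q$, so ``direct computation for small $q$'' cannot close it. The intermediate threshold-$q/2$ route cannot work either: it only yields $\E[K_q]\ge \frac q2\,\Pbb(K_q\ge q/2)$, which is below $q/2$ unless that probability equals $1$.

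\textbf{How to close it.} You already have the fix but discarded its strength. You showed $N_{k+1}/N_k\ge n/2=q$ for $0\le k\le q-1$, yet only used that the ratio is $\ge 1$. With the full bound, $N_q\ge q^{\,q-k}N_k$ for every $k<q$. Hence for $q\ge 2$,
\[
\sum_{k=0}^{q-1}N_k\;\le\; N_q\sum_{j\ge 1}q^{-j}\;=\;\frac{N_q}{q-1}\;\le\; N_q .
\]
This gives $\Pbb(K_q\ge q)\ge \Pbb(K_q=q)\ge \Pbb(K_q<q)$, so $\Pbb(K_q\ge q)\ge 1/2$ and $\E[K_q]\ge q/2$. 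Combined with your $q=1$ computation, this yields $\beta_q\ge 1/(8q)$.

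This is exactly the paper's argument. The whole lower tail is geometrically dominated by the single term $N_q$, so nothing about $N_j$ for $j>q$ is needed, and in particular you never need to know where the mode is.
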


\begin{proof}
For $q=1$, direct computation gives $\beta_1=3/7$, so the claimed bounds hold.
Assume henceforth that $q\ge 2$.

The number of $k$-matchings in $H_q = K_{2q,2q}$ is
\[
N_k = \binom{2q}{k}^2 k!,
\qquad 0 \le k \le 2q.
\]
Moreover,
\[
\frac{N_{k+1}}{N_k} = \frac{(2q-k)^2}{k+1}.
\]
For $0 \le k \le q-1$ this ratio is at least
\[
\frac{(q+1)^2}{q} > q.
\]
Thus $N_j \le q^{-(q-1-j)}N_{q-1}$ for $0\le j\le q-1$, so
\[
\sum_{k=0}^{q-1}N_k
\le
N_{q-1}\sum_{m=0}^{q-1}q^{-m}
< 2N_{q-1}
\le qN_{q-1}.
\]
Hence $N_q > q N_{q-1} > \sum_{k=0}^{q-1} N_k$, and therefore
\[
\Pbb(K_q \ge q) > \frac12.
\]
For the lower bound,
\[
\beta_q \ge \E\left[ \frac{K_q}{4q^2} \one_{\{K_q\ge q\}} \right] \ge \frac{1}{8q}.
\]
For the upper bound, $K_q \le 2q$, so
\[
\beta_q
\le
\frac{\E[K_q]}{4q^2} + \frac{\E[K_q(K_q-1)]}{4q^2(2q-1)^2}
\le
\frac{2q}{4q^2} + \frac{2q(2q-1)}{4q^2(2q-1)^2}
<
\frac1q.
\]
The variance bounds are immediate from $\Var(T_\sigma) = 4\beta_q q^2$.
\end{proof}

\subsection{A quadratic log-MGF expansion on \texorpdfstring{$K_{2q,2q}$}{K2q,2q}}

We prove a local quadratic expansion for the logarithm of the moment-generating function for
checkerboard scores on $H_q$.

\subsubsection{Row-pair martingales and an exact \texorpdfstring{$\log\cosh$}{log cosh} representation}

Fix a block matrix
\[
\mathsf{H} = \sum_{u,v=1}^q a_{uv} B_{uv}
\qquad\text{with}\qquad |a_{uv}| \le 2.
\]
Expose a matching of $H_q$ row-pair by row-pair. Let $\mathcal{F}_u$ be the sigma-field generated by the restriction of the matching to $R_1,\dots,R_u$, and let $\mathcal{F}_0$ be the trivial sigma-field. For $u \in [q]$, let
\[
Y_u(M) := \sum_{\substack{e\in M \\ e\text{ incident to }R_u}} \mathsf{H}_e.
\]
Then
\[
T_{\mathsf{H}} = \sum_{u=1}^q Y_u.
\]

\begin{lemma}\label{lem:rowpair-martingale}
The sequence $(Y_u,\mathcal{F}_u)_{u=1}^q$ is a martingale-difference sequence under $U_q$:
\[
\E[Y_u \mid \mathcal{F}_{u-1}] = 0.
\]
Moreover,
\[
|Y_u| \le 2 \max_v |a_{uv}| \le 4
\qquad\text{almost surely.}
\]
\end{lemma}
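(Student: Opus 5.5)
The plan is to prove the martingale-difference property with a symmetry argument that works conditionally on $\mathcal F_{u-1}$, and then to read off the almost-sure bound directly from the block structure.

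\emph{Martingale-difference property.} Fix $u\in[q]$ and condition on $\mathcal F_{u-1}$, that is, on the restriction of $M$ to the edges incident to $R_1,\dots,R_{u-1}$. Under $U_q$, the conditional law of $M$ given this restriction is uniform over all matchings of $H_q$ that extend it. Now let $\pi_u$ be the map that swaps the two vertices $u^+$ and $u^-$ of $R_u$ and fixes every other vertex. Three facts follow.
\begin{enumerate}[label=(\alph*)]
\item $\pi_u$ is an automorphism of $K_{2q,2q}$.
\item $\pi_u$ fixes every edge incident to $R_1,\dots,R_{u-1}$, so it maps the set of extensions of any given $\mathcal F_{u-1}$-configuration bijectively onto itself. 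It therefore preserves the conditional uniform law.
\item For an edge $u^\alpha v^\beta$, one has $\mathsf H(u^{-\alpha}v^\beta)=a_{uv}(-\alpha)\beta=-\mathsf H(u^\alpha v^\beta)$, so $Y_u(\pi_u M)=-Y_u(M)$.
\end{enumerate}
Hence, conditionally on $\mathcal F_{u-1}$, the variable $Y_u$ has the same law as $-Y_u$. Since $Y_u$ is bounded, this gives $\E[Y_u\mid\mathcal F_{u-1}]=0$. Note also that $Y_u$ is $\mathcal F_u$-measurable, because it depends only on the edges incident to $R_u$.

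\emph{Bound.} The matching $M$ contains at most two edges incident to $R_u$: at most one at $u^+$ and at most one at $u^-$. Each such edge $u^\alpha v^\beta$ contributes $\pm a_{uv}$, and $|a_{uv}|\le\max_v|a_{uv}|$. Therefore $|Y_u|\le 2\max_v|a_{uv}|\le 4$, using the standing assumption $|a_{uv}|\le 2$.

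\emph{Main obstacle.} The only subtle point is that the swap must preserve the conditional law, and not just the unconditional one. This holds because $\pi_u$ acts only on vertices of $R_u$ and leaves all previously exposed edges unchanged. That is exactly why the argument swaps within the row pair $R_u$, rather than within column pairs as in Lemma~\ref{lem:block-basic}.
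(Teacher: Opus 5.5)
Your proof is correct and follows essentially the same argument as the paper. Conditionally on $\mathcal{F}_{u-1}$, swapping $u^+$ and $u^-$ preserves the law of $M$ and negates $Y_u$, and the almost-sure bound comes from the at most two matched edges at $R_u$. Your explicit check that the swap fixes every previously revealed edge is a slightly more careful version of the paper's remark that $u^\pm$ are symmetric in the residual complete bipartite graph.
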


\begin{proof}
Conditional on $\mathcal{F}_{u-1}$, the two vertices $u^+$ and $u^-$ are symmetric in the
remaining complete bipartite graph on the unrevealed vertices. Swapping them preserves the
conditional distribution and flips the sign of $Y_u$, because every block on row-pair $R_u$ is of
checkerboard form. Hence the conditional distribution of $Y_u$ is symmetric and therefore has mean
$0$.

Since $R_u$ contains only two left vertices and each matched edge contributes at most $\max_v |a_{uv}|$ in absolute value, the deterministic bound follows.
\end{proof}

Let
\[
\ell(x) := \log\cosh x.
\]

\begin{lemma}[Log-cosh identities from row-pair symmetries]\label{lem:exact-cosh}
For every $s\in\R$,
\[
\E_{U_q}\exp\!\left(
 s T_{\mathsf{H}} - \sum_{u=1}^q \ell(sY_u)
\right)
= 1
\]
and
\[
\E_{U_q} e^{s T_{\mathsf{H}}}
=
\E_{U_q} \exp\!\left(\sum_{u=1}^q \ell(sY_u)\right).
\]
\end{lemma}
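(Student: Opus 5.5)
The plan is to prove the first identity by a backward induction over row-pairs, using the swap symmetry inside each $R_u$ at every step, and then to deduce the second identity from it by an exponential tilt.

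\emph{Step 1: a conditional identity for a single row-pair.} Fix $u$ and condition on $\mathcal{F}_{u-1}$. I claim that for every function $\Phi$ that is invariant under the swap $u^+\leftrightarrow u^-$,
\[
\E\bigl[e^{sY_u}\Phi\mid\mathcal{F}_{u-1}\bigr]=\E\bigl[\cosh(sY_u)\Phi\mid\mathcal{F}_{u-1}\bigr].
\]
The swap is an automorphism of $H_q$. It fixes the already revealed restriction to $R_1,\dots,R_{u-1}$, so it preserves the conditional law given $\mathcal{F}_{u-1}$, exactly as in Lemma~\ref{lem:rowpair-martingale}. Because every block on $R_u$ is checkerboard, the swap sends $Y_u$ to $-Y_u$. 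Averaging the integrand over the swap therefore turns $e^{sY_u}$ into $\cosh(sY_u)$.

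\emph{Step 2: choosing the right functional.} I will show by downward induction on $u$ that
\[
\E\Bigl[\exp\Bigl(\sum_{j\ge u}\bigl(sY_j-\ell(sY_j)\bigr)\Bigr)\Bigm|\mathcal{F}_{u-1}\Bigr]=1.
\]
The obstacle is that later row-pairs $Y_j$, $j>u$, are not swap-invariant individually, since the swap at $R_u$ changes which right vertices remain available. I see two ways to handle this.

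The first route avoids the issue by conditioning on $\mathcal{F}_u$ first, so that the inner expectation is taken by the induction hypothesis. The residual graph on unrevealed vertices is again complete bipartite, and only its vertex set depends on what has been revealed. So the inner conditional expectation equals $1$ identically, whatever the revealed configuration is. The remaining step is then
\[
\E\Bigl[\frac{e^{sY_u}}{\cosh(sY_u)}\Bigm|\mathcal{F}_{u-1}\Bigr]=1,
\]
which is Step 1 applied with $\Phi=1/\cosh(sY_u)$. This $\Phi$ is swap-invariant because $\cosh$ is even. Concretely, symmetry gives $\E[\tanh(sY_u)\mid\mathcal{F}_{u-1}]=0$, and $e^{x}/\cosh x=1+\tanh x$.

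The second route, if the precise induction is awkward, is to generalize Step 2 to an arbitrary starting residual complete bipartite graph with a checkerboard-structured weight matrix. The same computation then goes through. With either route, taking $u=1$ gives the first identity.

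\emph{Step 3: the second identity.} I will repeat the same induction with the roles of $e^{sY}$ and $e^{\ell(sY)}$ exchanged. Conditioning on $\mathcal{F}_u$, the inner quantity is the same for both sides by induction, but it now depends on the residual graph, so I define
\[
F_u:=\E\Bigl[e^{s\sum_{j>u}Y_j}\Bigm|\mathcal{F}_u\Bigr].
\]
To apply Step 1 with $\Phi=F_u$, I need $F_u$ to be swap-invariant. Here I would use that the swap of $R_u$ maps the residual graph after revealing $R_u$ to an isomorphic one, and only the matched right endpoints matter. Then
\[
\E\bigl[e^{sY_u}F_u\mid\mathcal{F}_{u-1}\bigr]=\E\bigl[\cosh(sY_u)F_u\mid\mathcal{F}_{u-1}\bigr].
\]
The remaining task is to pass the equality of the two functionals through the later rows, which is the same induction done jointly. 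I expect this invariance bookkeeping to be the main subtlety: the set of right vertices matched from $R_u$ is unchanged by the swap, so the residual graph is identical, and the induction closes.
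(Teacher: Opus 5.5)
Your proof is correct but organized differently from the paper's, and one remark in it is wrong. In Step~2 you say the later $Y_j$ are not invariant under the swap $\tau_u$ of $u^+$ and $u^-$ because it ``changes which right vertices remain available.'' In fact $\tau_u$ only exchanges the two left endpoints inside $R_u$. It fixes every edge not incident to $R_u$, so $Y_j\circ\tau_u=Y_j$ exactly for all $j\neq u$, as you note yourself at the end of Step~3.

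The paper's proof rests on precisely this fact. For the $2^q$ compositions $\tau_\varepsilon$ of row-pair swaps one has $Y_u(\tau_\varepsilon M)=\varepsilon_uY_u(M)$ pointwise. So for each fixed $M$, the average $2^{-q}\sum_\varepsilon e^{sT_{\mathsf H}(\tau_\varepsilon M)}$ equals $\prod_u\cosh(sY_u(M))$, while the $\ell$-terms are unchanged. Invariance of $U_q$ under every $\tau_\varepsilon$ then gives both identities at once, with no conditioning or induction. More generally it gives $\E[e^{sT_{\mathsf H}}\Phi]=\E[\prod_u\cosh(sY_u)\Phi]$ for every swap-invariant $\Phi$, which is what your announced ``exponential tilt'' deduction would actually require.

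Your sequential route is nonetheless sound. Route~1 is a self-normalized exponential martingale: $\E[\tanh(sY_u)\mid\mathcal F_{u-1}]=0$ makes $\prod_{j\le u}e^{sY_j}/\cosh(sY_j)$ a mean-one $(\mathcal F_u)$-martingale. This uses only conditional symmetry of each increment.

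The second identity genuinely needs more than that. It fails for general conditionally symmetric sequences, e.g.\ when the conditional scale of $Y_2$ depends on the sign of $Y_1$. Your joint induction correctly isolates the extra input as $\tau_u$-invariance of $F_u=\E[e^{s\sum_{j>u}Y_j}\mid\mathcal F_u]$. That invariance holds because the residual graph after revealing $R_u$ is literally the same: same unused right vertices, same restriction of $\mathsf H$. It is not merely isomorphic, as you first phrase it, and isomorphism alone would not suffice.

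So your version separates cleanly what each identity needs, at the cost of length. The paper's group-averaging argument is shorter and treats both identities uniformly.
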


\begin{proof}
For each $u\in[q]$, let $\tau_u$ be the graph automorphism of $H_q$ that swaps the two vertices
$u^+$ and $u^-$ in the row-pair $R_u$. For
\[
\varepsilon=(\varepsilon_1,\dots,\varepsilon_q)\in\{\pm1\}^q,
\]
let $\tau_\varepsilon$ be the composition of the maps $\tau_u$ with $\varepsilon_u=-1$.
Since $U_q$ is the uniform distribution on all matchings of $H_q$, it is invariant under every
$\tau_\varepsilon$.

Fix a matching $M$. Swapping the row-pair $R_u$ flips the sign of every checkerboard entry on the
edges incident to $R_u$, while swapping a different row-pair does not affect those edges. Hence
\[
Y_u(\tau_\varepsilon M)=\varepsilon_u Y_u(M)
\qquad (u\in[q]).
\]
Therefore
\[
T_{\mathsf{H}}(\tau_\varepsilon M)=\sum_{u=1}^q \varepsilon_u Y_u(M),
\qquad
\sum_{u=1}^q \ell\bigl(sY_u(\tau_\varepsilon M)\bigr)=\sum_{u=1}^q \ell\bigl(sY_u(M)\bigr),
\]
since $\ell$ is even.

Averaging over all sign patterns gives, for each fixed $M$,
\begin{align*}
2^{-q}\sum_{\varepsilon\in\{\pm1\}^q} e^{sT_{\mathsf{H}}(\tau_\varepsilon M)}
&=
2^{-q}\sum_{\varepsilon\in\{\pm1\}^q}
\exp\!\left(\sum_{u=1}^q s\varepsilon_u Y_u(M)\right) \\
&=
\prod_{u=1}^q \frac{e^{sY_u(M)}+e^{-sY_u(M)}}{2}
=
\exp\!\left(\sum_{u=1}^q \ell\bigl(sY_u(M)\bigr)\right),
\end{align*}
and likewise
\begin{align*}
2^{-q}\sum_{\varepsilon\in\{\pm1\}^q}
\exp\!\left(
 sT_{\mathsf{H}}(\tau_\varepsilon M)-\sum_{u=1}^q \ell\bigl(sY_u(\tau_\varepsilon M)\bigr)
\right)
&=
\prod_{u=1}^q
\left(
 e^{-\ell(sY_u(M))}\frac{e^{sY_u(M)}+e^{-sY_u(M)}}{2}
\right)
=1.
\end{align*}
Taking expectation under $U_q$ and using invariance under the automorphisms $\tau_\varepsilon$
proves both identities.
\end{proof}

\begin{corollary}\label{cor:subgaussian-one-copy}
There exists an absolute constant $C$ such that for every block matrix $\mathsf{H}$ with $|a_{uv}|\le 2$ and every $s \in \R$,
\[
\E_{U_q} e^{s T_{\mathsf{H}}} \le e^{C q s^2}.
\]
Consequently,
\[
\Pbb_{U_q}(|T_{\mathsf{H}}| \ge x) \le 2 e^{-x^2/(Cq)}
\qquad\text{for every }x\ge 0,
\]
and
\[
\E_{U_q}[T_{\mathsf{H}}^4] \le C q^2.
\]
\end{corollary}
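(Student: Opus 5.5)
The plan is to deduce the MGF bound directly from the second identity of Lemma~\ref{lem:exact-cosh}, with no further appeal to the martingale structure. That identity gives
\[
\E_{U_q} e^{sT_{\mathsf{H}}}=\E_{U_q}\exp\!\Bigl(\sum_{u=1}^q \ell(sY_u)\Bigr).
\]
I will use the elementary inequality $\ell(x)=\log\cosh x\le x^2/2$, valid for all real $x$; it follows from comparing the power series of $\cosh x$ and $e^{x^2/2}$ term by term. By Lemma~\ref{lem:rowpair-martingale}, $|Y_u|\le 4$ almost surely, so $\ell(sY_u)\le 8s^2$ pointwise. Hence the exponent is deterministically at most $8qs^2$, and $\E_{U_q}e^{sT_{\mathsf{H}}}\le e^{8qs^2}$. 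This holds for every $s\in\R$, including negative $s$, since $\ell$ is even. So the first claim holds with any $C\ge 8$.

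The tail bound is the standard Chernoff argument. For $x\ge0$ and $s>0$, Markov's inequality gives
\[
\Pbb(T_{\mathsf{H}}\ge x)\le e^{-sx}\E e^{sT_{\mathsf{H}}}\le \exp(-sx+C_1qs^2),
\]
where $C_1=8$. Choosing $s=x/(2C_1q)$ yields $\exp(-x^2/(4C_1q))$. The same bound holds for the lower tail, either by applying the argument with $-s$ or by the symmetry in Lemma~\ref{lem:block-basic}. A union bound over the two tails then gives $2e^{-x^2/(Cq)}$ with $C=4C_1=32$.

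For the fourth moment, I integrate the tail:
\[
\E[T_{\mathsf{H}}^4]=\int_0^\infty 4x^3\,\Pbb(|T_{\mathsf{H}}|\ge x)\,dx\le \int_0^\infty 8x^3e^{-x^2/(Cq)}\,dx=4C^2q^2,
\]
using $\int_0^\infty x^3e^{-x^2/b}\,dx=b^2/2$. It then suffices to take the final absolute constant to be the maximum of the constants appearing in the three claims.

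No step here is a real obstacle. The only point to check is that Lemma~\ref{lem:exact-cosh} holds for all real $s$, with both sides finite, which it does because the sample space is finite. The substantive work, namely the exact $\log\cosh$ representation, is already done in that lemma; what remains is the pointwise bound on $\ell$ combined with the almost sure bound on $|Y_u|$.
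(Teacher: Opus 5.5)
Your proposal is correct and follows the paper's proof essentially verbatim: the $\log\cosh$ identity of Lemma~\ref{lem:exact-cosh}, the pointwise bound $\ell(x)\le x^2/2$, and $|Y_u|\le 4$ give $\E_{U_q}e^{sT_{\mathsf{H}}}\le e^{8qs^2}$. From there a Chernoff bound gives the tail and integrating the tail gives the fourth moment. Your explicit constants ($C=32$ for the tail, $4C^2q^2$ for the fourth moment) and the remark that taking the maximum constant is harmless, since each bound only weakens as $C$ grows, are both accurate.
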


\begin{proof}
Let
\[
Q := \sum_{u=1}^q Y_u^2 .
\]
By Lemma~\ref{lem:exact-cosh} and the pointwise bound $\ell(x)\le x^2/2$,
\[
\E e^{sT_{\mathsf{H}}}
=
\E \exp\!\left(\sum_{u=1}^q \ell(sY_u)\right)
\le
\E \exp\!\left(\frac{s^2}{2}\sum_{u=1}^q Y_u^2\right)
=
\E e^{\frac{s^2}{2}Q}.
\]
Since $|Y_u|\le 4$, we have $Q\le 16q$ almost surely, hence
\[
\E e^{sT_{\mathsf{H}}} \le e^{8 q s^2}.
\]
The MGF bound follows. The tail estimate is a standard Chernoff bound, and the fourth-moment estimate follows by integrating the tail.
\end{proof}

\subsubsection{Occupation probabilities and a smoothness estimate}

For integers $1 \le a \le b$, let $U_{a,b}$ be the uniform distribution on all matchings of $K_{a,b}$, and let $K_{a,b}$ denote the matching size under $U_{a,b}$. Define
\[
\pi_1(a,b) := U_{a,b}(e\in M) = \frac{\E[K_{a,b}]}{ab}
\]
for any fixed edge $e$, and
\[
\pi_2(a,b) := U_{a,b}(e,f\in M) = \frac{\E[K_{a,b}(K_{a,b}-1)]}{a(a-1)b(b-1)}
\]
for any fixed ordered pair of disjoint edges $(e,f)$.

\begin{lemma}[Matching-size monotonicity]\label{lem:size-monotone}
Fix integers $1 \le a \le b$. Then the distribution of $K_{a,b}$ is stochastically increasing in
$b$. More precisely, $K_{a,b+1}$ stochastically dominates $K_{a,b}$.
\end{lemma}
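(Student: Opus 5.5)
The plan is to write down the law of $K_{a,b}$ explicitly and prove domination through a monotone likelihood ratio. The number of $k$-matchings of $K_{a,b}$ is $N_k(a,b)=\binom{a}{k}\binom{b}{k}k!$ for $0\le k\le a$ (with $a\le b$). Hence
\[
\Pbb(K_{a,b}=k)=\frac{N_k(a,b)}{\sum_{j=0}^a N_j(a,b)} .
\]
Both $K_{a,b}$ and $K_{a,b+1}$ are supported on $\{0,1,\dots,a\}$, because $a\le b<b+1$. I will consider the ratio
\[
r_k:=\frac{N_k(a,b+1)}{N_k(a,b)}=\frac{\binom{b+1}{k}}{\binom{b}{k}}=\frac{b+1}{b+1-k},
\qquad 0\le k\le a .
\]
Since $k\le a\le b$, the denominator $b+1-k$ is at least $1$, so $r_k$ is well defined, positive, and strictly increasing in $k$.

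The ratio $\Pbb(K_{a,b+1}=k)/\Pbb(K_{a,b}=k)$ equals $r_k$ times a constant, namely the ratio of the two partition functions. It is therefore nondecreasing in $k$. This is the monotone likelihood ratio property, and the standard fact that MLR implies first-order stochastic dominance then gives $K_{a,b+1}\succeq K_{a,b}$.

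For self-containment I will include the short argument for that fact. Write $p_k=\Pbb(K_{a,b}=k)$ and $p'_k=\Pbb(K_{a,b+1}=k)=c\,r_kp_k$. Since $\sum p'_k=\sum p_k=1$ and $r_k$ is increasing, there is a threshold $k_0$ with $p'_k\le p_k$ for $k<k_0$ and $p'_k\ge p_k$ for $k\ge k_0$.

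For any $j$, I then bound $\sum_{k\ge j}(p'_k-p_k)$ in two cases. If $j\ge k_0$, the sum has only nonnegative terms, so it is nonnegative. If $j<k_0$, the sum equals $-\sum_{k<j}(p'_k-p_k)$, whose terms are all nonpositive, so again it is nonnegative. Either way $\Pbb(K_{a,b+1}\ge j)\ge \Pbb(K_{a,b}\ge j)$ for every $j$.

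There is no real obstacle here. The only care needed is to confirm that the supports coincide and that $b+1-k>0$ throughout, which follows from $a\le b$; the $\binom{a}{k}$ and $k!$ factors cancel in $r_k$. If a coupling formulation is wanted instead, it follows from the tail inequalities via the quantile coupling.
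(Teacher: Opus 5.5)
Your proposal is correct and is essentially the paper's own proof: the paper also uses $N_k(a,b)=\binom{a}{k}\binom{b}{k}k!$, notes that $N_k(a,b+1)/N_k(a,b)=(b+1)/(b+1-k)$ is increasing in $k$, and concludes stochastic domination from the monotone-likelihood-ratio order. The extra step you add, the single-crossing argument showing that this order implies first-order dominance, is valid and fills in a fact the paper simply cites as standard.
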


\begin{proof}
The number of $k$-matchings in $K_{a,b}$ is
\[
N_k(a,b) := \binom{a}{k}\binom{b}{k}k!,
\qquad 0\le k\le a.
\]
Hence
\[
\frac{N_k(a,b+1)}{N_k(a,b)} = \frac{b+1}{b+1-k},
\]
which is increasing in $k$. Thus the probability mass function of $K_{a,b+1}$ is obtained from that of $K_{a,b}$ by reweighting with an increasing function of $k$. This is the monotone-likelihood-ratio order, and in particular it implies stochastic domination.
\end{proof}

\begin{lemma}\label{lem:occupation-smooth}
For every integers $2 \le a \le b$,
\[
0 \le \pi_1(a,b) \le \frac{1}{b},
\qquad
0 \le \pi_2(a,b) \le \frac{1}{b(b-1)}.
\]
Moreover,
\[
|\pi_1(a,b+1)-\pi_1(a,b)| \le \frac{2}{ab},
\qquad
|\pi_2(a,b+1)-\pi_2(a,b)| \le \frac{C}{ab^2}
\]
for an absolute constant $C$. Consequently, if $b'\ge a$ and $|b-b'|\le 2$, then
\[
|\pi_1(a,b')-\pi_1(a,b)| \le \frac{C}{ab},
\qquad
|\pi_2(a,b')-\pi_2(a,b)| \le \frac{C}{ab^2}.
\]
\end{lemma}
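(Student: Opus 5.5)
The plan is to work directly with the size distribution of $K_{a,b}$. The counts are $N_k(a,b)=\binom{a}{k}\binom{b}{k}k!$, and we write $p_k(a,b)=N_k(a,b)/Z(a,b)$ with $Z=\sum_k N_k$. The first two bounds are elementary. Since $K_{a,b}\le a$, we get $\pi_1(a,b)=\E[K]/(ab)\le a/(ab)=1/b$. Likewise $K(K-1)\le a(a-1)$ gives $\pi_2(a,b)\le 1/(b(b-1))$.

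For the one-step differences, write $\E$ under $U_{a,b}$ and $\E'$ under $U_{a,b+1}$. Then
\[
\pi_1(a,b+1)-\pi_1(a,b)=\frac{\E'[K]-\E[K]}{a(b+1)}-\frac{\E[K]}{ab(b+1)}.
\]
The second term is at most $a/(ab(b+1))\le 1/(ab)$. So the main point is to show $0\le \E'[K]-\E[K]\le 1$ (or at most a bounded constant), which gives the bound $2/(ab)$. Lemma~\ref{lem:size-monotone} gives nonnegativity. For the upper bound I would use a coupling: a uniform matching of $K_{a,b+1}$, with the new right vertex deleted, loses at most one edge. Distributionally, though, the result is not $U_{a,b}$, so I would instead use the explicit tilt. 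We have $p_k(a,b+1)\propto p_k(a,b)\,\frac{b+1}{b+1-k}$, and for $k\le a\le b$ the ratio $\frac{b+1}{b+1-k}=1+\frac{k}{b+1-k}$ varies over a range controlled by $k/b$. For an exponential-family-like tilt $f(k)$ one has $\E'[K]-\E[K]=\Cov(K,f(K))/\E f(K)$. Taking $f(k)=\log\frac{b+1}{b+1-k}$, whose derivative in $k$ is $\frac{1}{b+1-k}$, I would bound this by $\Var(K)\cdot\sup|f'|$-type estimates. That requires $\Var_{U_{a,b}}(K)\lesssim b-a+1$ or something similar; the cleaner route is below.

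The cleaner route is the Heilmann--Lieb fact (or the log-concavity of $N_k$, which is directly checkable here, since $N_{k+1}/N_k=(a-k)(b-k)/(k+1)$ is decreasing). With it, $K$ is a sum of independent Bernoullis. Adding a right vertex interlaces the matching-polynomial roots, so $K_{a,b+1}$ is a Bernoulli sum whose parameters dominate coordinatewise those of $K_{a,b}$, with total mean shift at most $1$. Concretely, the matching generating polynomial $\sum_k N_k x^k$ of $K_{a,b+1}$ equals that of $K_{a,b}$ plus $x$ times the sum over $a$ choices of the one for $K_{a-1,b}$. Root interlacing for these real-rooted polynomials then gives $0\le \E'[K]-\E[K]\le1$. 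This gives the $\pi_1$ bound.

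For $\pi_2$, write $\E[K(K-1)]/(a(a-1)b(b-1))$ and expand the difference the same way. There is a term $\frac{\E'[K(K-1)]-\E[K(K-1)]}{a(a-1)(b+1)b}$ and a term $\E[K(K-1)]\bigl(\frac{1}{b(b-1)}-\frac{1}{(b+1)b}\bigr)/(a(a-1))\le \frac{2}{b^2(b-1)}\cdot\frac{a(a-1)}{a(a-1)}$. The latter is at most $C/b^3\le C/(ab^2)$. For the former, under the Bernoulli-sum representation with parameters increasing by total at most $1$ (each $\le1$), we have $\E[K(K-1)]=(\sum p_i)^2-\sum p_i^2$. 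Its change is at most $2\E'[K]+1\le 2a+1$, so the term is $O(a/(a^2b^2))=O(1/(ab^2))$. The consequence for $|b-b'|\le2$ follows by applying the one-step bound at most twice and noting that $b$ and $b'$ are comparable.

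The main obstacle is justifying the mean-shift bound $\E'[K]-\E[K]\le1$ and the analogous second-factorial-moment bound. I would first try the real-rootedness and interlacing argument, and fall back on a direct computation with the ratio tilt if the interlacing monotonicity of the individual Bernoulli parameters is awkward to cite.
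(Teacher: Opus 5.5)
Your argument is correct, but it obtains the key estimates on $\E K'-\E K$ and on the change of $\E K(K-1)$ by a different and heavier route than the paper. The paper uses exactly the deletion coupling you set aside: delete the edge at the new right vertex $z$ from a uniform matching of $K_{a,b+1}$, and let $L$ be the resulting size. The law of $L$ does not need to be uniform on $K_{a,b}$. Every $k$-matching of $K_{a,b}$ has exactly $a-k+1$ extensions, so the law of $L$ is the law of $K_{a,b}$ tilted by the decreasing weight $a-k+1$. Hence $\E g(L)\le \E g(K_{a,b})$ for increasing $g$. Exchangeability of the right vertices then gives the exact identities $\E L=\frac{b}{b+1}\E K'$ and $\E L(L-1)=\frac{b-1}{b+1}F'$, where $F=\E K(K-1)$ and $F'=\E K'(K'-1)$. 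Combined with Lemma~\ref{lem:size-monotone}, this yields $0\le \E K'-\E K\le a/b$ and $0\le F'-F\le 2a(a-1)/(b-1)$ by pure counting.

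Your route instead imports the Heilmann--Lieb vertex-deletion interlacing theorem. With positive roots $\theta'_1\ge\theta_1\ge\theta'_2\ge\cdots\ge\theta'_a\ge\theta_a>0$ and Bernoulli parameters $p_i=\theta_i^2/(1+\theta_i^2)$, you get $p'_i\ge p_i\ge p'_{i+1}$. The mean shift then telescopes to at most $p'_1<1$, and $F'-F=(\mu'^2-\mu^2)-\sum_i(p_i'^2-p_i^2)\in[0,2a]$. These bounds are weaker than the paper's when $a\ll b$. Still, after dividing by $a(b+1)$ and $a(a-1)b(b+1)$, they give the required $2/(ab)$ and $C/(ab^2)$. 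In exchange, your argument does not rely on the symmetry of $K_{a,b}$ and applies to adding a vertex to any graph. The paper's argument is elementary and self-contained.

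Two points need repair in a write-up. First, the parenthetical suggesting that log-concavity of $N_k$ would suffice is wrong. Log-concavity does not imply real-rootedness, so it gives no Bernoulli-sum representation. Second, real-rootedness alone is not enough either. The coordinatewise comparison $p'_i\ge p_i\ge p'_{i+1}$ needs the interlacing theorem itself. You should cite that theorem rather than infer it from the one-step recursion $\sum_k N_k(a,b+1)x^k=\sum_k N_k(a,b)x^k+ax\sum_k N_k(a-1,b)x^k$, which does not yield interlacing without an accompanying induction.
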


\begin{proof}
Since $K_{a,b} \le a$,
\[
\pi_1(a,b) = \frac{\E[K_{a,b}]}{ab} \le \frac{1}{b},
\qquad
\pi_2(a,b) = \frac{\E[K_{a,b}(K_{a,b}-1)]}{a(a-1)b(b-1)} \le \frac{1}{b(b-1)}.
\]

Fix a new right vertex $z$ and sample a matching of $K_{a,b+1}$ uniformly. Delete the edge incident to $z$ if such an edge exists, and let $L$ be the size of the resulting matching on $K_{a,b}$.

Every $k$-matching of $K_{a,b}$ has exactly $a-k+1$ extensions to a matching of $K_{a,b+1}$: either the new vertex $z$ is unused, or it is matched to one of the $a-k$ unmatched left vertices. Therefore
\[
\Pbb(L=k) \propto (a-k+1) N_k(a,b).
\]
Since the factor $a-k+1$ is decreasing in $k$, the distribution of $L$ is the distribution of
$K_{a,b}$ reweighted by a decreasing function of $k$. Hence for every increasing function $g$,
\[
\E[g(L)] \le \E[g(K_{a,b})].
\]
In particular,
\[
\E[L] \le \E[K_{a,b}],
\qquad
\E[L(L-1)] \le \E[K_{a,b}(K_{a,b}-1)].
\]

Let $K' := K_{a,b+1}$ and let $I_z$ be the indicator that $z$ is used. Since the right vertices are symmetric,
\[
\E[I_z] = \frac{\E[K']}{b+1}.
\]
Also $L=K' - I_z$, so
\[
\E[L] = \frac{b}{b+1}\E[K'].
\]
Combining this with $\E[L]\le \E[K_{a,b}]$ yields
\[
\E[K'] \le \frac{b+1}{b}\E[K_{a,b}],
\]
while Lemma~\ref{lem:size-monotone} gives the reverse inequality $\E[K'] \ge \E[K_{a,b}]$. Hence
\[
0 \le \E[K'] - \E[K_{a,b}] \le \frac{1}{b}\E[K_{a,b}] \le \frac{a}{b}.
\]
Therefore
\begin{align*}
|\pi_1(a,b+1)-\pi_1(a,b)|
&= \left| \frac{\E[K']}{a(b+1)} - \frac{\E[K_{a,b}]}{ab} \right| \\
&\le \frac{|\E[K'] - \E[K_{a,b}]|}{a(b+1)}
 + \frac{\E[K_{a,b}]}{a}\left|\frac{1}{b+1}-\frac{1}{b}\right| \\
&\le \frac{1}{b(b+1)} + \frac{1}{b(b+1)}
\le \frac{2}{ab},
\end{align*}
since $a\le b$.

Now let
\[
F_{a,b} := \E[K_{a,b}(K_{a,b}-1)],
\qquad
F' := F_{a,b+1}.
\]
Because
\[
L(L-1) = K'(K'-1) - 2 I_z (K'-1),
\]
and because symmetry among right vertices gives
\[
\E[I_z(K'-1)] = \frac{F'}{b+1},
\]
we obtain
\[
\E[L(L-1)] = \frac{b-1}{b+1}F'.
\]
Since $\E[L(L-1)]\le F_{a,b}$, we conclude that
\[
F' \le \frac{b+1}{b-1}F_{a,b}.
\]
On the other hand, Lemma~\ref{lem:size-monotone} applied to the increasing function $k\mapsto k(k-1)$ gives $F' \ge F_{a,b}$. Therefore
\[
0 \le F' - F_{a,b} \le \frac{2}{b-1}F_{a,b} \le \frac{2a(a-1)}{b-1}.
\]
Hence
\begin{align*}
|\pi_2(a,b+1)-\pi_2(a,b)|
&= \left|
\frac{F'}{a(a-1)b(b+1)}
-
\frac{F_{a,b}}{a(a-1)b(b-1)}
\right| \\
&\le
\frac{|F' - F_{a,b}|}{a(a-1)b(b+1)}
+
\frac{F_{a,b}}{a(a-1)}
\left|
\frac{1}{b(b+1)} - \frac{1}{b(b-1)}
\right| \\
&\le
\frac{2}{(b-1)b(b+1)}
+
\frac{2}{(b-1)b(b+1)}
\le
\frac{C}{ab^2},
\end{align*}
because $a\le b$.
The final bounds for $b'\ge a$ and $|b-b'|\le 2$ follow by applying the one-step estimate at most twice.
\end{proof}

\subsubsection{A reduced second-moment formula}

\begin{lemma}\label{lem:reduced-second-moment}
Let $2 \le a \le b$, let $U_{a,b}$ be the uniform distribution on all matchings of $K_{a,b}$, and let $G \in \R^{a\times b}$. Define
\[
T_G(M) := \sum_{e\in M} G_e.
\]
Then
\[
\E_{U_{a,b}}[T_G^2]
=
\pi_1(a,b) \norm{G}_\F^2
+
\pi_2(a,b) \sum_{\substack{i\ne i' \\ j\ne j'}} G_{ij}G_{i'j'}.
\]
If, in addition, every column sum of $G$ vanishes, then
\[
\E_{U_{a,b}}[T_G^2]
=
\bigl(\pi_1(a,b)+\pi_2(a,b)\bigr) \norm{G}_\F^2
-
\pi_2(a,b) \norm{r(G)}_2^2,
\]
where $r(G) \in \R^a$ is the vector of row sums.
\end{lemma}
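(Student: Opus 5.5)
The plan is to expand the square and use the exact occupation probabilities, then simplify the off-diagonal sum using the vanishing column sums.

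\emph{Step 1: expansion.} Write $T_G(M)=\sum_{e}G_e\one\{e\in M\}$, so that
\[
\E_{U_{a,b}}[T_G^2]=\sum_{e,f}G_eG_f\,U_{a,b}(e,f\in M).
\]
We split the ordered pairs $(e,f)$ into three classes:
\begin{itemize}[label=]
\item \emph{diagonal} pairs $e=f$;
\item pairs sharing exactly one endpoint, i.e.\ same row or same column;
\item disjoint pairs $e=(i,j)$, $f=(i',j')$ with $i\ne i'$ and $j\ne j'$.
\end{itemize}
Pairs in the second class contribute zero, since no matching contains two edges at a common vertex. The diagonal terms give $\sum_e G_e^2\,U_{a,b}(e\in M)$. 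By the edge-transitivity of $K_{a,b}$ (the automorphism group acts transitively on edges), this probability equals $\pi_1(a,b)$ for every edge, so the diagonal contributes $\pi_1(a,b)\norm{G}_\F^2$. The automorphism group also acts transitively on ordered pairs of disjoint edges, since independent permutations of rows and of columns can send any pair $(i,j),(i',j')$ with $i\ne i'$, $j\ne j'$ to any other such pair. Hence the joint probability is the constant $\pi_2(a,b)$, which by double counting equals $\E[K(K-1)]/(a(a-1)b(b-1))$ as defined. This gives the first formula.

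\emph{Step 2: simplifying the off-diagonal sum.} Assume every column sum vanishes and set
\[
S:=\sum_{i\ne i',\,j\ne j'}G_{ij}G_{i'j'}.
\]
We decompose the full quadruple sum the same way as in the proof of Lemma~\ref{lem:isotropic-Kab}:
\[
\Bigl(\sum_{i,j}G_{ij}\Bigr)^2=\norm{G}_\F^2+\sum_i\sum_{j\ne j'}G_{ij}G_{ij'}+\sum_j\sum_{i\ne i'}G_{ij}G_{i'j}+S.
\]
We evaluate each piece.
\begin{itemize}[label=]
\item The left side is $0$, because the total sum is the sum of the column sums.
\item The same-row term is $\sum_i r_i^2-\norm{G}_\F^2=\norm{r(G)}_2^2-\norm{G}_\F^2$.
\item The same-column term is $\sum_j c_j^2-\norm{G}_\F^2=-\norm{G}_\F^2$, since $c_j=0$.
\end{itemize}
Substituting these into the decomposition gives $S=\norm{G}_\F^2-\norm{r(G)}_2^2$. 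Plugging this into the first formula yields
\[
\E_{U_{a,b}}[T_G^2]=(\pi_1+\pi_2)\norm{G}_\F^2-\pi_2\norm{r(G)}_2^2.
\]

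There is no real obstacle here. The only point needing care is justifying that the pair probability is constant over disjoint pairs, which follows from the transitivity argument in Step 1, and keeping the bookkeeping of the four classes of index pairs straight.
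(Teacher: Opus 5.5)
Your proposal is correct and follows essentially the same route as the paper: you expand the square, kill the adjacent-edge terms, and then use the identity $\sum_{i\ne i',\,j\ne j'}G_{ij}G_{i'j'}=\bigl(\sum_{i,j}G_{ij}\bigr)^2-\norm{r(G)}_2^2-\norm{c(G)}_2^2+\norm{G}_\F^2$ with vanishing column sums (hence zero total sum). Your transitivity argument for why the one- and two-edge occupation probabilities are constant makes explicit what the paper builds into the definitions of $\pi_1$ and $\pi_2$.
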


\begin{proof}
Expanding $T_G^2$ and using that two adjacent edges cannot both belong to a matching gives
\[
\E[T_G^2]
=
\pi_1(a,b) \sum_{i,j} G_{ij}^2
+
\pi_2(a,b) \sum_{\substack{i\ne i' \\ j\ne j'}} G_{ij}G_{i'j'}.
\]
This proves the first identity.

For the second, let $S := \sum_{i,j} G_{ij}$ and let $c(G) \in \R^b$ be the vector of column sums. The elementary identity
\[
\sum_{\substack{i\ne i' \\ j\ne j'}} G_{ij}G_{i'j'}
=
S^2 - \norm{r(G)}_2^2 - \norm{c(G)}_2^2 + \norm{G}_\F^2
\]
shows that if every column sum vanishes, then $c(G)=0$ and also $S=0$, hence
\[
\sum_{\substack{i\ne i' \\ j\ne j'}} G_{ij}G_{i'j'} = \norm{G}_\F^2 - \norm{r(G)}_2^2.
\]
Substituting this into the first formula proves the claim.
\end{proof}

\subsubsection{Concentration of the realized quadratic variation}

For the fixed block matrix $\mathsf{H}$, keep the row-pair martingale differences $Y_u$ from Lemma~\ref{lem:rowpair-martingale}, and define
\[
Q := \sum_{u=1}^q Y_u^2.
\]
For $s = 0,1,\dots,q$, let
\[
S_s := \sum_{u>s} Y_u,
\qquad
R_s := \E[S_s^2 \mid \mathcal{F}_s].
\]

\begin{lemma}\label{lem:future-second-stable}
There is an absolute constant $C$ with the following property.

For each $s=1,\dots,q-1$, conditional on $\mathcal{F}_{s-1}$, the random variable $R_s$ ranges over an interval of length at most $C$.
Hence, if
\[
M_s := \E[Q \mid \mathcal{F}_s],
\qquad 0\le s\le q,
\]
then
\[
|M_s - M_{s-1}| \le C
\qquad\text{almost surely for every }1\le s\le q .
\]
\end{lemma}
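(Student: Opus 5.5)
The plan is to compute $R_s$ exactly with Lemma~\ref{lem:reduced-second-moment} and then compare the finitely many values it can take given $\mathcal F_{s-1}$.

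\emph{Reduction to $K_{a,b'}$.} Conditional on $\mathcal F_s$, the unrevealed part of the matching is a uniform matching of the complete bipartite graph between the left vertices of $R_{s+1},\dots,R_q$ and the right vertices not yet matched. Call this graph $K_{a,b'}$, with $a=2(q-s)$ and $b'=2q-(\#\text{right vertices used by }R_1,\dots,R_s)$. At most $2s$ right vertices are used, so $b'\ge a$. Moreover $S_s=T_G$, where $G$ is $\mathsf H$ restricted to these rows and columns. Hence
\[
R_s=\bigl(\pi_1(a,b')+\pi_2(a,b')\bigr)\norm{G}_\F^2-\pi_2(a,b')\norm{r(G)}_2^2
\]
by Lemma~\ref{lem:reduced-second-moment}. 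That lemma requires the column sums of $G$ to vanish, and they do: within each remaining row pair the entries of a column are $\pm a_{uv}$ with opposite signs.

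\emph{Bounding the row sums.} For a remaining row $u^\alpha$, the row sum $r_{u^\alpha}(G)$ equals minus the sum of its entries over the deleted columns. A column pair $C_v$ that is completely deleted contributes $a_{uv}\alpha(1-1)=0$. Only half-deleted pairs contribute, and each of these has a surviving partner, so there are at most $b'$ of them. Since $|a_{uv}|\le 2$, this gives $|r_i(G)|\le 2b'$ and $\norm{r(G)}_2^2\le 4ab'^2$. Also $\norm{G}_\F^2\le 4ab'$.

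\emph{Variation given $\mathcal F_{s-1}$.} Given $\mathcal F_{s-1}$, the remaining row set is already determined. The only randomness in $R_s$ comes from which (at most two) right vertices $R_s$ uses. So any two possible values of $R_s$ correspond to column sets that differ by deleting or adding at most two columns each, with $b'$ changing by at most $2$ and $b'\ge a$ in all cases. I would compare each possible value to a common reference value and bound each product difference by "$\Delta(\text{coefficient})\times\text{size}$" plus "$\text{coefficient}\times\Delta(\text{size})$", using Lemma~\ref{lem:occupation-smooth}:
\begin{itemize}
\item $|\Delta\pi_1|\norm{G}_\F^2\le \frac{C}{ab'}\cdot 4ab'=O(1)$;
\item $\pi_1\,|\Delta\norm{G}_\F^2|\le \frac1{b'}\cdot O(a)=O(1)$;
\item $|\Delta\pi_2|\,(\norm{G}_\F^2+\norm{r}_2^2)\le \frac{C}{ab'^2}\cdot O(ab'^2)=O(1)$;
\item $\pi_2\,|\Delta\norm{r}_2^2|\le \frac{1}{b'(b'-1)}\cdot a\cdot O(b')=O(a/b')=O(1)$.
\end{itemize}
The last bound holds because toggling at most two columns changes each row sum by at most $4$, so each $r_i^2$ changes by $O(b')$. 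When $b'$ is small, for example $b'=2$ near the end, the bounds are trivial because $a\le b'$ is then $O(1)$. At $s=q$ we have $R_q=0$.

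\emph{From $R_s$ to $M_s$.} Since the $Y_u$ are martingale differences (Lemma~\ref{lem:rowpair-martingale}), cross terms vanish and
\[
\E\Bigl[\sum_{u>s}Y_u^2\,\Big|\,\mathcal F_s\Bigr]=\E[S_s^2\mid\mathcal F_s]=R_s,
\qquad\text{so}\qquad
M_s=\sum_{u\le s}Y_u^2+R_s .
\]
Therefore
\[
M_s-M_{s-1}=Y_s^2+R_s-\E\bigl[Y_s^2+R_s\mid\mathcal F_{s-1}\bigr].
\]
We have $0\le Y_s^2\le 16$, and $R_s$ lies in an interval of length $C$ given $\mathcal F_{s-1}$, so $|M_s-M_{s-1}|\le 16+C$.

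The main obstacle is the row-sum term $\pi_2\norm{r(G)}_2^2$, which is of order $a$ rather than $O(1)$. The key step there is the pairing cancellation that bounds $|r_i|$ by $O(b')$ rather than $O(q)$, together with the fact that its increment changes by only $O(ab')$.
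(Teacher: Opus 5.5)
Your proposal is correct and follows essentially the same route as the paper: the exact formula for $R_s$ from Lemma~\ref{lem:reduced-second-moment} (justified by vanishing column sums), then Lemma~\ref{lem:occupation-smooth} combined with the size bounds $\norm{G}_\F^2=O(ab')$, $\norm{r(G)}_2^2=O(ab'^2)$ and the variation bounds $O(a)$ and $O(ab')$, and finally the decomposition $M_s=\sum_{u\le s}Y_u^2+R_s$. The differences are cosmetic: you compare each outcome to a common reference column set (at most two columns toggled) where the paper compares two outcomes directly (at most four), and your pairing-cancellation argument for $|r_i(G)|\le 2b'$ is not needed, since that bound follows immediately from the $b'$ surviving entries, each of absolute value at most $2$.
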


\begin{proof}
Fix $s \in \{1,\dots,q-1\}$. Conditional on $\mathcal{F}_s$, the future matching lives on a complete bipartite graph $K_{a_s,b_s}$, where
\[
a_s := 2(q-s)
\]
is the number of unrevealed left vertices and $b_s$ is the number of right vertices not yet used. Since each revealed row-pair can use at most two right vertices,
\[
b_s \ge 2q - 2s = a_s.
\]
Let $G_s$ be the restriction of $\mathsf{H}$ to the unrevealed left vertices and currently unused right vertices. Then $S_s = T_{G_s}$.

Every column sum of $G_s$ is zero: within each unrevealed row-pair the two entries in any fixed remaining column are opposite. Therefore Lemma~\ref{lem:reduced-second-moment} yields
\[
R_s
=
\bigl(\pi_1(a_s,b_s)+\pi_2(a_s,b_s)\bigr) \norm{G_s}_\F^2
-
\pi_2(a_s,b_s) \norm{r(G_s)}_2^2 .
\]

Now fix $\mathcal{F}_{s-1}$. Different outcomes of the $s$th row-pair remove from the current right side either $0$, $1$, or $2$ right vertices. Hence any two possible resulting right sets differ in at most four vertices. Therefore the corresponding matrices $G_s$ and $G_s'$ differ by adding or deleting at most four columns.

Because each entry has absolute value at most $2$, adding or deleting one column changes the Frobenius norm squared by at most $4a_s$, and adding or deleting at most four columns changes it by at most
\[
|\norm{G_s}_\F^2 - \norm{G_s'}_\F^2| \le 16 a_s.
\]
Likewise, each row sum changes by at most $8$ when at most four columns are altered. Since every row sum has magnitude at most $2b_s$,
\[
|\norm{r(G_s)}_2^2 - \norm{r(G_s')}_2^2|
\le
\sum_{i=1}^{a_s} \bigl(2|r_i|\,|\Delta_i| + \Delta_i^2 \bigr)
\le
C a_s b_s.
\]
Also,
\[
\norm{G_s}_\F^2 \le 4 a_s b_s,
\qquad
\norm{r(G_s)}_2^2 \le 4 a_s b_s^2,
\]
and the same bounds hold for $G_s'$.

Since $|b_s-b_s'|\le 2$, Lemma~\ref{lem:occupation-smooth} gives
\[
|\pi_1(a_s,b_s')-\pi_1(a_s,b_s)| \le \frac{C}{a_s b_s},
\qquad
|\pi_2(a_s,b_s')-\pi_2(a_s,b_s)| \le \frac{C}{a_s b_s^2}.
\]
Combining these estimates with the formula for $R_s$ shows that every contribution to $|R_s-R_s'|$ is bounded by an absolute constant. Therefore the conditional range of $R_s$ is $O(1)$.

Next, for every $s$,
\[
M_s
=
\sum_{u\le s} Y_u^2 + \E\!\left[\sum_{u>s} Y_u^2 \,\middle|\, \mathcal{F}_s\right].
\]
Because $(Y_u,\mathcal{F}_u)$ is a martingale-difference sequence, the future conditional cross-terms vanish:
for $u<v$ with $u,v>s$,
\[
\E[Y_uY_v \mid \mathcal{F}_s]
=
\E\!\left[\E[Y_uY_v \mid \mathcal{F}_{v-1}] \,\middle|\, \mathcal{F}_s\right]
=
\E\!\left[Y_u \E[Y_v\mid \mathcal{F}_{v-1}] \,\middle|\, \mathcal{F}_s\right]
=
0.
\]
Hence
\[
\E\!\left[\sum_{u>s} Y_u^2 \,\middle|\, \mathcal{F}_s\right]
=
\E[S_s^2 \mid \mathcal{F}_s]
=
R_s,
\]
so
\[
M_s = \sum_{u\le s} Y_u^2 + R_s.
\]

For $1\le s\le q-1$ we therefore have
\[
M_s - M_{s-1}
=
\bigl(Y_s^2 + R_s\bigr) - \E\!\left[Y_s^2 + R_s \,\middle|\, \mathcal{F}_{s-1}\right].
\]
Since $0\le Y_s^2\le 16$ and the conditional range of $R_s$ is bounded by an absolute constant, the conditional range of $Y_s^2+R_s$ is also bounded by an absolute constant, and thus
\[
|M_s-M_{s-1}| \le C.
\]
For $s=q$, one has $R_q=0$ and therefore
\[
M_q-M_{q-1}=Y_q^2-\E[Y_q^2\mid \mathcal{F}_{q-1}],
\]
which is bounded by $16$. After enlarging $C$ if necessary, the claimed increment bound holds for all $1\le s\le q$.
\end{proof}

\begin{corollary}\label{cor:Q-concentration}
There exists an absolute constant $C$ such that for every $\lambda \in \R$,
\[
\log \E e^{\lambda (Q-\E Q)} \le C q \lambda^2.
\]
In particular,
\[
\Var(Q) \le C q.
\]
Moreover,
\[
\E Q = \Var(T_{\mathsf{H}}).
\]
\end{corollary}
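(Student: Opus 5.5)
The plan is to apply the Azuma--Hoeffding inequality to the Doob martingale $(M_s)_{s=0}^q$ with $M_s=\E[Q\mid\mathcal F_s]$, which was set up in Lemma~\ref{lem:future-second-stable}. Since $\mathcal F_0$ is trivial and $Q$ is $\mathcal F_q$-measurable, we have $M_0=\E Q$ and $M_q=Q$. Hence
\[
Q-\E Q=\sum_{s=1}^q (M_s-M_{s-1})
\]
is a sum of martingale differences. Lemma~\ref{lem:future-second-stable} bounds each of them almost surely by an absolute constant $C$.

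For the MGF bound, apply Hoeffding's lemma conditionally: for each $s$, $M_s-M_{s-1}$ has conditional mean zero given $\mathcal F_{s-1}$ and lies in $[-C,C]$. Therefore
\[
\E\bigl[e^{\lambda(M_s-M_{s-1})}\mid\mathcal F_{s-1}\bigr]\le e^{\lambda^2C^2/2}.
\]
Peel off the factors one at a time, starting from $s=q$ and using the tower property. This gives $\E e^{\lambda(Q-\E Q)}\le e^{q\lambda^2C^2/2}$, which is the claim after renaming the constant.

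The variance bound needs no MGF argument. The martingale increments are orthogonal, so
\[
\Var(Q)=\sum_{s=1}^q\E[(M_s-M_{s-1})^2]\le C^2q.
\]
Alternatively, one can expand the MGF bound to second order in $\lambda$ as $\lambda\to0$.

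The identity $\E Q=\Var(T_{\mathsf H})$ follows from Lemma~\ref{lem:rowpair-martingale}. Expand $T_{\mathsf H}^2=\sum_u Y_u^2+2\sum_{u<v}Y_uY_v$. For $u<v$, $Y_u$ is $\mathcal F_{v-1}$-measurable, so $\E[Y_uY_v]=\E[Y_u\E[Y_v\mid\mathcal F_{v-1}]]=0$. Hence $\E[T_{\mathsf H}^2]=\E Q$. Moreover $\E T_{\mathsf H}=\sum_u\E Y_u=0$, either from the martingale property or from the symmetry in Lemma~\ref{lem:block-basic}, so $\E[T_{\mathsf H}^2]=\Var(T_{\mathsf H})$.

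All the real work, namely the bounded increments of $M_s$, is already contained in Lemma~\ref{lem:future-second-stable}. The only point needing care is to check the endpoints, $M_0=\E Q$ by triviality of $\mathcal F_0$ and $M_q=Q$ by measurability, so that the telescoping identity is exact. I expect no genuine obstacle.
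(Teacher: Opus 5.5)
Your proposal is correct and follows the paper's proof. The paper likewise applies Azuma--Hoeffding to the Doob martingale $M_s=\E[Q\mid\mathcal F_s]$, using the increment bound of Lemma~\ref{lem:future-second-stable}, and obtains $\E Q=\Var(T_{\mathsf H})$ from orthogonality of the row-pair differences $Y_u$. You additionally spell out the conditional Hoeffding-lemma peeling, the endpoint identities $M_0=\E Q$ and $M_q=Q$, and the variance bound via orthogonal increments, which the paper leaves implicit.
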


\begin{proof}
Apply Azuma--Hoeffding to the Doob martingale $M_s = \E[Q\mid \mathcal{F}_s]$ using Lemma~\ref{lem:future-second-stable}. The identity $\E Q = \Var(T_{\mathsf{H}})$ follows from the orthogonality of martingale differences:
\[
\Var\!\left(\sum_{u=1}^q Y_u\right)=\sum_{u=1}^q \E[Y_u^2]=\E Q.
\]
\end{proof}

The next lemma gives the local quadratic expansion used in the second-moment argument.

\begin{lemma}[Quadratic log-MGF expansion]\label{lem:key-logmgf}
There are absolute constants $c_*,C_* > 0$ such that for every block matrix
\[
\mathsf{H} = \sum_{u,v=1}^q a_{uv} B_{uv}
\qquad\text{with}\qquad |a_{uv}| \le 2,
\]
and every $|s| \le c_*$,
\[
\left|
\log \E_{U_q} e^{s T_{\mathsf{H}}}
-
\frac{s^2}{2} \Var_{U_q}(T_{\mathsf{H}})
\right|
\le C_* q s^4.
\]
\end{lemma}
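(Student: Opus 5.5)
The plan starts from the exact identity of Lemma~\ref{lem:exact-cosh}, which gives
\[
\E_{U_q} e^{sT_{\mathsf{H}}}=\E_{U_q}\exp\Bigl(\sum_{u=1}^q \ell(sY_u)\Bigr).
\]
This replaces the exponential moment of the martingale sum by an exponential moment of a sum of \emph{nonnegative, bounded} functions of the increments. Since $|Y_u|\le 4$ by Lemma~\ref{lem:rowpair-martingale}, for $|s|\le c_*$ with $c_*$ small the Taylor expansion $\ell(x)=x^2/2+O(x^4)$ is uniform. We therefore get
\[
\sum_u \ell(sY_u)=\frac{s^2}{2}Q+E,
\qquad |E|\le C\sum_u s^4Y_u^4\le C' q s^4
\]
deterministically. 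This gives
\[
\log\E e^{sT_{\mathsf{H}}}=\log \E e^{\frac{s^2}{2}Q}+O(qs^4),
\]
and the problem reduces to expanding the log-MGF of $Q$ at the small parameter $\lambda=s^2/2$.

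Next, write $Q=\E Q+(Q-\E Q)$ and use Corollary~\ref{cor:Q-concentration}, which gives $\E Q=\Var_{U_q}(T_{\mathsf{H}})$. Then
\[
\log\E e^{\frac{s^2}{2}Q}=\frac{s^2}{2}\Var_{U_q}(T_{\mathsf{H}})+\log\E e^{\frac{s^2}{2}(Q-\E Q)} .
\]
The same corollary gives the upper bound
\[
\log\E e^{\lambda(Q-\E Q)}\le Cq\lambda^2=\tfrac{C}{4}qs^4 .
\]
For the lower bound, Jensen's inequality applied to the centered variable gives $\log\E e^{\lambda(Q-\E Q)}\ge \lambda\,\E[Q-\E Q]=0$. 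Combining the two sides,
\[
0\le \log\E e^{\frac{s^2}{2}(Q-\E Q)}\le Cqs^4 ,
\]
and together with the deterministic error $E$ this yields the two-sided bound with $C_*$ absolute.

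The main obstacle is hidden in the upper bound for the centered fluctuation of $Q$: a naive bound $Q\le 16q$ only gives error $O(qs^2)$, which is too weak. The fluctuations must be shown to be $O(\sqrt q)$ sub-Gaussian, and this is exactly the content of the Doob-martingale bounded-difference estimate in Lemma~\ref{lem:future-second-stable}, which relies on the occupation-probability smoothness of Lemma~\ref{lem:occupation-smooth}. Since those results are already established, the remaining care is bookkeeping. The Taylor remainder of $\log\cosh$ must be uniform on $|x|\le 4c_*$, which needs $c_*\le 1/8$ say, and the $O(qs^4)$ error terms must be combined additively inside the logarithm. This is done via $e^{-C'qs^4}\E e^{\frac{s^2}{2}Q}\le \E e^{\sum\ell(sY_u)}\le e^{C'qs^4}\E e^{\frac{s^2}{2}Q}$.
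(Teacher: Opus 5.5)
Your proposal is correct and follows the paper's proof essentially step for step. Both arguments rest on the same ingredients: the exact $\log\cosh$ identity, a Taylor sandwich of $\sum_u \ell(sY_u)$ around $\frac{s^2}{2}Q$, the sub-Gaussian bound of Corollary~\ref{cor:Q-concentration} for the upper side, and Jensen for the lower side. The only cosmetic difference is that you absorb the quartic remainder deterministically via $Y_u^4\le 256$, whereas the paper writes it as $Cs^4Q$ and then uses $\E Q\le 16q$.
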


\begin{proof}
By Lemma~\ref{lem:rowpair-martingale}, $|Y_u|\le 4$. Choose $c_*>0$ so small that $|sY_u|\le 1$ whenever $|s|\le c_*$. For such $s$, the Taylor expansion of $\ell(x)=\log\cosh x$ gives
\[
0 \le \frac{x^2}{2} - \ell(x) \le C x^4
\qquad (|x|\le 1).
\]
Hence, for each $u$,
\[
\frac{s^2}{2}Y_u^2 - C s^4 Y_u^4
\le
\ell(sY_u)
\le
\frac{s^2}{2}Y_u^2.
\]
Since $|Y_u|\le 4$, we have $Y_u^4 \le 16 Y_u^2$, so after summing over $u$ we obtain
\[
\left(\frac{s^2}{2} - C s^4\right)Q
\le
\sum_{u=1}^q \ell(sY_u)
\le
\frac{s^2}{2}Q .
\]
Lemma~\ref{lem:exact-cosh} therefore gives
\[
\E e^{(\frac{s^2}{2} - C s^4)Q}
\le
\E e^{sT_{\mathsf{H}}}
\le
\E e^{\frac{s^2}{2}Q}.
\]

For the upper bound, Corollary~\ref{cor:Q-concentration} yields
\[
\log \E e^{\frac{s^2}{2}Q}
=
\frac{s^2}{2}\E Q + \log \E e^{\frac{s^2}{2}(Q-\E Q)}
\le
\frac{s^2}{2}\E Q + C q s^4.
\]
For the lower bound, Jensen's inequality gives
\[
\log \E e^{(\frac{s^2}{2} - C s^4)Q}
\ge
\left(\frac{s^2}{2} - C s^4\right)\E Q.
\]
Finally, Corollary~\ref{cor:Q-concentration} also gives $\E Q = \Var(T_{\mathsf{H}})$, and the deterministic bound $Q\le 16q$ implies
\[
\E Q \le 16 q.
\]
Hence
\[
\frac{s^2}{2}\Var(T_{\mathsf{H}}) - C q s^4
\le
\log \E e^{sT_{\mathsf{H}}}
\le
\frac{s^2}{2}\Var(T_{\mathsf{H}}) + C q s^4.
\]
This completes the proof.
\end{proof}

\subsection{The lower bound on \texorpdfstring{$G_{t,q}$}{Gt,q}}

Fix integers $t,q\ge 1$. Let
\[
G_{t,q} = \bigsqcup_{r=1}^t H_q^{(r)}
\]
be the disjoint union of $t$ copies of $H_q$. For a sign collection
\[
\boldsymbol{\sigma} = \bigl(\sigma^{(1)},\dots,\sigma^{(t)}\bigr)
\in \bigl(\{\pm1\}^{q\times q}\bigr)^t,
\]
define the total score
\[
T_{\boldsymbol{\sigma}} := \sum_{r=1}^t T_{\sigma^{(r)}}^{(r)},
\]
where $T_{\sigma^{(r)}}^{(r)}$ is the checkerboard score on the $r$th copy. Let $U_{t,q}$ denote the uniform distribution on all matchings of $G_{t,q}$. Since the components are disjoint, $U_{t,q}$ is the product of $t$ independent copies of $U_q$.

\begin{lemma}\label{lem:total-score-moments}
There exist absolute constants $c,C>0$ such that for every $\boldsymbol{\sigma}$,
\[
 c t q \le \Var_{U_{t,q}}(T_{\boldsymbol{\sigma}}) \le C t q,
\qquad
\E_{U_{t,q}}[T_{\boldsymbol{\sigma}}^4] \le C (tq)^2 .
\]
Moreover, the distribution of $T_{\boldsymbol{\sigma}}$ under $U_{t,q}$ is symmetric about $0$.
\end{lemma}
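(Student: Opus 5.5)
Under $U_{t,q}$ the components are independent, so $T_{\boldsymbol{\sigma}}=\sum_{r=1}^t T^{(r)}_{\sigma^{(r)}}$ is a sum of $t$ independent random variables. Each summand $T^{(r)}_{\sigma^{(r)}}$ is distributed as $T_{\sigma^{(r)}}$ under $U_q$. All three claims then reduce to one-copy facts already established, plus elementary bookkeeping for independent sums.

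\emph{Variance.} By independence,
\[
\Var_{U_{t,q}}(T_{\boldsymbol{\sigma}})=\sum_{r=1}^t \Var_{U_q}(T_{\sigma^{(r)}}).
\]
Lemma~\ref{lem:beta-bounds} gives $q/2\le \Var_{U_q}(T_{\sigma^{(r)}})\le 4q$ for every sign matrix. Summing over $r$ yields the two-sided bound with $c=1/2$ and $C=4$.

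\emph{Symmetry.} By Lemma~\ref{lem:block-basic}, each $T^{(r)}_{\sigma^{(r)}}$ is symmetric about $0$ under $U_q$; indeed $\mathsf{H}_{\sigma^{(r)}}$ is a block matrix with $a_{uv}=\sigma^{(r)}_{uv}$. A sum of independent symmetric random variables is symmetric. Concretely, apply the column-pair swap automorphism simultaneously in all copies. This preserves $U_{t,q}$ and maps $T_{\boldsymbol{\sigma}}$ to $-T_{\boldsymbol{\sigma}}$.

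\emph{Fourth moment.} Write $Z_r:=T^{(r)}_{\sigma^{(r)}}$. These variables are independent and centered (they are symmetric). Expanding $\E[(\sum_r Z_r)^4]$, every term containing some $Z_r$ to the first power vanishes. This leaves
\[
\E[T_{\boldsymbol{\sigma}}^4]=\sum_r \E[Z_r^4]+3\sum_{r\ne r'}\E[Z_r^2]\,\E[Z_{r'}^2].
\]
Corollary~\ref{cor:subgaussian-one-copy}, applicable since $|a_{uv}|=1\le 2$, gives $\E[Z_r^4]\le Cq^2$, and Lemma~\ref{lem:beta-bounds} gives $\E[Z_r^2]\le 4q$. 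Hence
\[
\E[T_{\boldsymbol{\sigma}}^4]\le Ctq^2+48t^2q^2\le C'(tq)^2.
\]

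Alternatively, the MGF bound $\E e^{sZ_r}\le e^{Cqs^2}$ multiplies across the independent copies to give $\E e^{sT_{\boldsymbol{\sigma}}}\le e^{Ctqs^2}$. Integrating the resulting sub-Gaussian tail then yields the same fourth-moment bound.

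None of these steps is a real obstacle. The only point needing care is citing the correct one-copy inputs: the exact variance formula from Lemma~\ref{lem:beta-bounds}, and the fourth-moment bound from Corollary~\ref{cor:subgaussian-one-copy}.
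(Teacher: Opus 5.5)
Your proposal is correct and follows the paper's proof essentially step for step. It uses independence of the $t$ component scores, the per-component variance bounds from Lemma~\ref{lem:beta-bounds}, the fourth-moment expansion combined with Corollary~\ref{cor:subgaussian-one-copy} (your $3\sum_{r\ne r'}$ over ordered pairs is the paper's $6\sum_{r<s}$), and symmetry from Lemma~\ref{lem:block-basic} plus independence; your alternative MGF-product route to the fourth moment is also valid but not needed.
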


\begin{proof}
The component scores are independent and centered. By Lemma~\ref{lem:beta-bounds}, each component has variance between $q/2$ and $4q$, so
\[
\Var(T_{\boldsymbol{\sigma}}) = \sum_{r=1}^t \Var(T_{\sigma^{(r)}}^{(r)}) \asymp tq.
\]
For the fourth moment, independence and centering give
\[
\E[T_{\boldsymbol{\sigma}}^4]
=
\sum_{r=1}^t \E[(T_{\sigma^{(r)}}^{(r)})^4]
+
6\sum_{1\le r < s \le t}
\E[(T_{\sigma^{(r)}}^{(r)})^2]\E[(T_{\sigma^{(s)}}^{(s)})^2].
\]
Using Corollary~\ref{cor:subgaussian-one-copy} and Lemma~\ref{lem:beta-bounds}, the first sum is $O(tq^2)$ and the second is $O(t^2 q^2)$, hence the total is $O((tq)^2)$.

Symmetry follows from Lemma~\ref{lem:block-basic}, because each component score is symmetric and the components are independent.
\end{proof}

For a parameter $\gamma > 0$ and a sign collection $\boldsymbol{\sigma}$, define edge weights on $G_{t,q}$ by
\[
w_e^{(\boldsymbol{\sigma})} := \exp\bigl(\gamma \mathsf{H}_{\sigma^{(r)}}(e)\bigr)
\]
for $e$ in the $r$th copy. Let $D_{\boldsymbol{\sigma}}$ be the corresponding Gibbs distribution over
matchings. Its likelihood ratio with respect to $U_{t,q}$ is
\[
L_{\boldsymbol{\sigma}}(M)
:=
\frac{d D_{\boldsymbol{\sigma}}}{dU_{t,q}}(M)
=
\exp\bigl(\gamma T_{\boldsymbol{\sigma}}(M) - \Psi_{\boldsymbol{\sigma}}(\gamma)\bigr),
\]
where
\[
\Psi_{\boldsymbol{\sigma}}(\gamma) := \log \E_{U_{t,q}} e^{\gamma T_{\boldsymbol{\sigma}}}.
\]

We first show that each constructed distribution is genuinely $\Omega(\varepsilon)$-far from $U_{t,q}$.

\begin{proposition}\label{prop:alt-far}
There are absolute constants $a,c_{\mathrm{far}},\varepsilon_0 > 0$ such that if
\[
\gamma = a \frac{\varepsilon}{\sqrt{tq}},
\qquad 0 < \varepsilon \le \varepsilon_0,
\]
then, for every sign collection $\boldsymbol{\sigma}$,
\[
\TV(D_{\boldsymbol{\sigma}},U_{t,q}) \ge c_{\mathrm{far}} \varepsilon .
\]
\end{proposition}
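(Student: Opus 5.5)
We will lower bound the total variation distance through the $L^1$ norm of the likelihood ratio: $\TV(D_{\boldsymbol\sigma},U_{t,q})=\tfrac12\E_{U_{t,q}}|L_{\boldsymbol\sigma}-1|$. Write $T=T_{\boldsymbol\sigma}$, $V=\Var_{U_{t,q}}(T)\asymp tq$ by Lemma~\ref{lem:total-score-moments}, and $\Psi=\Psi_{\boldsymbol\sigma}(\gamma)$. The heuristic is $L_{\boldsymbol\sigma}-1\approx \gamma T$, so that $\E|L-1|\approx \gamma\E|T|\gtrsim\gamma\sqrt V\asymp a\varepsilon$. To make this rigorous we avoid controlling $L$ pointwise. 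Instead we test against the bounded-ish function $T$ and use the inequality
\[
\E_{U}\bigl[(L_{\boldsymbol\sigma}-1)T\bigr]\le \E_U|L_{\boldsymbol\sigma}-1|\cdot\|T\|_\infty,
\]
which is too weak, since $\|T\|_\infty$ can be of order $tq$. Instead we use the Cauchy--Schwarz-type variant
\[
\E_U[(L-1)T] = \E_U\bigl[|L-1|^{1/2}\,|L-1|^{1/2}\,T\,\mathrm{sgn}\bigr]\le \bigl(\E_U|L-1|\bigr)^{1/2}\bigl(\E_U|L-1|T^2\bigr)^{1/2},
\]
and then bound $\E_U[|L-1|T^2]\le (\E_U(L-1)^2)^{1/2}(\E_U T^4)^{1/2}$. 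This yields
\[
\E_U|L-1|\ \ge\ \frac{(\E_U[(L-1)T])^2}{\chi(\E_U T^4)^{1/2}},\qquad \chi:=(\E_U(L-1)^2)^{1/2}.
\]

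\textbf{Step 1 (numerator).} Since $T$ is centered, $\E_U[(L-1)T]=\E_U[LT]=\Psi'(\gamma)$, the derivative of the log-MGF. The MGF factorizes over components, $\Psi(s)=\sum_r\log\E_{U_q}e^{sT^{(r)}}$, and each factor is analytic near $0$. We will show $\Psi'(\gamma)\ge \gamma V/2$ for small $\gamma$. The cleanest route is convexity of $\Psi$ together with Lemma~\ref{lem:key-logmgf} applied per component at $\gamma$ and $2\gamma$. Convexity gives $\Psi'(\gamma)\ge(\Psi(\gamma)-\Psi(0))/\gamma$. By Lemma~\ref{lem:key-logmgf} this is at least $\gamma V/2-C_*tq\gamma^3$, which is at least $\gamma V/4$ once $\gamma^2\le c$. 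That condition holds because $\gamma\le a\varepsilon_0$ and $V\ge ctq$. Hence the numerator is $\gtrsim \gamma^2(tq)^2$.

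\textbf{Step 2 (denominator).} We have $\chi^2=\E_UL^2-1=\exp(\Psi(2\gamma)-2\Psi(\gamma))-1$. Applying Lemma~\ref{lem:key-logmgf} to each component (valid since $2\gamma\le c_*$), the exponent is $\gamma^2V+O(tq\gamma^4)$. Since $\gamma^2V\asymp a^2\varepsilon^2\le 1$, we get $\chi^2\le C\gamma^2 tq$. Lemma~\ref{lem:total-score-moments} gives $(\E T^4)^{1/2}\le Ctq$.

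\textbf{Step 3 (combine).} Putting the pieces together,
\[
\E|L-1|\gtrsim \frac{\gamma^2(tq)^2}{\gamma\sqrt{tq}\cdot tq}=\gamma\sqrt{tq}=a\varepsilon,
\]
so $\TV\ge c_{\mathrm{far}}\varepsilon$. The constant $a$ is fixed (say $a=1$), and $\varepsilon_0$ is chosen small enough that $a\varepsilon_0\le c_*/2$ and $a^2\varepsilon_0^2 V/(tq)\le 1$.

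\textbf{Main obstacle.} The delicate point is the derivative lower bound in Step 1, since Lemma~\ref{lem:key-logmgf} controls $\Psi$ itself rather than $\Psi'$. The convexity trick above handles this. The error term $C_*tq\gamma^4$ is dominated by $\gamma^2 tq$ because $\gamma$ is small in absolute terms, not merely relative to $1/\sqrt{tq}$. One should also check that $q=1$ is covered: Lemma~\ref{lem:key-logmgf} has no restriction on $q$, so it is.
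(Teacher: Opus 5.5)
Your argument is correct, but it takes a different route from the paper.

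\emph{The paper's route.} It works with a one-sided event. Paley--Zygmund applied to $T^2$ (using $\E T^2\asymp tq$ and $\E T^4\lesssim (tq)^2$), together with the sign symmetry of $T$ from Lemma~\ref{lem:block-basic}, gives $U_{t,q}(A)\ge p_0$ for $A=\{T\ge \eta\sqrt{tq}\}$. An upper bound $\Psi_{\boldsymbol\sigma}(\gamma)=O(\varepsilon^2)$ then gives $L_{\boldsymbol\sigma}\ge e^{a\eta\varepsilon-O(\varepsilon^2)}$ pointwise on $A$. Finally $\TV\ge D_{\boldsymbol\sigma}(A)-U_{t,q}(A)=\E_{U}[(L_{\boldsymbol\sigma}-1)\one_A]$.

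\emph{Your route.} You use $T$ as a dual witness, via $\E_U|L-1|\ge (\E_U[(L-1)T])^2/(\|L-1\|_2\,\|T\|_4^2)$. You identify the numerator with $\Psi'(\gamma)$, lower-bound it by convexity, and control the denominator through $\chi^2$ and the fourth moment.

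\emph{What each needs.} The paper only needs an \emph{upper} bound on $\Psi_{\boldsymbol\sigma}(\gamma)$. That already follows from the crude bound $\Psi_{\boldsymbol\sigma}(\gamma)\le 8tq\gamma^2$ of Corollary~\ref{cor:subgaussian-one-copy}. In exchange it needs anti-concentration plus the symmetry of $T$ to make the event one-sided.

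Your route needs only $\E_U T=0$ and never controls $L$ pointwise. However, it genuinely requires the second-order \emph{lower} bound $\Psi_{\boldsymbol\sigma}(\gamma)\ge \tfrac{\gamma^2}{2}V-C_*tq\gamma^4$, i.e.\ the Jensen half of Lemma~\ref{lem:key-logmgf}, to get $\Psi'(\gamma)\gtrsim\gamma V$. Your Step~2, by contrast, does not need the full expansion. Since $\Psi_{\boldsymbol\sigma}(\gamma)\ge 0$ by Jensen, the crude sub-Gaussian bound gives $\chi^2\le e^{\Psi_{\boldsymbol\sigma}(2\gamma)}-1\le e^{32a^2\varepsilon^2}-1$ directly.

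The payoff of your inequality is generality. It is a ``mean shift of a centered test statistic'' lower bound on total variation, so it would transfer to perturbation families where no sign symmetry is available. The paper's argument is more direct and uses less of the log-MGF machinery.
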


\begin{proof}
Fix $\boldsymbol{\sigma}$ and abbreviate $T := T_{\boldsymbol{\sigma}}$. By Lemma~\ref{lem:total-score-moments},
\[
\E[T^2] \asymp tq,
\qquad
\E[T^4] \le C(tq)^2 .
\]
Applying the Paley--Zygmund inequality to $T^2$ yields absolute constants $p_0,\eta>0$ such that
\[
U_{t,q}(T \ge \eta \sqrt{tq}) \ge p_0 .
\]
Indeed, $U_{t,q}(T^2 \ge \tfrac12 \E[T^2]) \ge c$, and symmetry of $T$ splits this two-sided event evenly between the positive and negative tails.

Choose $a$ and $\varepsilon_0$ so that $a\varepsilon_0\le c_*$, where $c_*$ is the constant in Lemma~\ref{lem:key-logmgf}; then $\gamma\le c_*$.
Since $T$ is a sum of $t$ independent component scores and each component satisfies Lemma~\ref{lem:key-logmgf}, we have
\[
\Psi_{\boldsymbol{\sigma}}(\gamma)
=
\frac{\gamma^2}{2}\Var(T) + O(tq\gamma^4)
=
O(\varepsilon^2) .
\]
On the event
\[
A := \{T \ge \eta \sqrt{tq}\}
\]
we therefore have
\[
L_{\boldsymbol{\sigma}}(M)
\ge
\exp\bigl(\gamma \eta \sqrt{tq} - \Psi_{\boldsymbol{\sigma}}(\gamma)\bigr)
=
\exp\bigl(a\eta \varepsilon - O(\varepsilon^2)\bigr).
\]
Therefore,
\begin{align*}
\TV(D_{\boldsymbol{\sigma}},U_{t,q})
&\ge D_{\boldsymbol{\sigma}}(A) - U_{t,q}(A) \\
&= \E_{U_{t,q}}[(L_{\boldsymbol{\sigma}}-1)\one_A] \\
&\ge U_{t,q}(A)\Bigl(e^{a\eta\varepsilon - O(\varepsilon^2)} - 1\Bigr).
\end{align*}
Choosing $a>0$ small and then $\varepsilon_0>0$ small enough, the last quantity is at least $c_{\mathrm{far}}\varepsilon$.
\end{proof}

Next we estimate the overlap of two constructed distributions under a random choice of signs.

\begin{lemma}\label{lem:pair-overlap}
Let $\boldsymbol{\sigma},\boldsymbol{\tau}$ be two sign collections, and define
\[
X(\boldsymbol{\sigma},\boldsymbol{\tau})
:=
\sum_{r=1}^t \sum_{u,v=1}^q \sigma^{(r)}_{uv} \tau^{(r)}_{uv}.
\]
If $|\gamma|\le c_*$, where $c_*$ is the constant in Lemma~\ref{lem:key-logmgf}, then
\[
\log \E_{U_{t,q}}[L_{\boldsymbol{\sigma}} L_{\boldsymbol{\tau}}]
=
4\beta_q \gamma^2 X(\boldsymbol{\sigma},\boldsymbol{\tau}) + O(tq\gamma^4),
\]
where $\beta_q$ is the coefficient defined before Lemma~\ref{lem:beta-bounds}.
\end{lemma}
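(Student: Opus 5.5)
The plan is to write the product of likelihood ratios explicitly and reduce everything to log-MGFs of checkerboard scores on single components, which Lemma~\ref{lem:key-logmgf} already controls. By definition,
\[
L_{\boldsymbol{\sigma}}L_{\boldsymbol{\tau}}
=\exp\bigl(\gamma(T_{\boldsymbol{\sigma}}+T_{\boldsymbol{\tau}})-\Psi_{\boldsymbol{\sigma}}(\gamma)-\Psi_{\boldsymbol{\tau}}(\gamma)\bigr).
\]
Taking expectations and logarithms gives
\[
\log\E_{U_{t,q}}[L_{\boldsymbol{\sigma}}L_{\boldsymbol{\tau}}]
=\Psi_{\boldsymbol{\sigma}+\boldsymbol{\tau}}(\gamma)-\Psi_{\boldsymbol{\sigma}}(\gamma)-\Psi_{\boldsymbol{\tau}}(\gamma),
\]
where $\Psi_{\boldsymbol{\sigma}+\boldsymbol{\tau}}(\gamma):=\log\E e^{\gamma(T_{\boldsymbol{\sigma}}+T_{\boldsymbol{\tau}})}$. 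The score $T_H$ is linear in $H$, so $T_{\boldsymbol{\sigma}}+T_{\boldsymbol{\tau}}$ is, on each copy $r$, the checkerboard score of the block matrix $\sum_{u,v}(\sigma^{(r)}_{uv}+\tau^{(r)}_{uv})B_{uv}$. Its coefficients lie in $\{-2,0,2\}$, so the hypothesis $|a_{uv}|\le2$ of Lemma~\ref{lem:key-logmgf} holds. The same is true for $\sigma^{(r)}$ and $\tau^{(r)}$ alone, whose coefficients are $\pm1$.

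Since $U_{t,q}$ is a product over the $t$ disjoint components, each of the three log-MGFs splits as a sum over $r$ of one-copy log-MGFs on $H_q$. Applying Lemma~\ref{lem:key-logmgf} to each term with $|\gamma|\le c_*$ gives, for each copy,
\[
\log\E_{U_q}e^{\gamma T_{\mathsf H}}=\frac{\gamma^2}{2}\Var_{U_q}(T_{\mathsf H})+O(q\gamma^4).
\]
This yields a total error of $O(tq\gamma^4)$ across all three terms, with absolute constants. The main term is therefore
\[
\frac{\gamma^2}{2}\sum_{r}\Bigl[\Var(T_{\sigma^{(r)}}+T_{\tau^{(r)}})-\Var(T_{\sigma^{(r)}})-\Var(T_{\tau^{(r)}})\Bigr]
=\gamma^2\sum_r\Cov_{U_q}\bigl(T_{\sigma^{(r)}},T_{\tau^{(r)}}\bigr).
\]

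The last step uses the isotropic covariance formula stated after Lemma~\ref{lem:isotropic-Kab}, namely $\Cov_{U_q}(T_\sigma,T_\tau)=4\beta_q\sum_{u,v}\sigma_{uv}\tau_{uv}$. That formula comes from conditioning on $K_q=|M|$: one applies Lemma~\ref{lem:isotropic-Kab} with $a=b=2q$, uses the zero row and column sums from Lemma~\ref{lem:block-basic}(i), and uses $\E[T\mid K_q]=0$, so that no between-$k$ variance term appears. Summing over $r$ gives exactly $4\beta_q\gamma^2X(\boldsymbol{\sigma},\boldsymbol{\tau})$.

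There is no serious obstacle here. The one point to check is that Lemma~\ref{lem:key-logmgf} really covers the summed matrix $\sigma^{(r)}+\tau^{(r)}$, which it does because its coefficients are bounded by $2$ and it is still of block-checkerboard form. The other point is that the error constant is uniform, so that summing $3t$ errors of size $O(q\gamma^4)$ gives $O(tq\gamma^4)$.
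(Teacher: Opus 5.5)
Your proposal is correct and follows essentially the same route as the paper. You write the log-overlap as $\Psi_{\boldsymbol{\sigma}+\boldsymbol{\tau}}(\gamma)-\Psi_{\boldsymbol{\sigma}}(\gamma)-\Psi_{\boldsymbol{\tau}}(\gamma)$, apply Lemma~\ref{lem:key-logmgf} componentwise (valid since the summed coefficients lie in $\{-2,0,2\}$), and identify the quadratic terms as $\gamma^2\Cov(T_{\boldsymbol{\sigma}},T_{\boldsymbol{\tau}})=4\beta_q\gamma^2X(\boldsymbol{\sigma},\boldsymbol{\tau})$; your remark about conditioning on $K_q$ correctly spells out a step the paper leaves implicit.
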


\begin{proof}
Let $\Psi_{\boldsymbol{\sigma},\boldsymbol{\tau}}(\gamma)$ denote the log-MGF of $T_{\boldsymbol{\sigma}} + T_{\boldsymbol{\tau}}$ under $U_{t,q}$. Then
\[
\log \E[L_{\boldsymbol{\sigma}}L_{\boldsymbol{\tau}}]
=
\Psi_{\boldsymbol{\sigma},\boldsymbol{\tau}}(\gamma)
-
\Psi_{\boldsymbol{\sigma}}(\gamma)
-
\Psi_{\boldsymbol{\tau}}(\gamma).
\]
Each component of $T_{\boldsymbol{\sigma}} + T_{\boldsymbol{\tau}}$ is a checkerboard score with coefficients in $\{-2,0,2\}$, so Lemma~\ref{lem:key-logmgf} applies both to the single-score and to the summed-score log-MGFs. Therefore
\[
\log \E[L_{\boldsymbol{\sigma}}L_{\boldsymbol{\tau}}]
=
\frac{\gamma^2}{2}
\Bigl(
\Var(T_{\boldsymbol{\sigma}}+T_{\boldsymbol{\tau}})
-
\Var(T_{\boldsymbol{\sigma}})
-
\Var(T_{\boldsymbol{\tau}})
\Bigr)
+ O(tq\gamma^4).
\]
Thus
\[
\log \E[L_{\boldsymbol{\sigma}}L_{\boldsymbol{\tau}}]
=
\gamma^2 \Cov(T_{\boldsymbol{\sigma}},T_{\boldsymbol{\tau}}) + O(tq\gamma^4).
\]
Since the components are independent and each component covariance is given by Lemma~\ref{lem:isotropic-Kab} together with Lemma~\ref{lem:block-basic},
\[
\Cov(T_{\boldsymbol{\sigma}},T_{\boldsymbol{\tau}})
=
4\beta_q X(\boldsymbol{\sigma},\boldsymbol{\tau}).
\]
This completes the proof.
\end{proof}

\begin{proposition}\label{prop:mixture-chi2}
Let
\[
\overline{D}_k := 2^{-tq^2} \sum_{\boldsymbol{\sigma}} D_{\boldsymbol{\sigma}}^{\otimes k}.
\]
There are absolute constants $a,c,\varepsilon_0 > 0$ such that
\[
\chi^2(\overline{D}_k,U_{t,q}^{\otimes k}) \le \frac19
\]
whenever
\[
\gamma = a \frac{\varepsilon}{\sqrt{tq}},
\qquad
0 < \varepsilon \le \varepsilon_0,
\]
and
\[
k \le c \frac{q\sqrt{t}}{\varepsilon^2}.
\]
\end{proposition}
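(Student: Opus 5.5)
We follow the standard Ingster-type second-moment argument already sketched in the technical overview. Since $\E_{U^{\otimes k}}[\overline L_k]=1$, we have $\chi^2(\overline D_k,U_{t,q}^{\otimes k})=\E_{U^{\otimes k}}[\overline L_k^2]-1$. Expanding the square of the mixture likelihood ratio and using that the $k$ samples are i.i.d.\ under $U_{t,q}$ gives
\[
\E_{U^{\otimes k}}[\overline L_k^2]
=\E_{\boldsymbol{\sigma},\boldsymbol{\tau}}\Bigl(\E_{U_{t,q}}[L_{\boldsymbol{\sigma}}L_{\boldsymbol{\tau}}]\Bigr)^k,
\]
where $\boldsymbol{\sigma},\boldsymbol{\tau}$ are independent and uniform. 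First choose $a$ and $\varepsilon_0$ with $a\varepsilon_0\le c_*$, so that $\gamma\le c_*$ because $tq\ge1$. Then Lemma~\ref{lem:pair-overlap} applies and gives
\[
\Bigl(\E_{U}[L_{\boldsymbol{\sigma}}L_{\boldsymbol{\tau}}]\Bigr)^k
\le \exp\bigl(4k\beta_q\gamma^2 X+C k\,tq\gamma^4\bigr),
\]
with $X=X(\boldsymbol{\sigma},\boldsymbol{\tau})$.

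Next we handle the random overlap. Under independent uniform signs, the products $\sigma^{(r)}_{uv}\tau^{(r)}_{uv}$ are i.i.d.\ Rademacher variables, so $X$ is a sum of $tq^2$ independent signs. The bound $\E e^{\lambda X}=\cosh(\lambda)^{tq^2}\le e^{\lambda^2tq^2/2}$ then yields
\[
\E_{U^{\otimes k}}[\overline L_k^2]\le \exp\Bigl(8k^2\beta_q^2\gamma^4\,tq^2+Ck\,tq\gamma^4\Bigr).
\]

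It remains to substitute the parameters. By Lemma~\ref{lem:beta-bounds}, $\beta_q\le 1/q$, and we have $\gamma^4=a^4\varepsilon^4/(tq)^2$. The first exponent term is then at most $8k^2a^4\varepsilon^4/(tq^2)$. With $k\le c\,q\sqrt t/\varepsilon^2$, this is at most $8c^2a^4$. The second term is $Ck a^4\varepsilon^4/(tq)$, which is at most $Cca^4\varepsilon^2/(\sqrt t)\le Cca^4\varepsilon_0^2$. Hence choosing $c$ small, with $a$ and $\varepsilon_0$ fixed (or also small), makes the exponent at most $\log(10/9)$. That gives $\chi^2\le 1/9$.

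The only delicate point is using the expansion of Lemma~\ref{lem:pair-overlap} as an upper bound with an explicit constant in the $O(tq\gamma^4)$ term, uniformly over all pairs $(\boldsymbol{\sigma},\boldsymbol{\tau})$. This holds because Lemma~\ref{lem:key-logmgf} has absolute constants and applies to coefficient matrices with entries in $\{-2,0,2\}$. Everything else is routine. The constants $a,\varepsilon_0$ should be taken compatible with Proposition~\ref{prop:alt-far}, for instance by taking the smaller values.
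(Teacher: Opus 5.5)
Your proposal is correct and follows the paper's proof essentially step for step. The steps match: the mixture second-moment identity, then Lemma~\ref{lem:pair-overlap} (valid once $a\varepsilon_0\le c_*$), then the bound $\cosh(\lambda)^{tq^2}\le e^{\lambda^2tq^2/2}$ for the Rademacher sum $X$, and finally substitution using $\beta_q\le 1/q$ and $k\le c\,q\sqrt t/\varepsilon^2$. Your observation that shrinking $c$ alone suffices, with $a,\varepsilon_0$ held fixed, is a small but useful point: it lets you reuse the same $a,\varepsilon_0$ as in Proposition~\ref{prop:alt-far}.
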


\begin{proof}
By the standard mixture formula,
\[
\chi^2(\overline{D}_k,U_{t,q}^{\otimes k}) + 1
=
\E_{\boldsymbol{\sigma},\boldsymbol{\tau}}
\Bigl( \E[L_{\boldsymbol{\sigma}}L_{\boldsymbol{\tau}}] \Bigr)^k .
\]
Choose $a$ and $\varepsilon_0$ so that $a\varepsilon_0\le c_*$, where $c_*$ is the constant in Lemma~\ref{lem:key-logmgf}; then $\gamma\le c_*$.
Applying Lemma~\ref{lem:pair-overlap} gives
\[
\chi^2 + 1
\le
\exp(Cktq\gamma^4)
\cdot
\E_{\boldsymbol{\sigma},\boldsymbol{\tau}}
\exp\bigl( 4k\beta_q\gamma^2 X(\boldsymbol{\sigma},\boldsymbol{\tau}) \bigr).
\]
For independent uniform sign collections, the random variables
\[
\sigma^{(r)}_{uv}\tau^{(r)}_{uv}
\qquad (1\le r\le t,\ 1\le u,v\le q)
\]
are i.i.d.\ uniform $\{\pm1\}$ random variables. Hence
\[
X(\boldsymbol{\sigma},\boldsymbol{\tau})
\stackrel{d}{=}
\sum_{i=1}^{tq^2} \xi_i,
\qquad
\xi_i \in \{\pm1\} \text{ i.i.d.}
\]
and therefore
\[
\E e^{\lambda X} = \cosh(\lambda)^{tq^2} \le \exp\Bigl(\frac{tq^2\lambda^2}{2}\Bigr)
\qquad\text{for every }\lambda\in\R .
\]
Substituting $\lambda = 4k\beta_q\gamma^2$ yields
\[
\chi^2 + 1
\le
\exp\bigl(Cktq\gamma^4 + Ck^2 \beta_q^2 \gamma^4 tq^2\bigr).
\]
By Lemma~\ref{lem:beta-bounds}, $\beta_q \le 1/q$. Since $\gamma^2 = a^2 \varepsilon^2 /(tq)$,
\[
ktq\gamma^4 = k \frac{a^4 \varepsilon^4}{tq},
\qquad
k^2 \beta_q^2 \gamma^4 tq^2 \le k^2 \frac{a^4 \varepsilon^4}{tq^2}.
\]
If $k \le c q\sqrt{t}/\varepsilon^2$, then
\[
ktq\gamma^4 \le c a^4 \varepsilon^2,
\qquad
k^2 \beta_q^2 \gamma^4 tq^2 \le c^2 a^4 .
\]
Choosing $a>0$ and $c>0$ sufficiently small and then $\varepsilon_0>0$ sufficiently small makes the exponent at most $\log(10/9)$, which implies $\chi^2 \le 1/9$.
\end{proof}

\begin{proof}[Proof of Proposition~\ref{prop:main-Gtq}]
Fix $t,q$ and a target distance parameter $\varepsilon \le c_{\mathrm{far}}\varepsilon_0$, where $c_{\mathrm{far}}$ is the constant from Proposition~\ref{prop:alt-far}. Apply Propositions~\ref{prop:alt-far} and \ref{prop:mixture-chi2} with the parameter
\[
\varepsilon' := \frac{\varepsilon}{c_{\mathrm{far}}}.
\]
Thus
\[
\gamma = a \frac{\varepsilon'}{\sqrt{tq}}
\]
and every distribution in the finite family
\[
\mathcal{A} := \{ D_{\boldsymbol{\sigma}} : \boldsymbol{\sigma} \in (\{\pm1\}^{q\times q})^t \}
\]
satisfies
\[
\TV(D_{\boldsymbol{\sigma}},U_{t,q}) \ge \varepsilon .
\]
By Proposition~\ref{prop:mixture-chi2}, if
\[
k \le c \frac{q\sqrt{t}}{(\varepsilon')^2}
= c\,c_{\mathrm{far}}^2 \frac{q\sqrt{t}}{\varepsilon^2},
\]
then the mixture over signs $\overline{D}_k$ of the $k$-sample distributions satisfies
\[
\TV(\overline{D}_k,U_{t,q}^{\otimes k})
\le
\sqrt{\frac12 \chi^2(\overline{D}_k,U_{t,q}^{\otimes k})}
\le
\frac14 .
\]
Therefore no tester can distinguish $U_{t,q}^{\otimes k}$ from this mixture over no-instances with average success probability greater than
\[
\frac12+\frac12\TV(\overline{D}_k,U_{t,q}^{\otimes k})
\le
\frac12+\frac{1}{12}
=
\frac{7}{12}.
\]
Equivalently, average success probability $2/3$ would require
$\TV(\overline{D}_k,U_{t,q}^{\otimes k})\ge 1/3$, whereas the bound above gives TV distance at
most $1/4$.  In particular, a tester that succeeded with probability at least $2/3$ for every
no-instance in the support of the mixture would have average success probability at least $2/3$
against the mixture, a contradiction. Since the mixture is supported on no-instances, the same
lower bound applies to the original composite testing problem. Renaming the constant
$c\,c_{\mathrm{far}}^2$ as $c_0$ completes the proof.
\end{proof}

\begin{proof}[Proof of Theorem~\ref{thm:main-lower}]
Fix integers $n,m$ with $n\le m\le n^2$.  Set
\[
q:=\max\left\{1,\left\lfloor \frac{m}{n}\right\rfloor\right\},
\qquad
t:=\max\left\{1,\left\lfloor \frac{n}{8q}\right\rfloor\right\},
\qquad
G_{n,m}:=G_{t,q}.
\]
We verify the size parameters.  If $q<n/16$, then $t=\Theta(n/q)$, and hence
\[
|V(G_{n,m})|=4tq=\Theta(n),
\qquad
|E(G_{n,m})|=4tq^2=\Theta(nq).
\]
Moreover, if $m<2n$ then $q=1$ and $nq=\Theta(m)$, while if $m\ge 2n$ then
$q=\Theta(m/n)$ and again $nq=\Theta(m)$.  Thus in this case
\[
|V(G_{n,m})|=\Theta(n),
\qquad
|E(G_{n,m})|=\Theta(m),
\qquad
q\sqrt t=\Theta(\sqrt{nq})=\Theta(\sqrt m).
\]

If $q\ge n/16$, then $q=\Theta(n)$ and therefore $m=\Theta(n^2)$.  Also $t=\Theta(1)$, so
\[
|V(G_{n,m})|=4tq=\Theta(n),
\qquad
|E(G_{n,m})|=4tq^2=\Theta(n^2)=\Theta(m),
\qquad
q\sqrt t=\Theta(n)=\Theta(\sqrt m).
\]

Applying Proposition~\ref{prop:main-Gtq} to $G_{n,m}=G_{t,q}$ and absorbing the absolute constants
in the comparison $q\sqrt t=\Theta(\sqrt m)$ gives the claimed lower bound.
\end{proof}

\paragraph{Acknowledgements.}
We thank Cl\'ement Canonne for helpful comments on an earlier version of this manuscript.

\bibliographystyle{plain}
\bibliography{main}

\begin{thebibliography}{10}

\bibitem{AcharyaDK15}
Jayadev Acharya, Constantinos Daskalakis, and Gautam Kamath.
\newblock Optimal testing for properties of distributions.
\newblock In {\em Advances in Neural Information Processing Systems ({NIPS})},
  pages 3591--3599, 2015.

\bibitem{AM18}
Hassan Ashtiani and Abbas Mehrabian.
\newblock Some techniques in density estimation, 2018.
\newblock arXiv:1801.04003.

\bibitem{BatuFRSW00}
Tugkan Batu, Lance Fortnow, Ronitt Rubinfeld, Warren~D. Smith, and Patrick
  White.
\newblock Testing that distributions are close.
\newblock In {\em Proceedings of the Annual IEEE Symposium on Foundations of
  Computer Science ({FOCS})}, pages 259--269, 2000.

\bibitem{BGKNT07}
Mohsen Bayati, David Gamarnik, Dimitriy~A. Katz, Chandra Nair, and Prasad
  Tetali.
\newblock Simple deterministic approximation algorithms for counting matchings.
\newblock In {\em Proceedings of the 39th Annual ACM Symposium on Theory of
  Computing ({STOC})}, pages 122--127, 2007.

\bibitem{BBCSV20}
Ivona Bez{\'a}kov{\'a}, Antonio Blanca, Zongchen Chen, Daniel {\v
  S}tefankovi{\v c}, and Eric Vigoda.
\newblock Lower bounds for testing graphical models: Colorings and
  antiferromagnetic {Ising} models.
\newblock {\em Journal of Machine Learning Research}, 21(25):1--62, 2020.

\bibitem{BhattacharyyaCY22}
Arnab Bhattacharyya, Cl{\'{e}}ment~L. Canonne, and Joy~Qiping Yang.
\newblock Independence testing for bounded degree bayesian networks.
\newblock In {\em Advances in Neural Information Processing Systems
  ({NeurIPS})}, 2022.

\bibitem{BhattacharyyaGGSV25}
Arnab Bhattacharyya, Sutanu Gayen, Philips George~John, Sayantan Sen, and N.~V.
  Vinodchandran.
\newblock Distribution learning meets graph structure sampling.
\newblock In {\em Advances in Neural Information Processing Systems
  ({NeurIPS})}, 2025.

\bibitem{BhattacharyyaGPTV23}
Arnab Bhattacharyya, Sutanu Gayen, Eric Price, Vincent Y.~F. Tan, and N.~V.
  Vinodchandran.
\newblock Near-optimal learning of tree-structured distributions by chow and
  liu.
\newblock {\em SIAM Journal on Computing}, 52(3):761--793, 2023.

\bibitem{BCSV21}
Antonio Blanca, Zongchen Chen, Daniel {\v S}tefankovi{\v c}, and Eric Vigoda.
\newblock Hardness of identity testing for restricted {Boltzmann} machines and
  {Potts} models.
\newblock {\em Journal of Machine Learning Research}, 22(152):1--56, 2021.

\bibitem{Canonne20}
Cl{\'{e}}ment~L. Canonne.
\newblock {\em A Survey on Distribution Testing: Your Data is Big. But is it
  Blue?}
\newblock Number~9 in Graduate Surveys. Theory of Computing Library, 2020.

\bibitem{Canonne22}
Cl{\'{e}}ment~L. Canonne.
\newblock Topics and techniques in distribution testing: {A} biased but
  representative sample.
\newblock {\em Foundations and Trends in Communications and Information
  Theory}, 19(6):1032--1198, 2022.

\bibitem{CDKS20}
Cl{\'e}ment~L. Canonne, Ilias Diakonikolas, Daniel~M. Kane, and Alistair
  Stewart.
\newblock Testing {Bayesian} networks.
\newblock {\em IEEE Transactions on Information Theory}, 66(5):3132--3170,
  2020.

\bibitem{canonne2026entropy}
Cl{\'e}ment~L Canonne, Yash Pote, Jonathan Scarlett, and Joy~Qiping Yang.
\newblock Entropy equivalence testing.
\newblock {\em arXiv preprint arXiv:2605.23225}, 2026.

\bibitem{CY24}
Cl{\'e}ment~L Canonne and Joy~Q Yang.
\newblock Entropy testing and its application to testing {B}ayesian networks.
\newblock {\em Advances in Neural Information Processing Systems}, 2024.

\bibitem{CG23}
Zongchen Chen and Yuzhou Gu.
\newblock Fast sampling of {$b$}-matchings and {$b$}-edge covers.
\newblock In {\em Proceedings of the 35th Annual ACM-SIAM Symposium on Discrete
  Algorithms ({SODA})}, pages 4972--4987, 2024.

\bibitem{CLV23}
Zongchen Chen, Kuikui Liu, and Eric Vigoda.
\newblock Optimal mixing of {Glauber} dynamics: Entropy factorization via
  high-dimensional expansion.
\newblock {\em SIAM Journal on Computing}, pages STOC21--104--STOC21--153,
  October 2023.
\newblock Open access at \url{https://epubs.siam.org/doi/10.1137/21M1443340}.

\bibitem{ChengDKS18}
Yu~Cheng, Ilias Diakonikolas, Daniel~M. Kane, and Alistair Stewart.
\newblock Robust learning of fixed-structure {Bayesian} networks.
\newblock In {\em Advances in Neural Information Processing Systems 31}, pages
  10304--10316, 2018.

\bibitem{DDK18}
Constantinos Daskalakis, Nishanth Dikkala, and Gautam Kamath.
\newblock Testing {Ising} models.
\newblock In {\em Proceedings of the 29th Annual ACM-SIAM Symposium on Discrete
  Algorithms ({SODA})}, pages 1989--2007, 2018.

\bibitem{DaskalakisP17}
Constantinos Daskalakis and Qinxuan Pan.
\newblock Square hellinger subadditivity for bayesian networks and its
  applications to identity testing.
\newblock In {\em {COLT}}, volume~65 of {\em Proceedings of Machine Learning
  Research}, pages 697--703. {PMLR}, 2017.

\bibitem{DL01}
Luc Devroye and G{\'a}bor Lugosi.
\newblock {\em Combinatorial Methods in Density Estimation}.
\newblock Springer, 2001.

\bibitem{DMR20}
Luc Devroye, Abbas Mehrabian, and Tommy Reddad.
\newblock The minimax learning rates of normal and {Ising} undirected graphical
  models.
\newblock {\em Electronic Journal of Statistics}, 14(1):2338--2361, 2020.

\bibitem{HL72}
Ole~J. Heilmann and Elliott~H. Lieb.
\newblock Theory of monomer-dimer systems.
\newblock {\em Communications in Mathematical Physics}, 25:190--232, 1972.

\bibitem{JS89}
Mark Jerrum and Alistair Sinclair.
\newblock Approximating the permanent.
\newblock {\em SIAM Journal on Computing}, 18(6):1149--1178, 1989.

\bibitem{jerrum1996markov}
Mark Jerrum and Alistair Sinclair.
\newblock The markov chain monte carlo method: an approach to approximate
  counting and integration.
\newblock {\em Approximation algorithms for NP-hard problems}, pages 482--520,
  1996.

\bibitem{LelargeZhou14}
Marc Lelarge and Hang Zhou.
\newblock Sublinear-time algorithms for monomer-dimer systems on bounded degree
  graphs.
\newblock {\em Theoretical Computer Science}, 548:68--78, 2014.

\bibitem{Paninski08}
Liam Paninski.
\newblock A coincidence-based test for uniformity given very sparsely sampled
  discrete data.
\newblock {\em IEEE Transactions on Information Theory}, 54(10):4750--4755,
  2008.

\bibitem{Rubinfeld12}
Ronitt Rubinfeld.
\newblock Taming big probability distributions.
\newblock {\em XRDS: Crossroads, The ACM Magazine for Students}, 19(1):24--28,
  2012.

\bibitem{Yat85}
Yannis~G. Yatracos.
\newblock Rates of convergence of minimum distance estimators and
  {Kolmogorov}'s entropy.
\newblock {\em The Annals of Statistics}, 13(2):768--774, 1985.

\end{thebibliography}

\appendix
\section{Learning the Matching Gibbs Family}
\label{sec:learning}

We next record the information-theoretic sample complexity of learning the
entire Gibbs distribution over matchings.  This should be contrasted with the
uniformity-testing guarantee above.  The learning bound below is a direct
application of the classical \emph{Yatracos minimum-distance method}; see
Yatracos~\cite{Yat85}, Devroye and Lugosi~\cite{DL01}, and the exposition
of Ashtiani and Mehrabian~\cite[Theorem~10 and Corollary~11]{AM18}.
Closely related VC/Yatracos arguments are also used for learning Ising and
normal graphical models in~\cite{DMR20}.

For a fixed graph $G=(V,E)$, write
\[
\mathcal F_G := \{D_w:\ w\in \R_{>0}^E\}.
\]
A learner receives i.i.d. samples from an unknown member of $\mathcal F_G$
and must output a distribution close to it in total variation distance.
The result below is proper, in the sense that the output distribution is
again a member of $\mathcal F_G$.  It is, however, only an information-theoretic
statement; no computational efficiency of the minimum-distance estimator is
claimed.

\begin{proposition}[Information-theoretic proper learning upper bound]
\label{prop:learning-upper-bound}
There is a universal constant $C<\infty$ such that the following holds.
For every graph $G=(V,E)$ with $m=|E|$, every
$\varepsilon,\delta\in(0,1)$, and every unknown $w\in\R_{>0}^E$, there exists
a proper learner which, given
\[
n \ge C\,\frac{m+\log(1/\delta)}{\varepsilon^2}
\]
i.i.d. samples from $D_w$, outputs a distribution $D_{\widehat w}\in\mathcal F_G$
satisfying
\[
\TV(D_{\widehat w},D_w)\le \varepsilon
\]
with probability at least $1-\delta$.
In particular, the learning upper bound is independent of the maximum degree
of $G$ and of the magnitude of the weights.
\end{proposition}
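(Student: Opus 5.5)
We will use the Yatracos minimum-distance estimator over a finite cover of $\mathcal F_G$. Concretely: if $\mathcal C\subseteq\mathcal F_G$ is a finite $\varepsilon/6$-cover of $\mathcal F_G$ in total variation, the Yatracos/Scheffé tournament of \cite[Theorem~10]{AM18} outputs some $\widehat D\in\mathcal C$ with $\TV(\widehat D,D_w)\le 3\cdot\varepsilon/6+\varepsilon/2$ from $O((\log|\mathcal C|+\log(1/\delta))/\varepsilon^2)$ samples. Since $\mathcal C\subseteq\mathcal F_G$, the output is proper. The argument therefore reduces to showing that $\log|\mathcal C|=O(m\log(m/\varepsilon))$ is achievable. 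That would only give $m\log(m/\varepsilon)/\varepsilon^2$, so we will use instead the VC version, which avoids covering altogether.

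The Yatracos sets of $\mathcal F_G$ are $A_{w,w'}=\{M: D_w(M)>D_{w'}(M)\}$. Write $\theta_e=\log w_e$. Then
\[
\log D_w(M)-\log D_{w'}(M)=\sum_{e\in E}(\theta_e-\theta'_e)\one\{e\in M\}-\bigl(\log Z(w)-\log Z(w')\bigr).
\]
So $A_{w,w'}$ is the intersection of $\Mcal(G)\subseteq\{0,1\}^E$ with an open affine halfspace in $\R^E$, applied to the incidence vector of $M$. Halfspaces in $\R^m$ have VC dimension $m+1$, and restricting to a subset of points cannot increase VC dimension. Hence the Yatracos class has VC dimension at most $m+1$, uniformly in $G$, $\Delta$, and the weights.

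Next we invoke the VC-based minimum-distance theorem \cite[Corollary~11]{AM18}, equivalently \cite{DL01}. If the Yatracos class has VC dimension $d$, the minimum-distance estimate over the whole class achieves $\TV\le 3\inf_{D\in\mathcal F}\TV(D,D_w)+\varepsilon=\varepsilon$. It needs $O((d+\log(1/\delta))/\varepsilon^2)$ samples, using uniform convergence of empirical measures over a VC class at rate $\sqrt{d/n}$. Since $D_w\in\mathcal F_G$, the approximation term vanishes, and $d\le m+1$ gives the claimed bound.

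The main obstacle is a technical point. The estimator must select from an infinite class, which makes the minimum possibly not attained. We handle this by taking an approximate minimizer, within additive $\varepsilon/10$ of the infimum of the Yatracos criterion. This changes only the constants and keeps the output in $\mathcal F_G$. Everything else is direct substitution. Because $d$ depends only on $m$, the bound is independent of $\Delta$ and of the weight magnitudes, as claimed.
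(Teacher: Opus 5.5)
Your proposal is correct and follows essentially the same route as the paper: you bound the VC dimension of the Yatracos class by $m+1$ via affine halfspaces on edge-incidence vectors, apply VC uniform convergence, take an approximate minimum-distance estimator over $\mathcal F_G$, and use realizability ($D_w\in\mathcal F_G$) to kill the approximation term. The opening finite-cover paragraph is an unnecessary detour, which you already abandon in favor of the VC argument.
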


\begin{proof}
Let $\Omega:=\Mcal(G)$ be the domain.  For a class
$\mathcal A\subseteq 2^\Omega$ and two probability distributions $P,Q$ on
$\Omega$, write
\[
\|P-Q\|_{\mathcal A}:=\sup_{A\in\mathcal A}|P(A)-Q(A)|.
\]
The \emph{Yatracos class} of $\mathcal F_G$ is
\[
\mathcal Y_G
:=
\bigl\{
\{M\in\Mcal(G): D_w(M)\ge D_v(M)\}:
w,v\in\R_{>0}^E
\bigr\}.
\]

We first bound the VC dimension of $\mathcal Y_G$.  Identify a matching
$M$ with its edge-incidence vector
\[
x(M):=(\one\{e\in M\})_{e\in E}\in\{0,1\}^m.
\]
For any two weight vectors $w,v\in\R_{>0}^E$,
\[
D_w(M)\ge D_v(M)
\]
is equivalent to
\[
\sum_{e\in E}(\log w_e-\log v_e)\one\{e\in M\}
\ge
\log Z(w)-\log Z(v).
\]
Thus every set in $\mathcal Y_G$ is the restriction to
$\{x(M):M\in\Mcal(G)\}\subseteq\{0,1\}^m$ of an affine halfspace in $\R^m$.
Since affine halfspaces in $\R^m$ have VC dimension $m+1$, restriction to a
subset cannot increase VC dimension, and therefore
\[
\mathrm{VC}(\mathcal Y_G)\le m+1.
\]

Let $P:=D_w$ be the unknown target distribution and let $P_n$ be the empirical
distribution of the $n$ samples.  By VC uniform convergence, with probability
at least $1-\delta$,
\[
\|P_n-P\|_{\mathcal Y_G}
\le
C_0\sqrt{\frac{m+\log(1/\delta)}{n}}
\]
for a universal constant $C_0$.

Let $\widehat P\in\mathcal F_G$ be an $\eta$-approximate empirical Yatracos
minimizer:
\[
\|\widehat P-P_n\|_{\mathcal Y_G}
\le
\inf_{Q\in\mathcal F_G}\|Q-P_n\|_{\mathcal Y_G}+\eta .
\]
Such an approximate minimizer exists by the definition of the infimum.  Since
the target $P$ itself belongs to $\mathcal F_G$,
\[
\|\widehat P-P_n\|_{\mathcal Y_G}
\le
\|P-P_n\|_{\mathcal Y_G}+\eta .
\]
Moreover, because both $P$ and $\widehat P$ belong to $\mathcal F_G$, the
Yatracos class contains the set on which $P-\widehat P$ is nonnegative, and
therefore
\[
\TV(P,\widehat P)=\|P-\widehat P\|_{\mathcal Y_G}.
\]
Hence, on the uniform-convergence event,
\[
\begin{aligned}
\TV(P,\widehat P)
&=
\|P-\widehat P\|_{\mathcal Y_G} \\
&\le
\|P-P_n\|_{\mathcal Y_G}
+
\|P_n-\widehat P\|_{\mathcal Y_G} \\
&\le
2\|P-P_n\|_{\mathcal Y_G}+\eta \\
&\le
2C_0\sqrt{\frac{m+\log(1/\delta)}{n}}+\eta .
\end{aligned}
\]
Taking $\eta\le \varepsilon/3$ and
$n\ge C(m+\log(1/\delta))/\varepsilon^2$ for a sufficiently large universal
constant $C$ gives the claimed bound.  Writing
$\widehat P=D_{\widehat w}$ completes the proof.
\end{proof}

\begin{remark}[Tightness of the learning sample complexity]
\label{rem:learning-lower-bound}
The linear dependence on $m$ in Proposition~\ref{prop:learning-upper-bound}
is unavoidable, even for graphs of maximum degree $1$.  Indeed, let
\[
G_m:=\bigsqcup_{i=1}^m K_2
\]
be the disjoint union of $m$ isolated edges.  Then every matching is identified
with a vector in $\{0,1\}^m$, and the Gibbs distribution over matchings factorizes as
\[
D_w
=
\bigotimes_{i=1}^m
\Ber\!\left(\frac{w_i}{1+w_i}\right).
\]
Thus the Gibbs family over matchings realizes all $m$-dimensional Bernoulli product
distributions with success probabilities in $(0,1)$.  The standard minimax
lower bound for learning Bernoulli product distributions in total variation
distance already holds in, say, a fixed neighborhood of the all-$1/2$ parameter
vector, and is
\[
\Omega\!\left(\frac{m}{\varepsilon^2}\right)
\]
samples for constant success probability; equivalently, this is the in-degree-zero case of the
lower bound for learning fixed-structure Bernoulli Bayesian networks stated in~\cite{ChengDKS18}.
Consequently, the minimax learning
sample complexity of the Gibbs family over matchings is
\[
\Theta\!\left(\frac{m}{\varepsilon^2}\right)
\]
for constant failure probability, while the uniformity-testing sample complexity
proved above is of order $\sqrt m/\varepsilon^2$ up to the stated
degree-dependent and logarithmic factors.
\end{remark}

\end{document}